\newcommand{\tran}{\mathrm{T}}
\newcommand{\bsm}{\begin{smallmatrix}}
\newcommand{\esm}{\end{smallmatrix}}
\newcommand{\bbm}{\begin{bmatrix}}
\newcommand{\ebm}{\end{bmatrix}}
\newcommand{\bpm}{\begin{pmatrix}}
\newcommand{\epm}{\end{pmatrix}}
\newcommand{\bom}{\begin{matrix}}
\newcommand{\eom}{\end{matrix}}
\newcommand{\bs}{\begin{small}}
\newcommand{\es}{\end{small}}
\newcommand{\rank}{\mathrm{rank}}
\newcommand{\diag}{\mathrm{diag}}
\newcommand{\ssp}[1]{\mathscr{#1}}      
\newcommand{\fld}[1]{\mathbb{#1}}       
\newcommand{\op}[1]{\mathcal{#1}}       
\newcommand{\sumbanach}[1]{\sum{#1}}
\newcommand{\spanset}[1]{\ \mathrm{span}\{#1\}\ } 
\newcommand{\pardiff}[2]{\frac{\partial {#1}}{\partial {#2}}}
\newcommand{\infd}{Inf-D }
\def\QEDclosed{\hfill\IEEEQEDclosed}
\renewcommand{\qed}{\QEDclosed}
\renewenvironment{proof}[1][\proofname]{\noindent\nobreakspace{\bfseries #1}:\;}{\qed\par}
\newcommand{\myhl}[1]{{#1}}
\newtheorem{theorem}{Theorem}
\newtheorem{lemma}{Lemma}
\newtheorem{corollary}{Corollary}
\newtheorem{remark}{Remark}
\newtheorem{definition}{Definition}
\begin{document}

\title{A Geometric Approach to Fault Detection and Isolation of Two-Dimensional (2D) Systems}

\author{Amir~Baniamerian$^{\mathrm{1}}$, Nader~Meskin$^{\mathrm{2}}$ and Khashayar Khorasani$^{\mathrm{1}}$
\thanks{*This publication was made possible by NPRP grant No. 4-195-2-065 from the
Qatar National Research Fund (a member of Qatar Foundation). The
statements made herein are solely the responsibility of the authors}
\thanks{$^{1}$A. Baniamerian and K. Khorasani are with the Department of Electrical and Computer Engineering,
        Concordia University, Quebec, Canada
        {\tt\small am\_bani@encs.concordia.ca} and {\tt\small kash@ece.concordia.ca}}%
\thanks{$^{2}$N. Meskin is with the Department of Electrical Engineering, Qatar University,
        Doha, Qatar
        {\tt\small nader.meskin@qu.edu.qa}}%
}

\maketitle

\begin{abstract}
In this work, we develop a novel fault detection and isolation (FDI) scheme for discrete-time two-dimensional (2D) systems that are represented by the Fornasini-Marchesini model II (FMII). This is accomplished by generalizing the  basic invariant subspaces including unobservable, conditioned invariant and unobservability subspaces of 1D systems to 2D models. These extensions have been achieved and facilitated by representing a 2D model as an infinite dimensional (Inf-D) system  on a Banach vector space, and by particularly constructing algorithms that compute these subspaces in a \emph{finite and known} number of steps. By utilizing the introduced subspaces the FDI problem is formulated and necessary and sufficient conditions for its solvability are provided. Sufficient conditions for solvability of the FDI problem for 2D systems using both deadbeat and LMI filters are also developed. Moreover, the capabilities and advantages of our proposed approach are demonstrated  by performing  an analytical comparison with the currently available 2D geometric methods in the literature. Finally, numerical simulations corresponding to an approximation of a hyperbolic partial differential equation (PDE)  system of a heat exchanger, that is mathematically represented as a 2D model, have also been provided. 
\end{abstract}

\begin{IEEEkeywords}
2D systems, Fornasini-Marchesini model, infinite dimensional systems, fault detection and isolation, geometric approach, invariant subspaces, LMI-based observer design, deadbeat observer.
\end{IEEEkeywords}

\IEEEpeerreviewmaketitle

\section{Introduction}
\IEEEPARstart{O}{ver} the past few decades, the problem of fault detection and isolation (FDI) of dynamical systems has increasingly received larger interest and attention from the control community \cite{IsermannBook2006} (c.f. to the references therein). The increasing complexity of human-made machines and devices, such as gas turbine engines and chemical processes, has necessitated the need for development of more complex and sophisticated FDI algorithms. Not with standing this requirement,  development of  FDI algorithms for systems that are governed by partial differential equations (PDE) has not been fully addressed and received extensive attention in the literature. One approach to investigate the FDI problem of PDEs relies on obtaining an approximate model of the system. First, the PDE system is approximated by a simple model (such as an ordinary differential equation (ODE) model), and then sufficient conditions for solvability of the FDI problem are derived based on this approximate model.

It is well-known that {\it parabolic PDE} systems can be approximated by  ODE representations. These systems can be approximated through application of finite element methods  where sufficient conditions can then be derived by using singular perturbation theory \cite{Christo_Book}. The FDI problem of parabolic PDEs has been addressed by using the corresponding approximate models in \cite{ACC2012, Davis_Journal}.
On the other hand, by discretizing through  spatial coordinates, one can approximate  {\it hyperbolic PDE} systems by  ODE models. However, unlike parabolic PDE systems, one cannot apply model decomposition, order reduction and singular perturbation theory to hyperbolic PDE systems \cite{Christofer_Hyperbolic}. Moreover, the order of the resulting approximate ODE systems can be dramatically high. Therefore, researchers have investigated hyperbolic PDE systems by using  other formal methods such as the theory of semigroups \cite{Curtain_Book} and  backstepping methods \cite{Krstic_Hyperbolic}.

As shown in \cite{HyperPDE_2D}, a {\it single} hyperbolic PDE system can be approximated by using two-dimensional (2D) systems. In \cite{ACC2013}, we have shown that this approximation is also applicable to a {\it system} of hyperbolic PDEs. Moreover, as shown in \cite{Parabolic3D}, parabolic PDE systems can be approximated by three-dimensional (3D) Fornasini and Marchesini model II (FMII)  representations \cite{Kaczorek_Book, 2DRef, Stability2D2013, FMinBook}. We have shown in the Appendix that it is straightforward to extend the results of our work to 3D systems as well (these results are not included here due to space limitations). Consequently, our proposed methodology in this work can also be applied to parabolic PDE systems. There are only a few results on FDI of 2D systems in the literature, such as those by using dead-beat observers \cite{BisiaccoMultiDim} and parity equations \cite{FMResidual, BisiaccoLetter}. On the other hand, the FDI of hyperbolic PDE systems (by using \underline{even an approximate model}) has \underline{not} yet been adequately addressed in the literature.

It should be noted that 2D system theory has other applications in the control field. For example, a class of discrete-time linear repetitive processes can be modeled by  2D systems. These  processes play important roles in tracking control and robotics, where the controlled system is required to perform a periodic task with  high precision (refer to \cite{RepitativeProcess_Book} for more details on repetitive systems). One of the main approaches to control linear repetitive processes is the iterative learning control (ILC)\cite{RepitativeProcess_Book}. Since the ILC problem can be formulated as a control problem in the 2D system theory \cite{ILC_2D_Tac}, 2D systems have been increasingly applied to spatio-temporal and repetitive process control problems in the literature.

The FDI of 1D linear, time-invariant systems has been extensively investigated during the past few decades \cite{IsermannBook2006} (and the references therein). The geometric approach \cite{Mass_Thesis, Massoumnia1986,  Massoumnia1989} has provided a valuable tool for investigating the FDI problem of a large class of dynamical systems such as linear time-invariant \cite{Massoumnia1989}, Markovian jump systems \cite{DrMeskin_TacFullPaper}, time-delay systems \cite{DrMeskin_Delay}, linear impulsive systems \cite{DrMeskin_Impulsive}, and parabolic PDE  systems \cite{ACC2012}. Moreover, the geometric approach is also extended to affine nonlinear systems in \cite{IsidoriFDI}.  Furthermore, hybrid geometric FDI approaches for linear and nonlinear systems have also been provided in \cite{MKR_SysTech2010} and \cite{MKR_Nonlin2010}, where a set of residual generators are equipped with a discrete event-based system fault diagnoser to solve the FDI problem.

Motivated by the above discussion, in this paper we investigate the FDI problem of 2D systems and apply the results to a 2D approximate model of a hyperbolic PDE system. As stated earlier, a hyperbolic PDE system can be approximated by  1D systems, where the order of the approximate 1D system will significantly increase by decreasing the gridding size \cite{Christofer_Hyperbolic}. We also provide an example where faults in the 1D approximate model are not isolable, whereas one can detect and isolate  faults by applying the 2D approximate model and by using our proposed methodology.

Recently, the geometric theory of 2D  systems has attracted much interest, where basic concepts such as conditioned invariant and controllable subspaces are studied in detail for the Fornasini and Marchesini model I (FMI) \cite{ ntogramatzidis2012GeometryArticle, ntogramatzidis2012Siam}. The hybrid 2D systems have also been investigated from the geometric point of view in \cite{Valcher2013}. The geometric FDI approach for 2D systems, for the first time, was addressed in \cite{ACC2013}, where invariant subspaces of the Roesser model are defined and the FDI problem is formulated based on these subspaces.

Two-dimensional (2D) systems have  been extensively investigated from a system theory point of view \cite{Kaczorek_Book, 2DRef, Stability2D2013, FMinBook}. Particularly, system theory concepts such as stability \cite{2DLyapunov, Stability2D2013}, controllability \cite{Controlability2D}, observability\cite{Valcher2013}, and state reconstruction \cite{2DKalman} have been investigated in the literature. However, due to complexity of 2D systems, unlike one-dimensional (1D) systems,  there are various definitions that are introduced for controllability and observability properties. Not surprisingly, the duality between the observability and controllability does not hold in 2D systems.
In this paper, we investigate observability of 2D systems from a \underline{{\it new}} geometric point of view  which has its roots in system theory of infinite dimensional (Inf-D) systems.

Compared to the results reported in \cite{ACC2013} and \cite{ACC2014}, we have specific generalization and novel contributions in this work. We \underline{first} investigate the Fornasini and Marchesini model II (FMII) as an Inf-D system that allows us to deal with Inf-D subspaces (albeit with a finite dimensional (Fin-D) representation). The invariant subspaces and the corresponding algorithms  that are introduced in \cite{ACC2013} for the Roesser model are then \underline{generalized} to the 2D FMII systems. It is worth noting that although the introduced subspaces in this work are Inf-D, the corresponding algorithms for constructing the subspaces converge in a finite and known number of steps.

In addition, in \cite{ACC2013} only sufficient conditions for solvability of the FDI problem of the Roesser model were provided. As shown in \cite{ACC2014}, the invariance property  of an unobservable subspace is a generic property of 2D systems. We first derive a \underline{single} necessary and sufficient condition for detectability and isolability of faults (that is formally introduced in the next section). It is shown that this condition also necessary for solvability of the FDI. Furthermore, in \cite{ACC2013} by utilizing the existence of an LMI-based observer  only sufficient conditions for solvability of the Roesser model FDI problem were provided. In other words, the procedure to design the observer gains is not provided in \cite{ACC2013,ACC2014}. However, in our paper, we derive both necessary and sufficient conditions, where sufficient conditions are based on  (a) an ordinary, (b) a delayed deadbeat, and (c) an LMI-based 2D Luenberger filters. Moreover,  we develop a procedure to design LMI-based filter gains. 

It must be noted that recently related work has appeared in \cite{Malek_3DFDI} and \cite{Malek_3DFDIConf}. These two papers investigate the FDI problem of three-dimensional (3D) FMII models. Although, a geometric FDI methodology is also developed in \cite{Malek_3DFDI}, our work is distinct and unique from \cite{Malek_3DFDI} in \underline{{\it three}} main perspectives:
\begin{enumerate}
	\item The approach proposed  in \cite{Malek_3DFDI} is based on the results of \cite{Massoumnia1986}, whereas our approach is based on the generalization of the results of \cite{Massoumnia1986} as reported in \cite{Massoumnia1989} (the results in \cite{Massoumnia1989} are more general than \cite{Massoumnia1986}) for 2D models (for more information refer to Remark \ref{Rem:GeneralFilter}).
	\item In \cite{Malek_3DFDI}, necessary and sufficient conditions for solvability of the FDI problem were derived for a \underline{subclass of detection filters} where it was assumed that the output map of the detection filters and that of the system are identical. However,  in our work here we consider a \underline{general class of detection filters} for the residual generation and relax this condition.
	\item As shown in Section \ref{Sec:FDIComparison}, the observability property of the 2D model is a fundamental requirement and assumption in \cite{Malek_3DFDI} (although it is stated in \cite{Malek_3DFDI} that this assumption was made  for simplicity of their presentation). However, our proposed solution \underline{does not} require this condition and assumption, and consequently our approach leads to a less restrictive solution.
\end{enumerate}

Another approach that was developed in the literature \cite{BisiaccoMultiDim,BisiaccoLetter} has its roots in 2D deadbeat observers \cite{Bisiacco_Obs}. In \cite{BisiaccoMultiDim}, the FDI  problem is investigated by using polynomial matrices and unknown input deadbeat observers, where the right zero primeness of the 2D Popov-Belevitch-Hautus (PBH) matrix (this is reviewed in Subsection \ref{Sec:DeabeatObs})  is a necessary condition for solvability of the FDI problem. In \cite{BisiaccoLetter}, this condition was relaxed and  necessary and sufficient conditions that are based on an extended parity equation approach were obtained.

To provide a fair and comprehensive comparison with the currently available result in \cite{Malek_3DFDI}, in this work it is shown that on one hand the solvability of the FDI problem by using the method in \cite{Malek_3DFDI} is also sufficient to accomplish the FDI task by using our proposed approach. On the other hand, there are certain 2D systems that are \underline{not solvable} by the approach in \cite{Malek_3DFDI}, however our proposed approach \underline{can both detect and isolate} the faults. Also, by comparing our proposed results with the algebraic-based methods in \cite{BisiaccoMultiDim,BisiaccoLetter}, where they can solve and derive necessary and sufficient conditions for solvability of the FDI problem, we will highlight and emphasize  two important considerations as follows:
\begin{enumerate}
	\item The algebraic methods, in contrast to our geometric approach,  need a closed-form and analytical solution to certain polynomial matrix equations (in two variables). However, our proposed approach is derived and solved by using commonly available and relatively straightforward numerical methods.
	\item As shown subsequently in Section \ref{Sec:FDIComparison}, there are certain examples, where the necessary conditions in \cite{BisiaccoMultiDim,BisiaccoLetter} are not satisfied. However, our proposed approach can \underline{both detect and isolate} these fault scenarios.
\end{enumerate}
To summarize, the \underline{main contributions} of this paper can be highlighted as follows:
\begin{enumerate}
	\item By reformulating  2D models as  Inf-D systems, the invariance property of the unobservable subspace is investigated (this is provided in Section \ref{Sec:InvSpace}), where an Inf-D unobservable subspace is also introduced. This result enables one to formally address the  solvability of the FDI problem without restriction on the initial conditions (unlike in \cite{ACC2013, ACC2014} where a restrictive assumption that the unobservable subspace is Fin-D  is imposed). Since 2D systems are Inf-D dynamical system, our proposed Inf-D representation and framework enables one to address the FDI problem in its most general scenario than that in \cite{ACC2013} and \cite{ACC2014}.
	\item Two important Inf-D invariant subspaces (namely, the conditioned invariant and the unobservability) are introduced for the FMII models. Although, these subspaces are Inf-D, we provide explicit algorithms that can be invoked to compute these subspaces in a finite and known number of steps.
	\item The FDI problem of 2D systems is formulated in terms  of the above introduced invariant subspaces, and necessary and sufficient conditions for its solvability are derived and formally analyzed.
	\item A novel procedure is developed for designing an observer (also known as a detection filter)  by utilizing the linear matrix inequalities (LMI) technique.
	\item Three sets of sufficient conditions for solvability of the FDI problem by utilizing the ordinary, the delayed deadbeat, as well as our proposed LMI-based observers (detection filters) are also provided.
	\item Analytical comparisons between our proposed approach and the one in \cite{Malek_3DFDI} are presented. We show that the sufficient conditions in\cite{Malek_3DFDI} are also sufficient for solvability of the FDI problem by using our  approach. However, an example is provided that shows our method can both detect and isolate faults, whereas the approach in  \cite{Malek_3DFDI}  \underline{cannot} be used. In other words, it is shown that if the method in the above literature can accomplish the FDI task, our proposed approach can also accomplish this task. However, if our scheme \underline{cannot} achieve the fault detection and isolation goals for a given system, then it is guaranteed that \underline{the schemes} in \cite{Malek_3DFDI} cannot also achieve these goals. Moreover, there are 2D systems  where our approach can achieve the FDI objectives whereas the results in the  literature \underline{cannot} solve the FDI problem.
	\item Our proposed methodology and strategy is applied to an important application area of a heat exchanger (a hyperbolic PDE system), where it is shown that one can \underline{simultaneously} detect and isolate two different faults namely, the leakage and the fouling faults.
\end{enumerate}

The remainder of the paper is organized as follows. The {\it preliminary} results including the Inf-D representation, the FDI problem formulation, the 2D deadbeat observer and the 2D Luenberger observers (detection filters) are presented in Section \ref{Sec:Backgground}. The unobservable subspaces of the FMII 2D model are introduced in Section \ref{Sec:InvSpace}. The geometric property of these subspaces and the invariant concept of the FMII model are also presented in Section \ref{Sec:InvSpace}. In Section \ref{Sec:FDI}, necessary and sufficient conditions for solvability of the FDI problem are derived and developed. Analytical comparisons between our proposed approach and the available \emph{geometric methods} in the literature, namely  \cite{Malek_3DFDI} and \cite{Malek_3DFDIConf} are provided in this section. Furthermore,  numerical comparisons with both \textit{geometric and algebraic methods} in \cite{Malek_3DFDI,Malek_3DFDIConf,BisiaccoMultiDim,BisiaccoLetter} are presented in this section. Simulation results for the FDI problem of a heat exchanger that is expressed as a PDE system are conducted in Section \ref{Sec:Simulation}. Finally, Section \ref{Sec:Conc} concludes the paper and provides suggestions for future work.

{\bf Notation:}
In this work, $\mathscr{A},\mathscr{B},...$ are used to denote subspaces. For a given vector $L$, the subspace $\spanset{L}$ is denoted by $\ssp{L}$. The inverse image of the subspace $\ssp{V}$ with respect to the operator $A$ is denoted by $A^{-1}\ssp{V}$. The block diagonal matrix $\bbm A&0\\0 &B\ebm$ is denoted by $\diag(A,B)$. The real, complex, integer and positive integer numbers are denoted by $\fld{R}$, $\fld{C}$, $\fld{Z}$ and $\fld{N}$, respectively. $\underline{\fld{N}}$ denotes the set $\fld{N}\cup \{0\}$. In this paper, we deal with infinite dimensional (Inf-D) subspaces and vectors. An Inf-D vector is designated by the bold letters ${\bf x},{\bf y},\cdots$. The Inf-D subspace $\cdots\oplus\ssp{V}\oplus\ssp{V}\oplus\cdots$ is denoted by $\oplus\ssp{V}$, where $\ssp{V}\subseteq\fld{R}^n$. Let ${\bf x} =(\cdots,x_{-1}^\tran,x_0^\tran,x_1^\tran,\cdots)^\tran\subseteq\oplus\ssp{V}$ and  $|{\bf x}|_\infty = \underset{i\in\fld{Z}}{\text{sup}} |x_i|$, where $x_i\in\ssp{V}$. The vector space $\ssp{V}_\infty = \sumbanach{\ssp{V}}$ is defined as $\{{\bf x}|{\bf x}\in\oplus\ssp{V}\;\mathrm{and}\; |{\bf x}|_\infty<\infty\}$. It can be shown that $\ssp{V}_\infty$ is a Banach (but not necessarily Hilbert) space.  Let $i,j\in\fld{Z}\cup\{-\infty,\infty\}$ and $j\geq i$. The Inf-D vector ${\bf x}_i^j\in\oplus\ssp{V}$ is expressed as ${\bf x}_i^j=[\cdots,0, 0, x_i^\tran,\cdots, x_j^\tran,0 ,0,\cdots]^\tran$, where $x_\ell\in\ssp{V}$ for all $i\leq\ell\leq j$, and associated with $\textbf{x}_{-\infty}^\infty$ we simply  use $\textbf{x}$. The other notations are provided within the text of the paper as appropriate.
\section{Preliminary Results}\label{Sec:Backgground}
In this section, we first review 2D systems and their various representational models. Subsequently,  a 2D system is expressed as an infinite dimensional (Inf-D) system that allows one to geometrically analyze the unobservable subspaces (this is to be defined and specified in the next section). The FDI problem is also formulated in this section. Moreover, we review the 2D Popov-Belevitch-Hautus (PBH) matrix and 2D deadbeat observers in this section. Finally, an LMI-based approach is introduced to design a 2D Luenberger observer (also known as a detection filter) for 2D systems.
\subsection{Discrete-Time 2D Systems} \label{Sec:Dis_2D}    
2D models can be used for representing a large class of problems such as approximating hyperbolic PDE systems \cite{ACC2013, HyperPDE_2D}, image processing  and digital filtering \cite{Roess}. System theory concepts such as observability, controllability and feedback stabilization have also been investigated in the literature  for 2D systems \cite{Kaczorek_Book, FMinBook, ACC2013, ntogramatzidis2012Siam, Valcher2013}.
There are various models that are adopted in the literature for 2D systems including the Rosser model \cite{Roess}, the Fornasini-Marichesini model I (FMI) and model II (FMII) \cite{FMinBook,Kaczorek_Book}. The FMI can be formulated as a Roesser model and the Roesser model is a special case of the FMII model \cite{Kaczorek_Book}. In this work, we consider and concentrate on the FMII model, and consequently our results are also derived for this general class of 2D systems. 

Consider the following FMII model \cite{FMinBook}, 
\begin{equation}\label{Eq:FMII}
	\begin{split}
		x(&i+1,j+1) = A_1x(i,j+1) + A_2x(i+1,j) + B_1u(i,j+1) + B_2u(i+1,j)\\&+
		\sum_{k=1}^p L_{k}^1 f_k(i,j+1) + \sum_{k=1}^p L_{k}^2 f_k(i+1,j),\\
		y(&i,j) = C x(i,j), \; i,j\in\fld{Z},
	\end{split}
\end{equation}
where $x\in\fld{R}^n$, $u\in\fld{R}^m$, and $y\in \fld{R}^q$ denote the state, input and output vectors, respectively. The fault signals and the corresponding fault signatures are designated by$f_k$, $L_k^1$ and $L_{k}^2$, respectively. Also, $p$ denotes the number of faults in the system. Since in this work all the introduced invariant subspaces are based on the operators $A_1$, $A_2$ and $C$, we designate the system \eqref{Eq:FMII} by the triple ($C$,$A_1$,$A_2$). 
\begin{remark}\label{Rem:onFMII-General}
	Note that system \eqref{Eq:FMII} represents and captures the presence of \underline{both} actuator and component faults. To represent sensor faults, one can augment the sensor dynamics and model the sensor faults as actuator faults in the augmented system (for a complete discussion on this issue refer to \cite{Mass_Thesis} - Chapters 3 and 4). Also, it should be pointed out that the fault signal $f_k$ affects the system through two different fault signatures $L_{k}^1$ and $L_{k}^2$. An alternative fault model could have been expressed according to the following representation,
	\begin{align}\label{Eq:FMII_SingFault}
	x(i+1,j+1) = &A_1x(i,j+1) + A_2x(i+1,j) + B_1u(i,j+1) + B_2u(i+1,j)+
	\sum_{k=1}^p L_{k} g_{k}(i,j),\notag\\
	y(i,j) = &C x(i,j).
	\end{align}
	Model \eqref{Eq:FMII} is more general than the one given by equation  \eqref{Eq:FMII_SingFault}. This is due to the fact that by denoting  $f_k(i+1,j) = g_k(i,j)$ for all $k = 1,\cdots,p$, one can represent the model \eqref{Eq:FMII_SingFault} as in the model \eqref{Eq:FMII}. \qed
\end{remark}

Let us now consider the Roesser model \cite{Roess} which is expressed as
\begin{equation}\label{Eq:RoesserModel}
\begin{split}
\bbm r(i+1,j)\\s(i,j+1)\ebm = &\bbm A_{11} &A_{12}\\ A_{21} &A_{22}\ebm\bbm r(i,j)\\s(i,j)\ebm +\bbm B_{11}\\ B_{21}\ebm u(i,j)+\sum_{k=1}^p L_k f_k(i,j),\\
y(i,j) = &C\bbm r(i,j)\\s(i,j)\ebm,
\end{split}
\end{equation}
and where $\bbm r^\tran &s^\tran\ebm^\tran\in \fld{R}^n$ represents the state, and the variables $u$, $y$, $f_k$ and $L_k$ are defined as in equation \eqref{Eq:FMII}. By defining
\begin{equation}
\begin{split}
x &= \bbm r\\s\ebm, A_1 = \bbm A_{11} &A_{12}\\ 0 &0\ebm, A_2 = \bbm 0 &0\\ A_{21} &A_{22}\ebm,
B_1 = \bbm B_{11}\\0\ebm, B_2 = \bbm 0\\ B_{21}\ebm,
\end{split}
\end{equation}
one can formulate the Roesser model \eqref{Eq:RoesserModel} as in equation \eqref{Eq:FMII}. In this paper, we assume that $A_1$ and $A_2$ in model \eqref{Eq:FMII} are not necessarily commutative (i.e. $A_1A_2 \neq A_2A_1$), and hence, the results that are subsequently developed  can also be applied to the Roesser model \eqref{Eq:RoesserModel}. It should be noted that the commutativity of  $A_1$ and $A_2$ is a strong condition that renders the results in\cite{Valcher2013} (where it is assumed that $A_1$ and $A_2$ are commutative) not applicable to the system \eqref{Eq:RoesserModel}.

In this work, we will investigate and develop FDI strategies for the model \eqref{Eq:FMII}. It is assumed that $A_1$ and $A_2$ in model \eqref{Eq:FMII} are not necessarily commutative (i.e. $A_1A_2 \neq A_2A_1$), and hence, the results that are subsequently developed  can also be applied to the Roesser model. It should be emphasized that the commutativity of  $A_1$ and $A_2$ is a strong condition that renders the results in\cite{Valcher2013} (where  $A_1$ and $A_2$ are assumed to commutate) not applicable to Roesser systems.

\subsection{Infinite Dimensional (\infd) Representation} \label{Sec:IDRep}
In this subsection, we reformulate the 2D model \eqref{Eq:FMII} as an Inf-D system that will be used to derive the invariance property of  unobservable subspaces (for details refer to Section \ref{Sec:UnObservable}).

Consider the fault free system \eqref{Eq:FMII}, that is with $f_k\equiv0$, and with zero input (we are mainly interested in the unobservable subspaces and do not need to be concerned with the control inputs in the FDI problem). By considering  ${\bf x}(k)=(\cdots,x(-1+k,1)^\tran,x(k,0)^\tran,x(1+k,-1)^\tran,\cdots)^\tran\in\sum\fld{R}^n$, it can be shown that under the above conditions the system \eqref{Eq:FMII} can be represented as,
\begin{equation}\label{Eq:IDRep}
	\begin{split}
		{\bf x}(k+1) &= \op{A}{\bf x}(k),\;\;\;\; k\in\underline{\fld{N}} \\
		{\bf y}(k) &=\op{C}{\bf x}(k),
	\end{split}
\end{equation}
where ${\bf x}(k)\in\op{X}=\sumbanach{\fld{R}^n}$, ${\bf y}(k)=(\cdots,y(-1+k,1)^\tran,y(k,0)^\tran,y(1+k,-1)^\tran,\cdots)^\tran\in\sumbanach{\fld{R}^q}$, and $\op{A}$ is an Inf-D matrix with $A_1$ and $A_2$ as diagonal and upper diagonal blocks, respectively, with the remaining elements set to zero, and $\op{C} = \diag(\cdots,C,C,\cdots)$. In other words, we have,
\begin{equation}
	\op{A} = \bbm \ &\ddots &\ddots &\cdots & &\cdots\\
	\cdots &0 &A_1 &A_2 &0 &\cdots\\
	\cdots &0 &0 &A_1 &A_2 &\cdots\\
	\cdots & &\cdots & &\ddots &\ddots \ebm ,
	\op{C} = \bbm \ddots & &\cdots & &\cdots\\
	\cdots &0 &C &0 &\cdots\\
	\cdots &0 &0 &C &\cdots \\
	\cdots & &\cdots & &\ddots\ebm												
\end{equation}

Note that since we invoke an Inf-D representation to investigate an unobservable subspace, and where this subspace is defined by only $\op{A}$ and $\op{C}$, therefore for sake of  presentation simplicity,  an Inf-D system is used that has no fault and zero input.

There are various formulations for the initial conditions  of the FMII model \eqref{Eq:FMII}. These are based on the separation set that is introduced in \cite{FM_SeparationSet}. There are two separation sets that are commonly used in the literature.  In the first formulation
the initial conditions are denoted by ${\bf x}(0)=(\cdots,x(-1,1)^\tran,x(0,0)^\tran,x(1,-1)^\tran,\cdots)^\tran\in\sum\fld{R}^n$ \cite{FMinBook} (this is compatible with the model \eqref{Eq:IDRep}). The second formulation is expressed as $x(i,0)=h_1(i)$ and $x(0,j)=h_2(j)$, where $h_1(i),h_2(j)\in \fld{R}^n$ and $i,j\in\fld{N}$ \cite{Kaczorek_Book}.
The second formulation is more compatible with applications (particularly, in case that the system \eqref{Eq:FMII} is an approximate model of a PDE system - refer to Section \ref{Sec:Simulation}). It will be shown subsequently that  since we derive the conditions based on invariant unobservable subspace (this is formally defined in the next section), our proposed methodology is applicable to \underline{both} initial condition formulations. In other words, we use the Inf-D  representation  to only show the results and evaluate the developed algorithms. However, to apply our results there is no need to deal with Inf-D systems and subspaces, and therefore, one can apply our proposed methods to 2D systems corresponding to both initial condition formulations. 

\myhl{We start with the first formulation of the initial} conditions subject to the boundedness assumption (this is, ${\bf x}(0)\in\sum\fld{R}^n$). However, as shown in Section \ref{Sec:A12Inv}, our proposed results also hold for  the second initial condition formulation.

As stated in the Notation section, it can be  shown that $\op{X}$ defined for equation \eqref{Eq:IDRep} is an Inf-D Banach space. The system theory corresponding to Inf-D systems is more significantly challenging than Fin-D system theory (1D systems) (refer to \cite{Zwart_Book}). However, as shown subsequently, the operator $\op{A}$ is bounded and consequently, one can readily extend the result of 1D systems to the system \eqref{Eq:IDRep} \cite{Curtain_Book,Zwart_Book}. Let us first define the notion of bounded operators.
\begin{definition}\label{{Def:BoundedOp}}\cite{Curtain_Book}
	Consider the operator $\op{A}:\op{X}_1\rightarrow\op{X}_2$, where $\op{X}_1$ and $\op{X}_2$ are Banach vector spaces  with the norms $|\cdot|_1$ and $|\cdot|_2$, respectively. The operator $\op{A}$ is bounded if there exists a real number $G$ such that $|\op{A}{\bf x}|_2\leq G|{\bf x}|_1$ for all ${\bf x}\in\op{X}_1$.
\end{definition}
\begin{lemma}\label{Lem:Boundedness}
	The operator $\op{A}$ as defined in the Inf-D system \eqref{Eq:IDRep} is bounded.
\end{lemma}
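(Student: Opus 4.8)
The plan is to show that the infinite-dimensional operator $\op{A}$ is bounded by exhibiting an explicit constant $G$ such that $|\op{A}{\bf x}|_\infty \leq G|{\bf x}|_\infty$ for every ${\bf x}\in\op{X}=\sumbanach{\fld{R}^n}$. Since the norm on $\op{X}$ is the supremum norm $|{\bf x}|_\infty = \sup_{i\in\fld{Z}}|x_i|$, and $\op{A}$ is a band operator whose only nonzero blocks in each row are $A_1$ (on the diagonal) and $A_2$ (on the upper diagonal), the action of $\op{A}$ on ${\bf x}=(\cdots,x_{-1}^\tran,x_0^\tran,x_1^\tran,\cdots)^\tran$ produces a vector whose $i$-th block is $A_1 x_i + A_2 x_{i+1}$. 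The whole argument reduces to a single per-block estimate combined with taking a supremum.

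**First I would** write down the $i$-th block of $\op{A}{\bf x}$ explicitly as $(\op{A}{\bf x})_i = A_1 x_i + A_2 x_{i+1}$, directly from the band structure of $\op{A}$ displayed in the excerpt. Then, using the triangle inequality together with submultiplicativity of the induced matrix norm on the finite-dimensional factor $\fld{R}^n$, I would bound
\begin{equation}
|(\op{A}{\bf x})_i| = |A_1 x_i + A_2 x_{i+1}| \leq |A_1|\,|x_i| + |A_2|\,|x_{i+1}| \leq (|A_1|+|A_2|)\,|{\bf x}|_\infty,
\end{equation}
where the last step uses $|x_i|\leq|{\bf x}|_\infty$ and $|x_{i+1}|\leq|{\bf x}|_\infty$ by definition of the supremum norm. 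Here $|A_1|$ and $|A_2|$ denote the finite operator norms of the fixed $n\times n$ matrices $A_1$ and $A_2$, which are finite because these are ordinary finite-dimensional matrices.

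**Next I would** take the supremum over $i\in\fld{Z}$ on the left-hand side: since the right-hand bound $(|A_1|+|A_2|)|{\bf x}|_\infty$ is independent of $i$, the supremum is controlled by the same constant, giving
\begin{equation}
|\op{A}{\bf x}|_\infty = \sup_{i\in\fld{Z}}|(\op{A}{\bf x})_i| \leq (|A_1|+|A_2|)\,|{\bf x}|_\infty.
\end{equation}
Setting $G = |A_1|+|A_2|$, which is a finite real number determined entirely by the system matrices, shows that $\op{A}{\bf x}\in\op{X}$ (so $\op{A}$ indeed maps $\op{X}$ into itself) and that the boundedness condition of Definition~\ref{{Def:BoundedOp}} is met. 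This completes the argument.

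**The main obstacle** is essentially bookkeeping rather than conceptual: one must be careful that the supremum norm behaves correctly so that $|x_i|\leq|{\bf x}|_\infty$ for \emph{every} index (including the shifted index $i+1$), and that the per-block bound is genuinely uniform in $i$ before the supremum is taken. A secondary point worth stating cleanly is that the estimate simultaneously verifies $\op{A}$ is well-defined as a map $\op{X}\to\op{X}$ (the image again lies in the Banach space $\sumbanach{\fld{R}^n}$ of bounded sequences), since a priori one should confirm that applying the band operator does not destroy boundedness. Both points follow immediately from the uniform constant $G=|A_1|+|A_2|$, so no delicate limiting or convergence argument is required; the finite bandwidth of $\op{A}$ is exactly what makes the proof elementary.
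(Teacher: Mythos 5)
Your proof is correct and follows essentially the same route as the paper's: both write the $i$-th block of $\op{A}{\bf x}$ as $A_1x_i+A_2x_{i+1}$, apply the triangle inequality blockwise, and take the supremum, differing only in the choice of constant (you use $G=|A_1|+|A_2|$, the paper uses $G=2\max(|A_1|,|A_2|)$, which are equivalent up to a factor of $2$). Your additional remark that the estimate also verifies $\op{A}$ maps $\op{X}$ into itself is a small but welcome point the paper leaves implicit.
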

\begin{proof}
	Let $G = 2\max(|A_1|,|A_2|)$, where $|A_i|$ denotes the norm of $A_i$ and ${\bf x}=(x_k)_{k\in\fld{Z}}\in\op{X}$. It follows readily that $|\op{A}{\bf x}|_\infty=\underset{k\in\fld{Z}}{\text{sup}} |A_1x_k+A_2x_{k+1}|\leq\underset{k\in\fld{Z}}{\text{sup}} G \max(|x_k|,|x_{k+1}|)=G \underset{k\in\fld{Z}}{\text{sup}} |x_k|$.
	Therefore, $|\op{A}{\bf x}|_\infty\leq G|{\bf x}|_\infty$. This completes the proof of the lemma.
\end{proof}
The above lemma enables one to now formulate the unobservable subspace of the 2D system \eqref{Eq:FMII} in a geometric framework (for details refer to Section \ref{Sec:InvSpace}) based on the operator $\op{A}$ (and consequently, in terms of $A_1$ and $A_2$).

\begin{remark}
	 Although, in \cite{conte1988GeometryArticle} all the results such as the controlled invariant subspaces are presented on $\fld{R}^n$, the developed approach in \cite{conte1988GeometryArticle} has its roots in the theory of systems over rings. In this paper, we propose an alternative approach that is based on Inf-D systems that are  defined on a Banach vector space. Similar to \cite{conte1988GeometryArticle}, our proposed methodology including the algorithms and the conditions for solvability of the FDI problem can also be addressed in a Fin-D scheme. However, as shown in the literature the duality property does not hold for 2D systems \cite{FMMinimalRealization,FMinBook}. Therefore, by simply invoking duality the results of this paper cannot be derived from those in \cite{conte1988GeometryArticle}.
\end{remark}

\subsection{The FDI  Problem of 2D FMII Model}\label{Sec:FDIProb}
In this subsection,  we formulate the FDI problem for the 2D system \eqref{Eq:FMII}. In this paper, without loss of any generality, it is assumed that the system \eqref{Eq:FMII} is subject to two faults, and therefore we construct two residuals such that each one is sensitive to only one fault and is decoupled from the other.

More precisely, consider the faulty FMII model \eqref{Eq:FMII}. The solution to the FDI problem of the 2D FMII system can be stated as that of generating two residuals $r_k(i,j), k\in\{1,2\}$ such that,
\begin{subequations}\label{Eq:FDIP}
	\begin{align}
		&\forall u , f_2 \  \mathrm{and} \ f_1=0 \;\;\; \mathrm{then} \  r_1\rightarrow 0 \nonumber, \\ &\mathrm{and}\;\mathrm{if} \ \ f_1\neq 0 \;\;\; \mathrm{then} \ r_1\neq  0  \label{FDI_InObs1},\\
		&\forall u,  f_1  \  \mathrm{and} \ f_2=0 \;\;\; \mathrm{then} \  r_2\rightarrow 0 \nonumber, \\ &\mathrm{and}\;\mathrm{if} \ \ f_2\neq 0 \;\;\; \mathrm{then} \ r_2\neq  0 \label{FDI_InObs2}.
	\end{align}
\end{subequations}

The above residuals are to be constructed by employing fault detection filters. For the 2D system
\eqref{Eq:FMII}, we consider the following FMII-based {\it fault detection filter},
\bs
\begin{align}\label{Eq:Filter}
	\hat{\omega}(i+1,j+1) &= F_1\hat{\omega}(i,j+1) + F_2\hat{\omega}(i+1,j)+ K_1u(i,j+1) + K_2u(i+1,j) + E_1y(i,j+1) + E_2y(i+1,j),\notag\\
	r_k(i,j) &= M\hat{\omega}(i,j)-Hy(i,j),
\end{align}
\es
where $\hat{\omega}(i,j)$ denotes the state of the filter and is used to define the residual signal $r_k(i,j)$. The solution to the FDI problem is now reduced to that of selecting the filter gains $F_1$, $F_2$, $K_1$, $K_2$, $E_1$, $E_2$, $M$ and $H$ corresponding to the filter \eqref{Eq:Filter}.
\begin{remark}\label{Rem:GeneralFilter}
	The detection filter \eqref{Eq:Filter} can be selected as a full-order ($H=I$) or as a partial-order ($ker H\neq 0$) 2D Luenberger observer.
	As shown subsequently in Section \ref{Sec:FDI}, this level of generality allows one to analytically compare our proposed methods with those results reported in \cite{Malek_3DFDI}.\qed
\end{remark}

\begin{remark}\label{Rem:ProbName}
	In this paper, we investigate the FDI problem by employing two main steps, namely (i) decoupling the faults, and (ii) designing  filter gains for each fault. The first step for decoupling $f_1$  addresses  the existence of three maps $D_1$, $D_2$ and $H$, such that the fault $f_2$ signatures $L_2^1$ and $L_2^2$  are members of the unobservable subspace (defined in the next section) of the system ($H_kC$, $A_1+D_1C$, $A_2+D_2C$). The same terminology is used to decouple $f_2$. Moreover, the second step is mainly concerned with existence of the filter \eqref{Eq:Filter} such that stability of the error dynamics is guaranteed. In this paper, if the first step is solvable for the fault $f_i$ we say that $f_i$ is detectable and isolable.  Finally, it is stated that there is a solution to the FDI problem if for all the fault signals $f_i$  \underline{both} steps above are solvable.
\end{remark}

\subsection{Deadbeat Observers}\label{Sec:DeabeatObs}
In Section \ref{Sec:FDI},  necessary and sufficient conditions for solvability of the FDI problem are derived. We provide  sufficient conditions for accomplishing the FDI task by using a delayed deadbeat detection filter and an ordinary (i.e., without a delay) deadbeat observer (refer to the subsequent Corollaries \ref{Col:FDI_DelayedDeadBeat_Suff} and \ref{Col:FDI_DeadBeat_Suff}). Towards these end, in this section we formally define a (delayed) deadbeat filter. For a comprehensive discussion on 2D deadbeat observers refer to \cite{Bisiacco_Obs,BisiaccoLetter}.

Consider the system \eqref{Eq:FMII} under the fault free situation. A (delayed) deadbeat observer is constructed according to,
\bs
\begin{align}\label{Eq:DeadbeatObs}
	z(i+1,j+1) =& F_1z(i,j+1) + F_2z(i+1,j)+ K_1u(i,j+1) + K_2u(i+1,j) + E_1y(i,j+1) + E_2y(i+1,j),\notag\\
	\hat{x}(i,j) =& L_1z(i,j)+L_2y(i,j),
\end{align}
\es
where $u$ and $y$ are the input and output signals as defined in the system \eqref{Eq:FMII}. Note that since  the output is assumed to not be directly affected by the input signal, $\hat{x}(i,j)$ is only a linear combination of $z(i,j)$ and $y(i,j)$.  If there exists a number $N_0$ such that $\hat{x}(i,j)=x(i,j)$ for all $i+j>N_0$, the filter \eqref{Eq:DeadbeatObs} is designated as an \underline{ordinary (without a delay) deadbeat observer}. On the other hand, if there exist \underline{non-negative} integers $n_1$ and $n_2$ such that $n_1+n_2>0 $ and $\hat{x}(i,j)=x(i-n_1,j-n_2)$, the observer \eqref{Eq:DeadbeatObs} is  designated as a \underline{{\it delayed} deadbeat observer}. The necessary and sufficient conditions for existence of a (delayed) deadbeat observer are specified in the following theorem \cite{Bisiacco_Obs, BisiaccoLetter}.
\begin{theorem}\label{Thm:DeadbeatObsExistance}
	Consider the 2D system \eqref{Eq:FMII} under a fault free situation and the following 2D Popov-Belevitch-Hautus (PBH)  matrix,
	\begin{equation}\label{Eq:PBHMatrix}
		PBH(z_1,z_2) = \bbm I-z_1A_1-z_2A_2\\ C\ebm
	\end{equation}
	where $A_1$, $A_2$ and $C$ are defined as in equation \eqref{Eq:FMII}, and $z_1,z_2\in\fld{C}$. Then,
	\begin{enumerate}
		\renewcommand{\labelenumi}{(\roman{enumi})}
		\item there exists a delayed  deadbeat observer if and only if $\rank(PBH(z_1,z_2))=n$ for all $z_1,z_2\in\fld{C}-\{0\}$ (that is, $PBH(z_1,z_2)$ is right monomic) \cite{BisiaccoLetter} (Section 3) and
		\item there exists an ordinary deadbeat observer if and only if $\rank(PBH(z_1,z_2))=n$ for all $z_1,z_2\in\fld{C}$ (that is, $PBH(z_1,z_2)$ is right zero prime) \cite{Bisiacco_Obs}.\qed
	\end{enumerate}
\end{theorem}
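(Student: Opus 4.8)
The plan is to pass to the two-dimensional $z$-transform and to reduce the existence of a (delayed) deadbeat observer to the solvability of a polynomial Bezout-type identity built from the PBH matrix \eqref{Eq:PBHMatrix}. Interpreting $z_1,z_2$ as the backward-shift (delay) operators acting on the two indices, the fault-free and input-free trajectories of \eqref{Eq:FMII} satisfy $(I-z_1A_1-z_2A_2)x=0$ together with $y=Cx$, so that
\begin{equation*}
	PBH(z_1,z_2)\,x = \bbm 0\\ y\ebm .
\end{equation*}
First I would observe that if there is a polynomial (in $z_1,z_2$) row block $\bbm N_1(z_1,z_2) &N_2(z_1,z_2)\ebm$ with
\begin{equation*}
	\bbm N_1 &N_2\ebm PBH(z_1,z_2) = z_1^{n_1}z_2^{n_2}\, I ,
\end{equation*}
then applying it to both sides above gives $z_1^{n_1}z_2^{n_2}x = N_2(z_1,z_2)\,y$, i.e. $x(i-n_1,j-n_2)$ equals a finite combination of output samples, which is exactly a delayed (or ordinary, when $n_1=n_2=0$) deadbeat reconstruction. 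Conversely, a deadbeat observer forces $\hat x$ to be such a finite-window map of the outputs that is exact on every trajectory, and unravelling this yields a left inverse of $PBH(z_1,z_2)$ of the stated form. Thus the theorem reduces to characterising when $PBH(z_1,z_2)$ admits such a left inverse.

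The remaining, and decisive, step is the algebraic characterisation of left-invertibility of the polynomial matrix $PBH(z_1,z_2)$. For part (ii) I would invoke the theory of zero right prime matrices over $\fld{C}[z_1,z_2]$: a matrix with more rows than columns admits a \emph{polynomial} left inverse if and only if its maximal ($n\times n$) minors have no common zero in $\fld{C}^2$, which by Hilbert's Nullstellensatz is precisely $\rank(PBH(z_1,z_2))=n$ for \emph{all} $(z_1,z_2)\in\fld{C}^2$, i.e. right zero primeness. For part (i) the monomial factor $z_1^{n_1}z_2^{n_2}$ is allowed to clear exactly the rank drops supported on the coordinate axes, so I would work over the Laurent ring $\fld{C}[z_1^{\pm1},z_2^{\pm1}]$: there a Laurent-polynomial left inverse exists iff the minors have no common zero with $z_1z_2\neq0$, i.e. $\rank(PBH(z_1,z_2))=n$ for all $z_1,z_2\in\fld{C}-\{0\}$; clearing the monomial denominators then restores the identity with right-hand side $z_1^{n_1}z_2^{n_2}I$ and recovers the right-monomic (delayed) case.

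Finally I would close the loop by realising the finite-impulse-response map $N_2(z_1,z_2)$ as the 2D filter \eqref{Eq:DeadbeatObs}, choosing $F_1,F_2$ to be (jointly) nilpotent so that the internal filter dynamics vanish after finitely many steps and selecting $L_1,L_2,E_1,E_2,K_1,K_2$ to match the coefficients of $N_2$; the ordinary versus delayed distinction then corresponds exactly to whether the reconstruction uses only the current finite window or a shifted one. The easy direction of the rank/inverse equivalence is immediate: if $PBH(z_1^0,z_2^0)v=0$ with $v\neq0$ and $z_1^0z_2^0\neq0$, evaluating the identity at $(z_1^0,z_2^0)$ gives $(z_1^0)^{n_1}(z_2^0)^{n_2}v=0$, a contradiction. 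I expect the main obstacle to be the hard direction, namely deducing a polynomial (respectively Laurent-polynomial) left inverse from the pointwise full-rank condition: in two or more variables this is not elementary as in the 1D principal-ideal case, and it rests on the Quillen--Suslin theorem (freeness of stably free projective modules over polynomial rings) together with the Nullstellensatz, precisely the multidimensional primeness machinery underlying \cite{Bisiacco_Obs,BisiaccoLetter}.
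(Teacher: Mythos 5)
The paper offers no proof of this statement: Theorem~\ref{Thm:DeadbeatObsExistance} is imported verbatim from \cite{Bisiacco_Obs} and \cite{BisiaccoLetter}, with the end-of-proof marker placed inside the statement itself, so there is no internal argument to compare yours against. That said, your sketch is a faithful reconstruction of the polynomial-matrix route taken in those references: encode the shift-invariant part of the behaviour as $PBH(z_1,z_2)x=\bbm 0\\ y\ebm$, reduce observer existence to a Bezout identity $\bbm N_1 &N_2\ebm PBH(z_1,z_2)=z_1^{n_1}z_2^{n_2}I$ over $\fld{C}[z_1,z_2]$ (with $n_1=n_2=0$ in the ordinary case), and characterise solvability through the common zeros of the $n\times n$ minors. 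Two refinements are worth recording. First, the ``hard direction'' is lighter than you suggest: once the Nullstellensatz yields $\sum_i a_i m_i=1$ (respectively $=(z_1z_2)^N$, since a common zero set contained in the coordinate axes forces $(z_1z_2)^N$ into the ideal of the minors), an explicit left inverse is $\sum_i a_i\,\mathrm{adj}(M_i)S_i$, where $M_i=S_i\,PBH$ is the square submatrix selected by $S_i$; the Quillen--Suslin theorem is needed for unimodular \emph{completion}, not for a one-sided inverse. Second, your argument glosses over the boundary data: the relation $(I-z_1A_1-z_2A_2)x=0$ holds only away from the separation set carrying the initial conditions, and propagating the influence of the unknown boundary values is exactly what produces the finite transient $i+j>N_0$ in the definition of the deadbeat observer in \eqref{Eq:DeadbeatObs}; likewise the converse (that an exact deadbeat observer forces a polynomial/Laurent left inverse of $PBH$) is asserted rather than derived. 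These gaps are filled in the cited references, so as a proof sketch your proposal is sound, but it is not self-contained at those two points.
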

We use the above theorem in Section \ref{Sec:FDI} to show the existence of deadbeat filters to derive sufficient conditions for solvability of the FDI problem (the subsequent Corollaries \ref{Col:FDI_DelayedDeadBeat_Suff} and \ref{Col:FDI_DeadBeat_Suff}).
\subsection{LMI-based Observer (Detection Filter) Design}\label{Sec:Observer}
As shown in \cite{Bisiacco_Obs}, design of a deadbeat observer \eqref{Eq:DeadbeatObs} requires that one works with polynomial matrices (this is not always a straightforward process). In this subsection, we address the design process for the FMII system observer, or the detection filter gains, by using linear matrix inequalities (LMI). These results will be used to explicitly design a 2D Luenberger detection filter (that can also be formulated as in equation \eqref{Eq:Filter}) subsequently in Section \ref{Sec:FDI} for the  purpose of accomplishing the solution to the FDI problem. 

In order to show the asymptotic stability of the state estimation error dynamics, one needs to apply the following stability lemmas.

\begin{lemma}\label{Lem:2DLyapunov}\cite{2DLyapunov}
	The 2D FMII system \eqref{Eq:FMII} (under the fault free situation) is asymptotically stable if there exist two symmetric positive definite matrices $R_1,R_2\in\fld{R}^{n\times n}$ such that,
	\begin{equation}\label{Eq:2DLyapunov}
		A_c  \triangleq A^\tran (R_1+R_2) A- R <0,
	\end{equation}
	where $A = [A_1~~A_2]$ and $R=\diag(R_1,R_2)$.\qed
\end{lemma}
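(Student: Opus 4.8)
The plan is to exhibit a two-dimensional Lyapunov (dissipation) inequality that follows directly from the hypothesis $A_c<0$, and then to sum it along the anti-diagonals of the $(i,j)$-plane. First I would discard the inputs and faults, since they are irrelevant to the autonomous stability claim, and work with the homogeneous recursion $x(i+1,j+1)=A_1x(i,j+1)+A_2x(i+1,j)$. Writing $\xi(i,j)=\bbm x(i,j+1)\\ x(i+1,j)\ebm$, this becomes $x(i+1,j+1)=A\,\xi(i,j)$ with $A=[A_1~~A_2]$, so that the candidate Lyapunov increment factors cleanly through $A_c$.

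The first key step is the algebraic identity
\begin{equation*}
x^\tran(i+1,j+1)(R_1+R_2)x(i+1,j+1)-x^\tran(i,j+1)R_1x(i,j+1)-x^\tran(i+1,j)R_2x(i+1,j)=\xi^\tran(i,j)\,A_c\,\xi(i,j),
\end{equation*}
which is verified by substituting $x(i+1,j+1)=A\xi$ and recognizing $A^\tran(R_1+R_2)A-R=A_c$ with $R=\diag(R_1,R_2)$. Since $A_c<0$ is a strict inequality between finite-dimensional matrices, there is a margin $\varepsilon>0$ with $\xi^\tran A_c\xi\le-\varepsilon|\xi|^2$, so the left-hand side is bounded above by $-\varepsilon|\xi(i,j)|^2$. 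This per-step dissipation inequality plays the role of $\Delta V<0$ from the 1D theory.

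Next I would sum this inequality over one anti-diagonal $i+j=d$, equivalently over the parents of the points lying on $i+j=d+1$. Introducing the interior diagonal energy $\tilde E_d=\sum_{\{a+b=d,\;a,b\ge1\}}x^\tran(a,b)(R_1+R_2)x(a,b)$, the $R_1$- and $R_2$-weighted terms recombine on diagonal $d$ into $\tilde E_d$ plus the axis contributions, yielding a recursion of the form $\tilde E_{d+1}\le \tilde E_d+\beta_d-\varepsilon\sum_{i+j=d}|\xi(i,j)|^2$ across the diagonal index $d$, where $\beta_d=x^\tran(0,d)R_1x(0,d)+x^\tran(d,0)R_2x(d,0)$ collects the contributions of the boundary data on the axes. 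Telescoping this recursion from the initial diagonal up to $D$ and using $R_1,R_2>0$ (so that $\tilde E_d\ge0$ and the quadratic forms are equivalent to $|x|^2$), the margin term forces $\sum_d\sum_{i+j=d}|\xi(i,j)|^2<\infty$ whenever the boundary energy $\sum_d\beta_d$ is finite, and hence $x(i,j)\to0$ as $i+j\to\infty$; this is precisely the asymptotic stability of \eqref{Eq:FMII}.

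I expect the main obstacle to be the bookkeeping of the boundary terms $\beta_d$ in the telescoping sum and the passage from boundedness to genuine decay: the $R_1$- and $R_2$-weighted terms enter asymmetrically along the two axes, so the cancellation across consecutive diagonals must be tracked carefully, and it is the strict negativity of $A_c$ (the uniform margin $\varepsilon$) — rather than merely $A_c\le0$ — that upgrades boundedness of the diagonal energy into summability and therefore convergence of the state to the origin. The positive-definiteness of $R_1$ and $R_2$ is invoked at the very end to translate decay of the quadratic forms back into decay of $x(i,j)$ itself.
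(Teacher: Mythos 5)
The paper does not actually prove this lemma: it is imported verbatim from the cited reference \cite{2DLyapunov} and the statement is closed with a \qed, so there is no in-paper argument to compare against. Judged on its own, your reconstruction is the standard 2D Lyapunov (diagonal-energy) argument and its core is sound: the identity
\begin{equation*}
x^\tran(i+1,j+1)(R_1+R_2)x(i+1,j+1)-\xi^\tran(i,j)R\,\xi(i,j)=\xi^\tran(i,j)A_c\,\xi(i,j)
\end{equation*}
is exactly right, and your anti-diagonal bookkeeping (each interior point of a diagonal receives both an $R_1$- and an $R_2$-weighted contribution, the two axis points only one each) is correct up to a harmless index shift (summing over $i+j=d$ relates $\tilde E_{d+2}$ to $\tilde E_{d+1}$, not $\tilde E_{d+1}$ to $\tilde E_d$).

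The one substantive caveat is the class of boundary data for which your telescoping actually yields decay. As written, you conclude $x(i,j)\to0$ only when $\sum_d\beta_d<\infty$, i.e.\ for square-summable (in particular finitely supported) boundary conditions; this suffices for the way the lemma is used in the paper (stability of the error dynamics \eqref{Eq:ObserverError}), but the usual Fornasini--Marchesini notion of asymptotic stability asks for decay of the free evolution under merely \emph{bounded} data on the separation set, and there the additive margin $\varepsilon$ is too weak: your $\ell^2$ bound controls $\sum_{a+b=d}|x(a,b)|^2$, while the number of terms on a diagonal grows linearly, so boundedness of the state along diagonals does not follow from summability alone. The standard repair is to exploit strictness multiplicatively rather than additively: $A_c<0$ with $R>0$ gives $A^\tran(R_1+R_2)A\le\rho^2 R$ for some $\rho<1$, whence the diagonal recursion becomes a contraction, $\tilde E_{d+2}\le\rho^2(\tilde E_{d+1}+\beta_{d+1})$. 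Applied to impulsive boundary data this yields exponential decay $\|A^{(i,j)}\|\le M\rho^{\,i+j}$ of the fundamental matrices, and the general bounded-boundary case then follows by superposition and a geometric-series estimate. With that one-sentence strengthening your proof covers the full statement; otherwise it proves the lemma only for the $\ell^2$-boundary notion of stability.
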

\begin{lemma}\label{Lem:2DLMIBasic}\cite{LMI1D}
	Consider the LMI condition $\Phi + P^\tran\Lambda^\tran Q+ Q^\tran\Lambda P<0$,
	where $\Phi\in\fld{R}^{n\times n}$, $P\in\fld{R}^{p\times n}$ and $Q\in\fld{R}^{q\times n}$. There exists a matrix $\Lambda\in\fld{R}^{q\times p}$ satisfying the previous LMI condition if and only if
	$W_p^\tran\Phi W_p<0$ and $W_q^\tran\Phi W_q<0$, where the columns of $W_p$ and $W_q$ are bases of the $\ker P$ and $\ker Q$, respectively.\qed
\end{lemma}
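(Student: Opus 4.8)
The plan is to prove this as the classical Projection (Elimination) Lemma, treating the two implications separately and assuming, as the inequality implicitly requires, that $\Phi = \Phi^\tran$. The necessity direction I expect to be immediate by a congruence argument. Suppose some $\Lambda$ makes $\Phi + P^\tran\Lambda^\tran Q + Q^\tran\Lambda P < 0$. Since the columns of $W_p$ lie in $\ker P$ we have $PW_p = 0$, hence $W_p^\tran P^\tran = (PW_p)^\tran = 0$; pre- and post-multiplying the inequality by $W_p^\tran$ and $W_p$ therefore annihilates both cross terms (one through $W_p^\tran P^\tran$, the other through $P W_p$) and leaves exactly $W_p^\tran\Phi W_p < 0$. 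The identical computation with $W_q$, using $QW_q = 0$, yields $W_q^\tran\Phi W_q < 0$. This direction requires nothing beyond the kernel properties of $W_p$ and $W_q$.

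The substance of the lemma is sufficiency, and my plan here is a constructive, coordinate-based argument. First I would reduce to the case that $P$ and $Q$ have full row rank (redundant rows can be discarded without altering the set of achievable $P^\tran\Lambda^\tran Q$). Next I would choose a basis of $\fld{R}^n$ adapted to $\ker P$ and $\ker Q$ and partition $\Phi$ accordingly. In the generic situation $\fld{R}^n = \ker P \oplus \ker Q$, this change of coordinates brings $P$ and $Q$ to the forms $P = \bbm 0 & P_2\ebm$ and $Q = \bbm Q_1 & 0\ebm$ with $P_2$ and $Q_1$ square and invertible, so that $Q^\tran\Lambda P + P^\tran\Lambda^\tran Q$ carries only the off-diagonal block $Q_1^\tran\Lambda P_2$ and its transpose. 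Writing $\Phi = \bbm \Phi_{11} & \Phi_{12} \\ \Phi_{21} & \Phi_{22}\ebm$, the two hypotheses become precisely $\Phi_{11} = W_p^\tran\Phi W_p < 0$ and $\Phi_{22} = W_q^\tran\Phi W_q < 0$, and I would then simply take $\Lambda = -(Q_1^\tran)^{-1}\Phi_{12}P_2^{-1}$ to cancel the off-diagonal coupling, leaving the block-diagonal negative-definite matrix $\diag(\Phi_{11},\Phi_{22})$. This settles the generic case cleanly.

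The hard part will be the non-generic case, in which $\ker P$ and $\ker Q$ need not be complementary: they may intersect nontrivially or fail to span $\fld{R}^n$. On the overlap $\ker P \cap \ker Q$ the two hypotheses are mutually consistent (each forces negative definiteness there, so no contradiction arises), but the exact cancellation above is unavailable because $\Lambda$ cannot independently control the coupled blocks. My plan here is a dominance/scaling estimate: after refining the adapted basis to separate $\ker P \cap \ker Q$, the transverse directions, and a complement, I would exploit the \emph{strict} negative definiteness on each null space — equivalently, invoke Finsler's lemma to obtain scalars with $\Phi - \mu Q^\tran Q < 0$ and $\Phi - \nu P^\tran P < 0$ — and combine these so that an aggressive enough choice of $\Lambda$ makes the indefinite cross terms dominated by the negative diagonal blocks, with a final continuity/Schur-complement step restoring strictness. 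Since the result is standard, for the paper I would either cite the Projection Lemma of \cite{LMI1D} directly for this step or complete the estimate by the Schur-complement bound; the constructive argument above is retained mainly because it makes the dependence of solvability on the two separate kernel conditions $W_p^\tran\Phi W_p<0$ and $W_q^\tran\Phi W_q<0$ fully transparent.
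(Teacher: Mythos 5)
The paper does not actually prove this lemma: it is the classical Projection (Elimination) Lemma, stated with a closing QED symbol and attributed entirely to \cite{LMI1D}, so there is no internal argument to compare yours against. Taken on its own terms, your necessity direction is correct and complete: since $PW_p=0$ and $QW_q=0$, congruence by $W_p$ (respectively $W_q$) annihilates both cross terms and leaves $W_p^\tran\Phi W_p<0$ (respectively $W_q^\tran\Phi W_q<0$). Your sufficiency argument is also correct in the generic case $\fld{R}^n=\ker P\oplus\ker Q$ (which, after discarding redundant rows, forces $p+q=n$): in the adapted coordinates $P=[\,0\;\;P_2\,]$ and $Q=[\,Q_1\;\;0\,]$ with $P_2$ and $Q_1$ square invertible, and $\Lambda=-(Q_1^\tran)^{-1}\Phi_{12}P_2^{-1}$ reduces the matrix to $\diag(\Phi_{11},\Phi_{22})$, each block being congruent to one of the two hypotheses.

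The one place where the write-up falls short of a proof is the non-generic case, and the specific mechanism you propose there would not close it as stated. Enlarging $\Lambda$ enlarges the cross terms as well, and on directions outside $\ker P+\ker Q$ neither hypothesis constrains $\Phi$ at all, so it is not true that an ``aggressive enough'' $\Lambda$ makes the cross terms dominated by negative diagonal blocks; likewise the two Finsler consequences $\Phi-\mu Q^\tran Q<0$ and $\Phi-\nu P^\tran P<0$ are individually correct but cannot simply be superposed to produce the required $\Lambda$. The standard completion refines your basis to the four subspaces $\ker P\cap\ker Q$, complements of this intersection inside $\ker P$ and inside $\ker Q$, and a complement of $\ker P+\ker Q$, and then determines $\Lambda$ blockwise by a Schur-complement computation (the Gahinet--Apkarian argument). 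Since you explicitly offer to cite \cite{LMI1D} for exactly this step --- which is all the paper itself does for the whole lemma --- the proposal is acceptable as a justification of the statement, but the scaling sketch should not be presented as if it already proved the degenerate case.
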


Now consider the 2D system \eqref{Eq:FMII} under the fault free situation and the corresponding  state estimation observer as given by,
\bs
\begin{align}\label{Eq:GeneralObserver}
	\hat{x}(i+1,j+1) = &(A_1+D_{o1}C)\hat{x}(i,j+1) + (A_2+D_{o2}C)\hat{x}(i+1,j) + B_1u(i,j+1)\notag\\ &-D_{o1}y(i,j+1)-D_{o2}y(i+1,j)+ B_2u(i+1,j),\notag\\
	\hat{y}(i,j) = &C\hat{x}(i,j).
\end{align}
\es
It follows readily that the state estimation error dynamics, as defined by $e(i,j)=x(i,j)-\hat{x}(i,j)$, is governed  by,
\begin{equation}\label{Eq:ObserverError}
	\begin{split}
		e(i+1,j+1) = &(A_1+D_{o1}C)e(i,j+1) + (A_2+D_{o2}C)e(i+1,j).
	\end{split}
\end{equation}
The following theorem and corollary provide an LMI-based condition for existence of the state estimation observer gains $D_{o1}$ and $D_{o2}$ such that the error dynamics \eqref{Eq:ObserverError} is asymptotically stable.
\begin{theorem}\label{Thm:ObserverGain2LMIs}
	Consider the 2D system \eqref{Eq:FMII} under the fault free situation.  There exist two maps $D_{o1},D_{o2}:\fld{R}^q\rightarrow\fld{R}^n$ and  two symmetric positive definite matrices $R_1$ and $R_2$ such that the LMI \eqref{Eq:2DLyapunov} is satisfied for $A_1+D_{o1}C$ and $A_2+D_{o2}C$ if and only if $R_1$ and $R_2$ satisfy the LMI condition $W_{c_d}^\tran A_c W_{c_d}<0$, where $W_{c_d}= \diag(W_c,W_c)$ and the columns of $W_c\in\fld{R}^{n\times(n-q)}$ are the basis of $\ker C$.
\end{theorem}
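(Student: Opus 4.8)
The plan is to reduce the existence question to the elimination lemma, Lemma \ref{Lem:2DLMIBasic}, applied to the single design variable $D\triangleq[D_{o1}~~D_{o2}]$. First I would invoke Lemma \ref{Lem:2DLyapunov} for the error dynamics \eqref{Eq:ObserverError}: asymptotic stability is guaranteed once there exist $R_1,R_2>0$ with $\bar{A}^\tran(R_1+R_2)\bar{A}-R<0$, where $\bar{A}\triangleq[\,A_1+D_{o1}C~~A_2+D_{o2}C\,]$ and $R=\diag(R_1,R_2)$. Writing $\mathcal{C}\triangleq\diag(C,C)$ and $S\triangleq R_1+R_2$, we have $\bar{A}=A+D\mathcal{C}$, so this inequality is \emph{quadratic} in the unknown $D$ and cannot be fed directly into Lemma \ref{Lem:2DLMIBasic}.

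The key step is to linearize the inequality in $D$ by a Schur complement. Since $S>0$, the condition $\bar{A}^\tran S\bar{A}-R<0$ is equivalent to
\begin{equation*}
\bbm -R & \bar{A}^\tran \\ \bar{A} & -S^{-1}\ebm<0 .
\end{equation*}
This matrix is now affine in $D$, and I would split it as $\Phi+P^\tran\Lambda^\tran Q+Q^\tran\Lambda P$ with $\Lambda=D$ and
\begin{equation*}
\Phi=\bbm -R & A^\tran \\ A & -S^{-1}\ebm,\quad Q=\bbm 0 & I_n\ebm,\quad P=\bbm \mathcal{C} & 0\ebm,
\end{equation*}
so that the off-diagonal injection $Q^\tran\Lambda P$ reproduces exactly $D\mathcal{C}$ in the $(2,1)$ block. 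Lemma \ref{Lem:2DLMIBasic} then asserts that a suitable $D$ (hence $D_{o1},D_{o2}$) exists if and only if $W_Q^\tran\Phi W_Q<0$ and $W_P^\tran\Phi W_P<0$, where the columns of $W_Q$ and $W_P$ span $\ker Q$ and $\ker P$.

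It remains to evaluate the two kernel conditions. A basis of $\ker Q$ is $W_Q=[\,I_{2n}~~0\,]^\tran$, which gives $W_Q^\tran\Phi W_Q=-R<0$; this is automatic because $R_1,R_2>0$, so it imposes no constraint. For $P$ we have $\ker P=\ker\mathcal{C}\times\fld{R}^n=(\ker C\times\ker C)\times\fld{R}^n$, whose basis is $W_P=\diag(W_{c_d},I_n)$ with $W_{c_d}=\diag(W_c,W_c)$. A direct block computation yields
\begin{equation*}
W_P^\tran\Phi W_P=\bbm -W_{c_d}^\tran R W_{c_d} & W_{c_d}^\tran A^\tran \\ A W_{c_d} & -S^{-1}\ebm ,
\end{equation*}
and applying the Schur complement once more (using $-S^{-1}<0$) reduces $W_P^\tran\Phi W_P<0$ to $W_{c_d}^\tran(A^\tran S A-R)W_{c_d}<0$, i.e. $W_{c_d}^\tran A_c W_{c_d}<0$, which is precisely the claimed condition (for fixed $R_1,R_2$, with $A_c=A^\tran(R_1+R_2)A-R$).

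I expect the main obstacle to be organizational rather than deep: the stability inequality is quadratic in the gains, so the whole argument hinges on (i) introducing the auxiliary block through the first Schur complement to render it affine, (ii) correctly identifying the structured triple $(\Phi,P,Q)$ so that $Q^\tran\Lambda P$ injects $D\mathcal{C}$ into the correct block, and (iii) eliminating the reintroduced $S^{-1}$ by a second Schur complement to land back on $A_c$. Additional care is needed to confirm that the $\ker Q$ condition is vacuous—so that only the $\ker C$ directions survive—and that the basis of $\ker\mathcal{C}$ is exactly $W_{c_d}$.
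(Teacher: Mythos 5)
Your proposal is correct and follows essentially the same route as the paper: linearize the quadratic-in-$D$ stability inequality via a Schur complement, apply the elimination lemma (Lemma \ref{Lem:2DLMIBasic}) to the affine form with $\Lambda=[D_{o1}~~D_{o2}]$, observe that the $\ker Q$ condition reduces to $-R<0$ and is vacuous, and Schur-complement the $\ker P$ condition back down to $W_{c_d}^\tran A_c W_{c_d}<0$. The only difference is a cosmetic permutation of the blocks of $\Phi$ (and hence of the roles of $P$ and $Q$), and your write-up is actually cleaner on signs than equation \eqref{Eq:ObserverDesignLMI_Gain} in the paper.
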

\begin{proof}
	Note that  without loss of any generality, it is assumed that $C$ is full row rank, and $n>q$ that is equivalent to partial state measurement. Let $\Phi = \bbm -(R_1+R_2)^{-1}  &A \\ A^\tran &-R\ebm$. By using the Schur complement lemma, we have $W_{c_d}^\tran A_c W_{c_d}<0$ if and only if,
	\begin{equation}\label{Eq:ConLMI2DTemp}
		\bbm -(R_1+R_2)^{-1}  & A W_{c_d}\\ W_{c_d}^\tran A^\tran  &-W_{c_d}^\tran R W_{c_d}\ebm = \bbm I &0\\ 0 &W_{c_d}^\tran\ebm \Phi \bbm I &0\\ 0 &W_{c_d}\ebm<0.
	\end{equation}
	It follows that
	$\bbm 0_{2n\times n} &I_{2n\times 2n}\ebm \Phi \bbm 0_{n\times 2n} \\I_{2n\times 2n}\ebm = -R<0$ if and only if $R>0$ (or $R_1>0$ and $R_2>0$). By defining $P =\bbm 0_{2q\times n} &C_d\ebm$ and $Q =\bbm I_{n\times n} &0_{n\times 2n}\ebm$, where $C_d = \diag(C,C)$, and using Lemma \ref{Lem:2DLMIBasic} the LMI condition \eqref{Eq:ConLMI2DTemp} is satisfied if and only if there exits a matrix $\Lambda=\bbm D_{o1} &D_{o2}\ebm\in\fld{R}^{n\times 2q}$ such that,
	\bs
	\begin{equation}\label{Eq:ObserverDesignLMI_Gain}
		\begin{split}
			&\Phi + \bbm 0_{n\times 2q} \\ C_d^\tran\ebm \Lambda^\tran \bbm I_{n\times n} &0_{n\times 2n}\ebm +  \bbm I_{n\times n}\\0_{2n\times n}\ebm \Lambda \bbm 0_{2q\times n} &C_d\ebm  = \bbm (R_1+R_2)^{-1} &G\\ G^\tran &R\ebm <0,
		\end{split}
	\end{equation}
	\es
	where $G = \bbm A_1+D_{o1}C &A_2+D_{o2}C\ebm$. Again, by using the Schur complement lemma, we have
	$G^\tran (R_1+R_2)G-R<0$. This completes the proof of the theorem.
\end{proof}
An important corollary to the above theorem and Lemma \ref{Lem:2DLyapunov} can be stated as follows.
\begin{corollary}\label{Col:ObserverDesignLMI}
	Consider the 2D system \eqref{Eq:FMII} under the fault free situation and the state estimation observer \eqref{Eq:GeneralObserver}. If there are two symmetric positive definite matrices $R_1$ and $R_2$ satisfying the LMI condition $W_{c_d}^\tran A_c W_{c_d}<0$, then there exists two maps $D_{o1}$ and $D_{o2}$ such that the error dynamics \eqref{Eq:ObserverError} is asymptotically stable.
\end{corollary}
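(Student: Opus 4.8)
The plan is to obtain this result as an immediate consequence of Theorem \ref{Thm:ObserverGain2LMIs} together with the 2D Lyapunov stability criterion of Lemma \ref{Lem:2DLyapunov}, so that essentially no fresh computation is required. The key observation is that the error dynamics \eqref{Eq:ObserverError} is itself a fault-free FMII system whose state matrices are precisely $A_1+D_{o1}C$ and $A_2+D_{o2}C$. Hence, to establish asymptotic stability of \eqref{Eq:ObserverError} it suffices to exhibit two symmetric positive definite matrices for which the Lyapunov inequality \eqref{Eq:2DLyapunov} holds with $A=[\,A_1+D_{o1}C~~A_2+D_{o2}C\,]$.

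First I would invoke the sufficiency direction (the \emph{if} part) of Theorem \ref{Thm:ObserverGain2LMIs}. By hypothesis the given matrices $R_1$ and $R_2$ satisfy $W_{c_d}^\tran A_c W_{c_d}<0$; the theorem then guarantees the existence of maps $D_{o1}$ and $D_{o2}$ for which the inequality \eqref{Eq:2DLyapunov} is satisfied with $A_1$ and $A_2$ replaced by $A_1+D_{o1}C$ and $A_2+D_{o2}C$. Concretely, this is exactly the inequality $G^\tran(R_1+R_2)G-R<0$ with $G=[\,A_1+D_{o1}C~~A_2+D_{o2}C\,]$ and $R=\diag(R_1,R_2)$, which is produced at the end of the proof of that theorem.

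Second, I would apply Lemma \ref{Lem:2DLyapunov} to the error recursion \eqref{Eq:ObserverError}. Since that recursion is a fault-free FMII model with state matrices $A_1+D_{o1}C$ and $A_2+D_{o2}C$, and since $R_1,R_2$ are symmetric positive definite and satisfy the corresponding Lyapunov inequality just obtained, the lemma yields asymptotic stability of \eqref{Eq:ObserverError}, which is precisely the assertion of the corollary.

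Because both ingredients are already in hand, there is no genuine obstacle here; the only point that needs care is the bookkeeping that identifies the closed-loop matrix $G$ appearing in Theorem \ref{Thm:ObserverGain2LMIs} with the pair $(A_1+D_{o1}C,\,A_2+D_{o2}C)$ driving the error dynamics \eqref{Eq:ObserverError}, so that the Lyapunov certificate furnished by the theorem is exactly the one demanded by Lemma \ref{Lem:2DLyapunov}. I would also note in passing that the full \emph{if and only if} strength of Theorem \ref{Thm:ObserverGain2LMIs} is not needed here --- only its sufficiency direction is used, to construct the gains $D_{o1}$ and $D_{o2}$ from the feasibility of $W_{c_d}^\tran A_c W_{c_d}<0$.
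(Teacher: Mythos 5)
Your proposal is correct and follows exactly the route the paper takes: its proof of Corollary \ref{Col:ObserverDesignLMI} simply states that the result follows from Theorem \ref{Thm:ObserverGain2LMIs} and Lemma \ref{Lem:2DLyapunov}, which is precisely the two-step argument you spell out. Your additional bookkeeping identifying $G=[\,A_1+D_{o1}C~~A_2+D_{o2}C\,]$ with the state matrices of the error dynamics \eqref{Eq:ObserverError} is the detail the paper omits "for brevity," and it is handled correctly.
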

\begin{proof}
	Follow directly from Theorem \ref{Thm:ObserverGain2LMIs} and Lemma \ref{Lem:2DLyapunov}, and the details are omitted for sake of brevity.
\end{proof}
\begin{remark}\label{Rem:ObserverLMIConstructive}
	Note that by solving the LMI condition $W_{c_d}^\tran A_c W_{c_d}<0$, one can obtain symmetric  positive definite matrices $R_1$ and $R_2$. Hence, the state estimation observer gains $D_{o1}$ and $D_{o2}$ are computed by solving the equation \eqref{Eq:ObserverDesignLMI_Gain} (which is an LMI condition in terms of the gains $D_{o1}$ and $D_{o2}$). Therefore, Corollary \ref{Col:ObserverDesignLMI} not only provides sufficient conditions for existence of a state estimation observer, but also provides an approach for computing the observer gains $D_{o1}$ and $D_{o2}$. \qed
\end{remark}
The results of this section will now be used subsequently in Sections \ref{Sec:UnObservable} and \ref{Sec:FDI} to address the unobservable subspace of the system \eqref{Eq:FMII} as well as to provide sufficient conditions for  solvability of the FDI problem, respectively.

\section{Invariant Subspaces for 2D FMII Models}\label{Sec:InvSpace}
As described earlier, 2D systems can be represented as \infd systems (i.e. the initial condition is a vector of an Inf-D subspace). In this section, we first use the Inf-D representation \eqref{Eq:IDRep} to formally define and construct an unobservable subspace. Next, we define a subspace of  the unobservable subspace  (this we called as an invariant unobservable subspace) of the 2D system \eqref{Eq:FMII} that can be represented as an infinite sum of the same finite dimensional subspaces. Therefore, one can compute the invariant unobservable subspace (that is, the \infd subspace) in a \underline{finite} number of steps. Also, it is shown that the invariant unobservable subspace enjoys an important geometric property that is crucial for solving the FDI problem.

\subsection{Unobservable Subspace}\label{Sec:UnObservable}

As described in the previous section, 2D systems can be represented as Inf-D systems. In this subsection, the Inf-D representation \eqref{Eq:IDRep} is utilized to formally define and construct an unobservable subspace.


The unobservable subspace of the system \eqref{Eq:IDRep} (and consequently of the system \eqref{Eq:FMII}) is defined as,
\begin{equation}\label{Eq:UnobserSpaceGeneral}
	\ssp{N}_g =\bigcap_{k=0}^\infty\ker\op{C}\op{A}^k,
\end{equation}
where $\op{A}$ and $\op{C}$ are defined as in equation \eqref{Eq:IDRep}. Note that we define the above unobservable subspace by following along the steps in \cite{Curtain_1986}, the results in \cite{Zwart_Book} (Chapter I), and the fact that the operator $\op{A}$ in equation \eqref{Eq:IDRep} is bounded (refer to Lemma \ref{Lem:Boundedness}).

One of the main difficulties in geometric analysis of Inf-D systems is the convergence of any developed algorithm that involves computation of certain set of subspaces in a \underline{finite} number of steps. For example, consider the unobservable subspace \eqref{Eq:UnobserSpaceGeneral}. In Fin-D systems, the algorithm for computing the unobservable subspace converges in a finite number of steps \cite{Wonham_Book}. Moreover, one is generally interested in investigating the FMII models in a Fin-D representation \eqref{Eq:FMII}.  Motivated by the above, below two important subspaces of $\ssp{N}_g$ that are denoted by $\ssp{N}_\infty$ and $\ssp{N}_{s,\infty}$ are introduced. The subspaces $\ssp{N}_\infty$ and $\ssp{N}_{s,\infty}$  can be computed in a \underline{finite} number of steps and also allows one to derive necessary and sufficient conditions for solvability of the FDI problem.

Consider the initial condition ${\bf x}(0) = (\cdots,0,x_0,0,\cdots)$ and $u(i,j) = \left\{
\begin{array} {c l}
u_0 &;\; i=j=0\\
0 &;\; \mathrm{otherwise}
\end{array}\right.$, where $u_0\in\fld{R}^m$. One can show that the  state solution of the model \eqref{Eq:FMII} under the fault free situation is given by \cite{FMinBook},
\vspace{-1mm}
\begin{eqnarray}\label{Eq:FMIISol}
	\begin{split}
		x(i,j) = A^{(i,j)}x_0 + A^{(i,j)}_B u_0,
	\end{split}
\end{eqnarray}
where the matrices $A^{(i,j)}$'s and $A^{(i,j)}_B$'s are defined by the following recursive  expressions,
\vspace{-2mm}
\begin{eqnarray}\label{Eq: Adef1}
	\begin{split}
		A^{(i,j)} &= A_1 A^{(i-1,j)} + A_2A^{(i,j-1)},\;\; A^{(i,j)} = 0 \ \ \  \mathrm{if} \ i \ \mathrm{or} \ j < 0,\\
		A^{(i,j)}_B &=A^{(i-1,j)}B_1+A^{(i,j-1)}B_2,\ A^{(0,0)} = I.
	\end{split}
\end{eqnarray}
Based on the solution that is given by equation  \eqref{Eq:FMIISol}, and considering that $u_0=0$, a finite observability matrix (given that its null space is a finite dimensional subspace) can be defined as follows,
\begin{equation}\label{Eq:Observ_M}
	O = \begin{bmatrix}C^\tran, (CA_1)^\tran, (CA_2)^\tran, \cdots, (CA^{(i,j)})^\tran, \cdots\end{bmatrix}^\tran.
\end{equation}
Let $\ssp{N} = \ker\;O = \bigcap_{i,j\geq 0} \big(\ker CA^{(i,j)}\big)$. Since $\dim(\ssp{N})\le n<\infty$, we designate $\ssp{N}$ as the \emph{finite} unobservable subspace of the system \eqref{Eq:FMII}.  Also, recall from the 2D Cayley-Hamilton theorem \cite{FMinBook} that for all $i+j\geq n$, one sets $A^{(i,j)}=\sum_{h+k<n}\zeta_{h,k} A^{(h,k)}$, where $\zeta_{h,k}$'s are real numbers. Therefore, for all $i+j\geq n$, $\bigcap_{h+k<n}\ker CA^{(h,k)}\subseteq\ker CA^{(i,j)}$, and consequently $\ssp{N}$ can be computed in a \underline{finite} number of steps as,
\begin{equation}\label{Eq:UnObsSpaceFinite}
	\ssp{N} = \ker\;O = \bigcap_{i,j\geq 0,\;i+j<n} \big(\ker CA^{(i,j)}\big).
\end{equation}

Now, we consider the following subspace,
\begin{equation}\label{Eq:UnobsSpaceInfinite}
	\ssp{N}_\infty = \sumbanach{\ssp{N}}
\end{equation}
It follows that if ${\bf x}_0 = (\cdots,x_{-1},x_0,x_{1},\cdots)\in\ssp{N}_\infty$, then $x_{i}\in\ssp{N}$ for all $i\in\fld{Z}$, and given the zero input assumption one gets ${\bf y}(k)=0$ for all $k\in\underline{\fld{N}}$ (in equation \eqref{Eq:IDRep}). By considering $\op{A}^k$, where $\op{A}$ is defined as in equation \eqref{Eq:IDRep} and $k\in\fld{N}$, it can be shown that $\ssp{N}_\infty\subseteq\ssp{N}_g$. Also, note that although $\ssp{N}_\infty$ is an \infd subspace, it can be computed in a \underline{finite} number of steps (one only needs to compute $\ssp{N}$). However, as explained in \cite{ACC2013,ACC2014} the invariance property (this is addressed in the next subsection) of $\ssp{N}$ is not lucid. Therefore, in the following a subspace of $\ssp{N}$ is introduced such that it enjoys this geometric property. To define the subspace $\ssp{N}_s$ one needs the following notation.

Let us express $A^{\alpha}$ to denote the sequence of multiplications of $A_1$ and $A_2$,  where $\alpha$ is a multi-index parameter that specifies the sequence of the multiplication.  For example, consider $A^\alpha = A_2A_1A_1A_2A_1$, where we have $\alpha = (2,1,1,2,1)$.
The notation $||\alpha||$ denotes the number of all $A_1$ and $A_2$ that are involved in the corresponding multiplication (for the above example, we have $||\alpha|| = 5$). Now, consider the following subspace (for more details on $\ssp{N}_s$ refer to \cite{ACC2013}),
\begin{equation}
	\ssp{N}_s = \bigcap_{||\alpha||<n}\ker CA^\alpha
\end{equation}

The following lemma shows that the subspace that is used in \cite{ntogramatzidis2012Siam, Malek_3DFDI, Malek_3DFDIConf} as the unobservable (non-observable) subspace is indeed $\ssp{N}_s$.

\begin{lemma}\label{Lm:NsisSub}
	The subspace $\ssp{N}_s$ can be computed in a finite number of steps according to the following algorithm,
	\begin{equation}\label{Eq:NsAlg}
		\begin{split}
			&\ssp{V}_0 = \ker C\;\;	\ssp{V}_k = A_1^{-1}\ssp{V}_{k-1}\cap A_2^{-1}\ssp{V}_{k-1}\cap\ker C.
		\end{split}
	\end{equation}
\end{lemma}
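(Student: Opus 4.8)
The goal is to show that the algorithm in \eqref{Eq:NsAlg} terminates and that its fixed point equals $\ssp{N}_s = \bigcap_{||\alpha||<n}\ker CA^\alpha$. The plan is to prove by induction on $k$ that $\ssp{V}_k = \bigcap_{||\alpha||\leq k}\ker CA^\alpha$, where the intersection runs over \emph{all} multi-indices $\alpha$ with $||\alpha||\leq k$ (including the empty word, for which $A^\alpha = I$ and $\ker C A^\alpha = \ker C$). Once this identity is established, the claim follows by showing the chain $\ssp{V}_0\supseteq\ssp{V}_1\supseteq\cdots$ stabilizes after at most $n$ steps, so that the finite computation \eqref{Eq:NsAlg} reaches exactly $\ssp{N}_s$.

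\textbf{The induction step.} First I would verify the base case: $\ssp{V}_0 = \ker C = \bigcap_{||\alpha||\leq 0}\ker CA^\alpha$, since the only word of length $0$ is the empty one. For the inductive step, assume $\ssp{V}_{k-1} = \bigcap_{||\alpha||\leq k-1}\ker CA^\alpha$. The key observation is that every word $\beta$ with $||\beta|| \leq k$ is either the empty word, or begins with $A_1$ or $A_2$, i.e. $A^\beta = A^\alpha A_i$ for some $i\in\{1,2\}$ and some word $\alpha$ with $||\alpha||\leq k-1$. Hence
\begin{equation}
\bigcap_{||\beta||\leq k}\ker CA^\beta = \ker C \;\cap\; \Big(\bigcap_{||\alpha||\leq k-1}\ker CA^\alpha A_1\Big)\;\cap\;\Big(\bigcap_{||\alpha||\leq k-1}\ker CA^\alpha A_2\Big).
\end{equation}
Now $x\in\ker CA^\alpha A_1$ for all such $\alpha$ is equivalent to $A_1 x\in\bigcap_{||\alpha||\leq k-1}\ker CA^\alpha = \ssp{V}_{k-1}$, i.e. $x\in A_1^{-1}\ssp{V}_{k-1}$, and similarly for $A_2$. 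Substituting gives $\bigcap_{||\beta||\leq k}\ker CA^\beta = \ker C\cap A_1^{-1}\ssp{V}_{k-1}\cap A_2^{-1}\ssp{V}_{k-1} = \ssp{V}_k$, which closes the induction.

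\textbf{Termination.} The sequence $\ssp{V}_k$ is non-increasing (each $\ssp{V}_k\subseteq\ssp{V}_{k-1}$ since the intersection defining $\ssp{V}_k$ adds constraints to $\ssp{V}_{k-1}$), and lives in $\fld{R}^n$, so dimensions form a non-increasing sequence of non-negative integers. I would argue that if $\ssp{V}_k = \ssp{V}_{k-1}$ for some $k$, then $\ssp{V}_{k+1} = \ssp{V}_k$ as well — because $\ssp{V}_{k+1}$ depends only on $\ssp{V}_k$ through the same recursion — so the chain stabilizes at the first repetition. Since a strictly decreasing chain of subspaces of $\fld{R}^n$ can have length at most $n+1$, stabilization occurs by step $n$ at the latest, and the stable value is $\bigcap_{k\geq 0}\ssp{V}_k = \bigcap_{\text{all }\alpha}\ker CA^\alpha = \ssp{N}_s$, where the last equality uses that words of length $\geq n$ contribute nothing new (this is the same Cayley--Hamilton--type reduction already invoked for $\ssp{N}$ in \eqref{Eq:UnObsSpaceFinite}, giving $\bigcap_{||\alpha||<n}\ker CA^\alpha = \bigcap_{\text{all }\alpha}\ker CA^\alpha$).

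\textbf{Main obstacle.} The routine parts are the induction and the dimension-counting; the one step deserving care is the reduction justifying that truncating at $||\alpha||<n$ loses nothing, i.e. that $\ker CA^\alpha$ for long words $\alpha$ already contains $\bigcap_{||\beta||<n}\ker CA^\beta$. This must be handled with the non-commutative 2D Cayley--Hamilton argument (each long product $A^\alpha$ expressible via shorter ones acting on the already-intersected subspace), paralleling but not identical to the scalar-coefficient reduction used for $A^{(i,j)}$, since here the monomials $A^\alpha$ are ordered words rather than the symmetric combinations $A^{(i,j)}$.
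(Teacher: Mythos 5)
Your proof is correct and follows essentially the same route as the paper's: both unroll the recursion into $\ssp{V}_k=\bigcap_{||\alpha||\leq k}\ker CA^\alpha$ via the identity $\ker CF=F^{-1}\ker C$ and conclude by finite termination of the descending chain (the paper merely states this identity after computing $\ssp{V}_1,\ssp{V}_2$, whereas you supply the induction and the stabilization count explicitly). The one remark worth making is that your flagged ``main obstacle'' is not actually an obstacle: the equality $\bigcap_{||\alpha||<n}\ker CA^\alpha=\bigcap_{\text{all }\alpha}\ker CA^\alpha$ already follows from your own stabilization argument (a strictly decreasing chain of subspaces of $\ker C\subseteq\fld{R}^n$ must stop within $n$ steps, so $\ssp{V}_{n-1}=\ssp{V}_{n}=\cdots=\bigcap_{k}\ssp{V}_k$), and no noncommutative Cayley--Hamilton reduction is needed.
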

\begin{proof}
	First, note that $\ssp{V}_1 = A_1^{-1}\ker C\cap A_2^{-1}\ker C\cap\ker C$ and $\ssp{V}_2 = \ssp{V}_1 \cap (A_1A_2)^{-1}\ker C \cap (A_2A_1)^{-1}\ker C\cap(A_1^2)^{-1}\ker C \cap (A_2^2)^{-1}\ker C$. In other words, $\ssp{V}_k=\bigcap_{||\alpha||\leq k}(A^{\alpha})^{-1}\ker C$. Note that for every pair of operators $C:\fld{R}^n\rightarrow \fld{R}^q$ and $F:\fld{R}^n\rightarrow \fld{R}^n$, one can show that $\ker CF=F^{-1}\ker C$. Therefore, it follows that $\ssp{V}_n = \ssp{N}_s$. This completes the proof of the lemma.
\end{proof}

Now, we set $\ssp{N}_{s,\infty} = \sumbanach{\ssp{N}_s}$. Note that although $\dim({N}_{s,\infty})=\infty$, one can compute it in a finite number of step (by computing $\ssp{N}_s$).

\subsection{$A_{1,2}$-Invariant Subspaces}\label{Sec:A12Inv}
As stated in the Subsection \ref{Sec:IDRep}, the 2D system \eqref{Eq:FMII} can be represented as an \infd system \eqref{Eq:IDRep}. In order to formulate the corresponding Inf-D invariant subspaces one needs the next two definitions.
\begin{definition}\cite{Zwart_Book}\label{Def:AID_Inv}
	Consider the \infd system \eqref{Eq:IDRep}, where the operator $\op{A}$ is bounded (according to Lemma \ref{Lem:Boundedness}). The closed subspace $\ssp{V}_\infty\in\op{X} = \sum\fld{R}^n$ is called $\op{A}$-invariant if $\op{A}\ssp{V}_\infty\subseteq\ssp{V}_\infty$.\qed
\end{definition}

\begin{definition}\cite{conte1988GeometryConf}\label{Def:A12_Inv}
	The subspace $\ssp{V}\subset\fld{R}^{n}$  is said to be an {\it $A_{1,2}$-invariant subspace} for the 2D system \eqref{Eq:FMII} if
	$A_1\ssp{V} + A_2\ssp{V}\subseteq \ssp{V}$,
	where $A_1$ and $A_2$ are defined as in equation \eqref{Eq:FMII}.\qed
\end{definition}
Note that $\ssp{V}$ is $A_{1,2}$-invariant if and only if it is invariant with respect to $A_1$ \underline{and} $A_2$ (i.e. $A_1\ssp{V}\subseteq\ssp{V}$ and $A_2\ssp{V}\subseteq\ssp{V}$).
The following theorem provides the connection between the Definitions \ref{Def:AID_Inv} and \ref{Def:A12_Inv}.
\begin{theorem}\label{Thm:AID_A12Inv}
	Consider the 2D system \eqref{Eq:FMII} and the \infd system \eqref{Eq:IDRep}. Let $\ssp{V}_\infty = \sum \ssp{V}$, where $\ssp{V}\subseteq\fld{R}^n$. The subspace $\ssp{V}_\infty$ is $\op{A}$-invariant if and only if $\ssp{V}$ is $A_{1,2}$-invariant.
\end{theorem}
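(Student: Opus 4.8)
The plan is to prove both implications by unwinding the componentwise action of $\op{A}$. Recall from the proof of Lemma~\ref{Lem:Boundedness} that for ${\bf x} = (x_k)_{k\in\fld{Z}}\in\op{X}$ the operator acts as $(\op{A}{\bf x})_k = A_1 x_k + A_2 x_{k+1}$. The entire argument then reduces to reading off this identity for suitably chosen ${\bf x}\in\ssp{V}_\infty$.

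For the sufficiency direction I would assume that $\ssp{V}$ is $A_{1,2}$-invariant, so that $A_1\ssp{V}\subseteq\ssp{V}$ and $A_2\ssp{V}\subseteq\ssp{V}$ (using the equivalence noted immediately after Definition~\ref{Def:A12_Inv}). Taking any ${\bf x}\in\ssp{V}_\infty$, each $x_k\in\ssp{V}$, whence $A_1 x_k\in\ssp{V}$ and $A_2 x_{k+1}\in\ssp{V}$ by the invariance hypothesis, and since $\ssp{V}$ is closed under addition we get $(\op{A}{\bf x})_k = A_1 x_k + A_2 x_{k+1}\in\ssp{V}$ for every $k$. Boundedness of $\op{A}$ (Lemma~\ref{Lem:Boundedness}) guarantees that $|\op{A}{\bf x}|_\infty<\infty$, so $\op{A}{\bf x}\in\ssp{V}_\infty$, and therefore $\op{A}\ssp{V}_\infty\subseteq\ssp{V}_\infty$.

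For the necessity direction, which I expect to be the crux, the difficulty is that the componentwise formula couples $A_1$ and $A_2$ at neighbouring indices, so from $\op{A}$-invariance alone one must separate their effects. I would do this with a unit-impulse test vector: given $v\in\ssp{V}$, set ${\bf x} = {\bf x}_0^0$ with $x_0 = v$ and $x_k = 0$ for $k\neq 0$, which plainly lies in $\ssp{V}_\infty$. Evaluating the action then gives $(\op{A}{\bf x})_0 = A_1 v$, $(\op{A}{\bf x})_{-1} = A_2 v$, and $(\op{A}{\bf x})_k = 0$ for every other index $k$. Since $\op{A}$-invariance forces $\op{A}{\bf x}\in\ssp{V}_\infty$, each of its components must lie in $\ssp{V}$; reading off the $0$-th and $(-1)$-th components yields $A_1 v\in\ssp{V}$ and $A_2 v\in\ssp{V}$. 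As $v\in\ssp{V}$ was arbitrary, we obtain $A_1\ssp{V}\subseteq\ssp{V}$ and $A_2\ssp{V}\subseteq\ssp{V}$, that is, $\ssp{V}$ is $A_{1,2}$-invariant.

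One remaining technical point is that Definition~\ref{Def:AID_Inv} speaks of $\ssp{V}_\infty$ as a \emph{closed} subspace. I would dispatch this by noting that since $\ssp{V}\subseteq\fld{R}^n$ is finite dimensional, and hence closed, the sum $\ssp{V}_\infty = \sumbanach{\ssp{V}}$ is a closed subspace of the Banach space $\op{X}$, so the definition applies directly. The essential obstacle is thus the necessity direction, where the impulse-vector construction is exactly what decouples the diagonal ($A_1$) and super-diagonal ($A_2$) contributions of $\op{A}$.
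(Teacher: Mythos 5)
Your proof is correct and follows essentially the same route as the paper's: the necessity direction uses exactly the same impulse vector ${\bf x}_0^0$ to read off $A_1v$ and $A_2v$ from two adjacent components of $\op{A}{\bf x}_0^0$. Your sufficiency argument works componentwise on an arbitrary ${\bf x}\in\ssp{V}_\infty$ rather than first decomposing it as ${\bf x}=\sum_k {\bf x}_k^k$ as the paper does, which is if anything slightly cleaner, since that infinite decomposition need not converge in the sup norm.
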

\begin{proof}
	First, note that every ${\bf x}\in\ssp{V}_\infty$ can be expressed as ${\bf x}=\sum_{k=-\infty}^\infty {\bf x}_k^k$, where ${\bf x}_k^k = \bbm \cdots,0,0,x_k^\tran,0,0\cdots\ebm^\tran\in\ssp{V}_\infty$ and $x_k\in\ssp{V}$. Therefore, one only needs to show the result for ${\bf x}_k^k$.\\
	({\bf If part}): Assume $\ssp{V}$ is $A_{1,2}$-invariant. Consider the Inf-D vector ${\bf x}_k^k$. It follows that $\op{A}{\bf x}_k^k= [\cdots, 0, (A_2x_k)^\tran$ $, (A_1x_k)^\tran,$ $ 0, 0\cdots]^\tran$. Since $\ssp{V}$ is $A_{1,2}$-invariant, it follows that $\op{A}{\bf x}_k^k\in\ssp{V}_\infty$. \\
	({\bf Only if part}): Let $\op{A}\ssp{V}_\infty\subseteq\ssp{V}_\infty$ and ${\bf x}_0^0\in\ssp{V}_\infty$. Consequently, $x_0\in\ssp{V}$. Since $\op{A}{\bf x}_0^0= [ \cdots,0 , $ $ (A_2x_0))^\tran, (A_1x_0)^\tran, 0, 0, \cdots]^\tran\in\ssp{V}_\infty$, it follows that $A_1x_0\in\ssp{V}$ and $A_2x_0\in\ssp{V}$, and consequently $\ssp{V}$ is $A_{1,2}$-invariant. This completes the proof of the theorem.
\end{proof}

Consider the subspaces $\ssp{V}\subseteq\fld{R}^n$ and $\ssp{C}\subseteq\fld{R}^n$. If $\ssp{V}$ is the largest $A_{1,2}$-invariant subspace that is contained in $\ssp{C}$, we denote $\ssp{V} = <\ssp{C}|A_{1,2}>$. We have shown in \cite{ACC2013} that $\ssp{N}_s\subseteq\ssp{N}$, and it is the largest $A_{1,2}$-invariant subspace that is contained in $\ker C$. Therefore,  one can write $\ssp{N}_s = <\ker C|A_{1,2}>$. Since $\ssp{N}_s$ is $A_{1,2}$-invariant, by Theorem \ref{Thm:AID_A12Inv}, $\ssp{N}_{s,\infty}$ is $\op{A}$-invariant. Therefore, if ${\bf x}(0) = (\cdots,x_{-1},x_0,x_1,\cdots)\in\ssp{N}_{s,\infty}$ (that is, $x_i\in\ssp{N}_s$ for all $i\in\fld{Z}$) and zero input, ${\bf x}(k)\in\ssp{N}_{s,\infty}$ for all $i\in\fld{Z}$ and ${\bf y}(k)=0$ for all $k\in\underline{\fld{N}}$ (in equation \eqref{Eq:IDRep}). We designate  $\ssp{N}_{s,\infty}$ as the invariant unobservable subspace.
\begin{remark}\label{Rem:InitCond}
	As stated in Subsection \ref{Sec:IDRep}, there are two different types of initial condition formulations. In this paper, we use the first formulation that is compatible with the Inf-D system \eqref{Eq:IDRep}. Recall that the second formulation is expressed as $x(i,0)=h_1(i)$ and $x(0,j)=h_2(j)$, where $i,j\in\fld{N}$.  Now, let $x(i,0)\in\ssp{N}_s$ and $x(0,j)\in\ssp{N}_s$. The $A_{1,2}$-invariance property of $\ssp{N}_s$ verifies that $y(i,j)=0$. In other words, $\ssp{N}_{s,\infty}$ is also the invariant unobservable subspace of system \eqref{Eq:FMII} with the second formulation of the initial conditions. Therefore, without loss of any generality, one can apply our proposed approach to \underline{both} initial condition formulations as provided in Section \ref{Sec:Dis_2D}. Moreover, $\ssp{N}_{s,\infty}$ is the largest $\op{A}$-invariant in the form $\ssp{N}_{s,\infty}=\sumbanach{\ssp{N}_s}$ that is contained in $\ker\op{C}$, where $\op{C}$ is defined in \eqref{Eq:IDRep}.
\end{remark}

\subsection{Conditioned Invariant Subspaces}

Another important subspace in the geometric FDI toolbox is the conditioned invariant (i.e., the $(C,A_{1,2})$-invariant) subspace that is defined next. This definition is an extension of the one that has appeared and presented in \cite{conte1988GeometryConf} and \cite{conte1988GeometryArticle}.
\begin{definition}\label{Def:CondInv}
	The subspace $\ssp{W}_\infty=\sum \ssp{W}$ (where $\ssp{W}\subseteq\fld{R}^{n}$) is said to be the conditioned invariant subspace for the 2D system \eqref{Eq:FMII} if there exist two output injection maps $D_1,D_2: \fld{R}^{q}\rightarrow \fld{R}^{n}$ such that $ (A_1+D_1C)\ssp{W}\subseteq\ssp{W}$ and
	$(A_2+D_2C)\ssp{W}\subseteq\ssp{W}$. In other words, $\ssp{W}$ is  $[A+DC]_{1,2}$-invariant (i.e., invariant with respect to $A_1+D_1C$ and $A_2+D_2C$). We designate $\ssp{W}$ as the finite conditioned invariant subspace (since $\dim(\ssp{W})<\infty$) of the 2D system \eqref{Eq:FMII}.\qed
\end{definition}

Similar to 1D systems, one can now state the following result.
\begin{lemma}\label{Lem:CA_Inv}
	The following statements are equivalents.
	\begin{enumerate}
		\renewcommand{\labelenumi}{(\roman{enumi})}
		\item The subspace $\ssp{W}_\infty$ is conditioned invariant.
		\item  $A_1(\ssp{W}\cap\ker C)+A_2(\ssp{W}\cap\ker C)\subseteq\ssp{W}$.
		\item $\op{A}(\ssp{W}_\infty\cap\ker\op{C})\subseteq\ssp{W}_\infty$.
	\end{enumerate}
	where $\ssp{W}_\infty = \sum \ssp{W}$.
\end{lemma}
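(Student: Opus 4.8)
The plan is to prove the three statements equivalent by establishing a cycle of implications. Since statement (iii) is phrased in terms of the Inf-D operator $\op{A}$ and $\ker\op{C}$, whereas (i) and (ii) are finite-dimensional in character, the natural strategy is to first connect (i) with (ii) directly from Definition \ref{Def:CondInv}, and then use the sum-space structure $\ssp{W}_\infty = \sum\ssp{W}$ together with the block form of $\op{A}$ and $\op{C}$ (as in Lemma \ref{Lem:Boundedness} and Theorem \ref{Thm:AID_A12Inv}) to bridge to (iii). A convenient route is $(i)\Rightarrow(ii)\Rightarrow(iii)\Rightarrow(i)$, though $(i)\Leftrightarrow(iii)$ could also be handled as a near-immediate corollary of Theorem \ref{Thm:AID_A12Inv} applied to $A_i+D_iC$.

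First I would prove $(i)\Rightarrow(ii)$. Assuming $\ssp{W}_\infty$ is conditioned invariant, Definition \ref{Def:CondInv} supplies maps $D_1,D_2$ with $(A_1+D_1C)\ssp{W}\subseteq\ssp{W}$ and $(A_2+D_2C)\ssp{W}\subseteq\ssp{W}$. The key observation is that for any $w\in\ssp{W}\cap\ker C$ we have $Cw=0$, so $(A_i+D_iC)w = A_iw$ for $i=1,2$; hence $A_1w,A_2w\in\ssp{W}$, which gives $A_1(\ssp{W}\cap\ker C)+A_2(\ssp{W}\cap\ker C)\subseteq\ssp{W}$. This direction is essentially immediate.

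The converse $(ii)\Rightarrow(i)$ is the step I expect to be the main obstacle, since it requires \emph{constructing} the output injection maps $D_1,D_2$ from the geometric containment. The plan is the standard 1D construction adapted to the pair $(A_1,A_2)$: choose a basis of $\ssp{W}$ that extends a basis of $\ssp{W}\cap\ker C$, and define $D_i$ on $C\ssp{W}$ so that $(A_i+D_iC)$ maps the complementary basis vectors back into $\ssp{W}$, extending $D_i$ arbitrarily off $C\ssp{W}$. Concretely, since $C$ restricted to a complement of $\ssp{W}\cap\ker C$ inside $\ssp{W}$ is injective, one can solve $D_i(Cw_\ell) = -A_iw_\ell + (\text{something in }\ssp{W})$ for each complementary basis vector $w_\ell$; on $\ssp{W}\cap\ker C$ the hypothesis (ii) already guarantees $A_iw\in\ssp{W}$, so no constraint is violated. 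Verifying that the resulting $D_i$ are well defined (consistency on overlaps) and that $(A_i+D_iC)\ssp{W}\subseteq\ssp{W}$ holds on all of $\ssp{W}$ is the technical heart of the argument.

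Finally, $(i)\Leftrightarrow(iii)$ follows by invoking Theorem \ref{Thm:AID_A12Inv} with $A_i$ replaced by $A_i+D_iC$. The point is that the Inf-D operator associated with the pair $(A_1+D_1C,A_2+D_2C)$ is exactly $\op{A}+\op{D}\op{C}$ for a suitable block-diagonal $\op{D}=\diag(\cdots,D,D,\cdots)$, and its restriction to $\ssp{W}_\infty\cap\ker\op{C}$ reduces blockwise to $A_i$ acting on $\ssp{W}\cap\ker C$. Thus the containment $\op{A}(\ssp{W}_\infty\cap\ker\op{C})\subseteq\ssp{W}_\infty$ is the infinite-dimensional restatement of (ii), with the translation handled exactly as in the proof of Theorem \ref{Thm:AID_A12Inv} by reducing to single-block vectors ${\bf x}_k^k$. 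I would therefore close the cycle by noting that $\ker\op{C}=\sum\ker C$ and that a single-block vector ${\bf x}_k^k$ lies in $\ssp{W}_\infty\cap\ker\op{C}$ precisely when $x_k\in\ssp{W}\cap\ker C$, so that $\op{A}{\bf x}_k^k\in\ssp{W}_\infty$ is equivalent to $A_1x_k,A_2x_k\in\ssp{W}$, matching (ii).
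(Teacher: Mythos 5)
Your proof is correct, but it handles the crucial implication differently from the paper. The paper's proof of $(iii)\Rightarrow(i)$ stays entirely in the infinite-dimensional setting: it invokes the classical 1D result (valid for bounded operators on the Banach space $\op{X}$) to obtain an abstract bounded output injection $\op{D}$ with $(\op{A}+\op{D}\op{C})\ssp{W}_\infty\subseteq\ssp{W}_\infty$, and then asserts that $\op{D}$ may be taken in the two-band block form $\diag$-like structure built from finite-dimensional maps $D_1,D_2$, from which Theorem \ref{Thm:AID_A12Inv} recovers statement (i). You instead construct $D_1$ and $D_2$ explicitly at the finite-dimensional level in the step $(ii)\Rightarrow(i)$ (extending a basis of $\ssp{W}\cap\ker C$ to one of $\ssp{W}$, using injectivity of $C$ on the complement to solve $D_i(Cw_\ell)=-A_iw_\ell$), and only use the Inf-D machinery for the essentially bookkeeping equivalence $(ii)\Leftrightarrow(iii)$ via the blockwise structure of $\op{A}$, $\ker\op{C}$, and $\ssp{W}_\infty$. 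Your route is more self-contained and actually shores up the weakest point of the paper's argument, namely the unproved claim that the abstractly obtained Inf-D injection $\op{D}$ can be realized with the required block structure; the paper's route, by contrast, gets the finite-dimensional construction for free by citation but at the cost of that structural assertion. Both arguments are valid, and both share the same blockwise reduction to single-block vectors ${\bf x}_k^k$ when translating between the finite- and infinite-dimensional statements.
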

\begin{proof}
	$(i)\Leftrightarrow(ii)$ and $(i)\Leftrightarrow(iii)$: By definition, there exists two maps $D_1$ and $D_2$ such that $\ssp{W}$ is $[A+DC]_{1,2}$-invariant. By utilizing Theorem \ref{Thm:AID_A12Inv} $\ssp{W}_\infty$ is $\op{A}_d$-invariant, where
	\begin{equation}
	\op{A}_d = \bbm \ &\ddots &\ddots &\cdots & &\cdots\\
	\cdots &0 &A_1+D_1C &A_2+D_2C &0 &\cdots\\
	\cdots &0 &0 &A_1+D_1C &A_2+D_2C &\cdots\\
	\cdots & &\cdots & &\ddots &\ddots \ebm
	\end{equation}
	By following along the same lines as in Lemma \ref{Lem:Boundedness}, one can show that $\op{A}_d$ is bounded. Consequently, the result of 1D system is also valid for the Inf-D system \eqref{Eq:IDRep}. Hence, we have $\op{A}(\ssp{W}_\infty\cap\ker\op{C})\subseteq\ssp{W}_\infty$ (that shows $(i)\Leftrightarrow(iii)$). By considering the structure of $\op{A}$ and $\op{C}$ it follows that $A_1(\ssp{W}\cap\ker C)+A_2(\ssp{W}\cap\ker C)\subseteq\ssp{W}$.\\
	$(iii)\Leftrightarrow(i)$: Since $\op{A}$ is bounded, the domain of $\op{A}$ is equal to $\op{X}=\sum \fld{R}^n$, and therefore, the result of 1D system is also valid for the Inf-D system \eqref{Eq:IDRep}. Therefore, there exists a bounded operator $\op{D}$ such that $\ssp{W}_\infty$ is $\op{A+DC}$-invariant. By considering the structure of $\ssp{W}_\infty$ and $\ker\op{C}$, it is easy to show that one solution for $D$ is given by
	\begin{equation}
	\op{D} = \bbm \ &\ddots &\ddots &\cdots & &\cdots\\
	\cdots &0 &D_1 &D_2 &0 &\cdots\\
	\cdots &0 &0 &D_1 &D_2 &\cdots\\
	\cdots & &\cdots & &\ddots &\ddots \ebm
	\end{equation}
	Hence, by using Theorem \ref{Thm:AID_A12Inv}, the subspace $\ssp{W}$ is $[A+DC]_{1,2}$-invariant and consequently $\ssp{W}_\infty$ is a conditioned invariant subspace. This completes the proof of the lemma.
\end{proof}

In the geometric FDI approach, one is interested in conditioned invariant subspaces that are containing a given subspace \cite{Massoumnia1989}. Let us define all the conditioned invariant subspaces containing a subspace $\ssp{L}_\infty = \sum \ssp{L}$ ($\ssp{L}\subseteq\fld{R}^n$) as  $\mathfrak{Q}(\ssp{L}) = \{\ssp{W}_\infty| \ \exists\; D_1\;\mathrm{and}\; D_2 \ ;\; (A_i+D_iC)\ssp{W}\subseteq\ssp{W}\; \mathrm{and}\; \ssp{W}\supseteq\ssp{L} \}$.  It can be shown that for a given subspace $\ssp{L}_\infty$ (or $\ssp{L}$), the set $\mathfrak{Q}(\ssp{L})$ is closed under intersection, and hence the set $\mathfrak{Q}(\ssp{L})$ has a minimal member as $\ssp{W}_\infty^{*}(\ssp{L})$. The minimal conditioned invariant subspace containing a given subspace $\ssp{L_\infty}=\sum \ssp{L}$ (that is, $\ssp{W}_\infty^{*}(\ssp{L})$) is obtained by invoking the following non-decreasing algorithm that is provided below,
\begin{eqnarray}\label{Eq:CIAlg}
	\begin{split}
		\ssp{W}^0 &= \ssp{L},\\
		\ssp{W}^k &= \ssp{L} + \big(A_1(\ssp{W}^{k-1}\cap\ker C)+ A_2(\ssp{W}^{k-1}\cap\ker C)\big),
	\end{split}
\end{eqnarray}
and $\ssp{W}_\infty^* = \sum\ssp{W}^*$, where $\ssp{W}^*=\ssp{W}^{k_0}, k_0\leq n$. 

Note that the above algorithm converges in a \underline{finite} number of steps. Also, let $\ssp{W}$ be a finite conditioned invariant subspace.  The set of all maps $D = [D_1, D_2]$ such that $\ssp{W}$ is $[A+DC]_{1,2}$-invariant is designated by $\underline{D}(\ssp{W})$.
\subsection{Unobservability Subspace}
The unobservability subspace \cite{Mass_Thesis, DrMeskin_Delay} is the cornerstone of geometric FDI approach in 1D systems. The following definition generalizes and extends this concept to the FMII 2D models. The Fin-D notion of this definition, for the first time in the literature, was introduced and utilized in \cite{ACC2013} for the FDI problem of 2D Roesser systems. 
\begin{definition}\label{Def:Unobser}
	The subspace $\op{S}_\infty$ is said to be an unobservability subspace for the 2D system \eqref{Eq:FMII} if there are three maps $D_1,D_2,H$ such that
	$\op{S} = <\ker HC|[A+DC]_{1,2}>$ and $\op{S}_\infty =\sum\op{S}$. We designate $\op{S}$ as the finite unobservability subspace of the 2D system \eqref{Eq:FMII}.
	\qed
\end{definition}
Note that $\op{S}_\infty$ is also conditioned invariant subspace and a generic unobservable subspace of the system ($HC$, $A_1+D_1C$, $A_2+D_2C$).
For accomplishing the gaol of the FDI task, one first computes an unobservability subspace and then obtains the matrix $H$ \cite{Mass_Thesis}. Therefore, it is necessary to compute the unobservability subspace without having any knowledge of $H$. By following along the same lines as those used in \cite{ACC2013} (Section IV.D), one can show that the limit of the following algorithm is the smallest unobservability subspace $\op{S}^*$ (and consequently $\op{S}_\infty^*$) that contains a given subspace $\ssp{L}$, that is,
\begin{equation}\label{Eq:Unob_MainAlg}
	\begin{split}
		\ssp{Z}^0 &= \fld{R}^{n}\; \text{and}\;
		\ssp{Z}^{k} = \ssp{W}^{*} + \big( A_1^{-1}\ssp{Z}^{k-1} \cap A_2^{-1}\ssp{Z}^{k-1} \cap \ker C\big),
	\end{split}
\end{equation}
where $\ssp{W}^{*}$ is the minimal finite conditioned invariant subspace containing $\ssp{L}$, $\op{S}_\infty^*=\sum \op{S}^*$ and $\op{S}^* = \ssp{Z}^n$.

Finally, we set $\op{S}_\infty^*=\sum \op{S}^*$, where $\op{S}^* = \ssp{Z}^n$. Finally, it should be noted that since $\ssp{W}^*+\ker C = \op{S}^*+\ker C$ (this follows by applying $\ssp{Z}^{n+1} = \ssp{Z}^n = \op{S}^*$ in the algorithm \eqref{Eq:Unob_MainAlg}), one obtains $\underline{D}(\ssp{W}^*)\subseteq\underline{D}(\op{S}^*)$.

In this section, we first defined the invariance property of the 2D system \eqref{Eq:FMII} that are Inf-D subspace of the \infd system \eqref{Eq:IDRep}. Next, an invariance unobservable subspace (that is generically equivalent to an unobservable subspace) $\ssp{N}_{s,\infty}=\sum\ssp{N}_s$ was introduced. Moreover, the conditioned and unobservability subspaces that are crucial in determining the solution to the FDI problem have been introduced. By utilizing the above results necessary and sufficient conditions for solvability of the FDI problem are subsequently derived and provided.

\section{Necessary and Sufficient Conditions for Solvability of the FDI problem}\label{Sec:FDI}
In this section, we first present  sufficient conditions for detectability and isolability of faults. Next, by employing three different filters (namely, ordinary, delayed deadbeat, and LMI-based filters) sufficient conditions for solvability of the FDI problem are presented.

Consider the faulty FMII model \eqref{Eq:FMII} (i.e., the system is subjected to two faults $f_1$ and $f_2$) and the detection filter \eqref{Eq:Filter} designed to detect and isolate the fault $f_1$. By augmenting the detection filter dynamics \eqref{Eq:Filter} with the faulty 2D model \eqref{Eq:FMII}, one obtains,
\bs
\begin{equation}\label{Eq:AugSys}
	\begin{split}
		x_e(&i+1,j+1) = A_1^ex_e(i,j+1) + A_2^ex_e(i+1,j) + B_1^e u(i,j+1)+ B_2^e u(i+1,j)+ L_{2,e}^1f_2(i,j+1) \\&+ L_{2,e}^2f_2(i+1,j)+L_{1,e}^1f_1(i,j+1)+L_{1,e}^2f_1(i+1,j),\\
		r_1(&i,j) =  C_ex_e(i,j),
	\end{split}
\end{equation}
\es
where $x_e = \bbm x^\tran &\omega^\tran\ebm^\tran\in \ssp{X}^e = \fld{R}^n\oplus\fld{R}^o$ ($o$ refers to the dimension of $\omega$), $C_e = \bbm H_1C &M_1\ebm$ and,
\bs
\begin{equation}\label{Eq:AugSysPar}
	\begin{split}
		A_1^e &= \bbm A_1 &0\\E_1C &F_1\ebm, A_2^e = \bbm A_2 &0\\ E_2C &F_2\ebm, B_1^e = \bbm B_1 \\K_1\ebm ,\\
		B_2^e &= \bbm B_2 \\ K_2 \ebm , L_{i,e}^j = \bbm (L_i^j)^\tran &0\ebm^\tran, \; i,j\in\{1,2\}.
	\end{split}
\end{equation}
\es
In this section, by assuming that the unobservable subspace of the above augmented system is $A^e_{1,2}$-invariant, it is shown that the sufficient condition is also necessary for solvability of the FDI problem. Moreover, an analytical  comparison between our proposed approach and the method developed in \cite{Malek_3DFDI} is also provided to highlight the strength and capabilities of our proposed methodology when compared to the literature.
\subsection{Main Results}\label{Sec:FDIMainResult}
The following lemma provides an important property for the invariant unobservable subspace $\ssp{N}_{s,\infty}^e$ (and $\ssp{N}_s^e$) that is associated with the system \eqref{Eq:AugSys}.
\begin{lemma}\label{Lem:UnobsTot2Small}
	Consider the 2D system \eqref{Eq:AugSys} and its invariant unobservable subspace $\ssp{N}_{s,\infty}^e$. Then $\op{Q}^{-1}\ssp{N}^e_{s,\infty}$ is an unobservability subspace of the 2D system \eqref{Eq:FMII}, with $\op{Q} = \diag(\cdots,Q,Q,\cdots)$ and $Q$ representing the embedding operator into $\ssp{X}^e$ (i.e., $Q:\fld{R}^n\rightarrow\ssp{X}^e$ and $Qx = [x^\tran, 0]^\tran$).
\end{lemma}
\begin{proof}
	First, recall that $\ssp{N}_{s,\infty}^e = \sum \ssp{N}_s^e$.
	Note that, $Q^{-1}\ssp{N}_s^e= \ssp{S}=\{x | \left[\bsm x\\0\esm\right] \in \ssp{N}_s^e\}$, and 
	assume that $\left[\bsm x\\0\esm\right]\in \ssp{N}_s^e$. According to the fact that $\ssp{N}_s^e$ is $A^e_{1,2}$-invariant \cite{ACC2013}, we have $A_1^e\left[\bsm x\\0\esm\right]\in\ssp{N}_s^e$, and if $x\in\ker C$ then $A_1x\in \ssp{S}$. It follows that $A_1( \ssp{S}\cap\ker C)\subseteq \ssp{S}$. By following along the same lines as those above one can show that $A_2(\ssp{S}\cap\ker C)\subseteq \ssp{S}$. Therefore, $\ssp{S}_\infty=\op{Q}^{-1}\ssp{N}^e_{s,\infty}=\sum \ssp{S}$ is a conditioned invariant subspace.
	Moreover, given that $\ssp{N}_s^e$ is contained in $\ker C_e$, we have $\ssp{S}\subseteq\ker H_1C$. Therefore, the subspace $\ssp{S}$ is a finite conditioned invariant subspace contained in $\ker {H_1C}$. Since $\ssp{N}_s^e$ is the largest $A_{1,2}^e$-invariant subspace in $\ker C_e$, it follows that $\ssp{S}$ is the largest $[A+DC]_{1,2}$ invariant subspace in $\ker H_1C$ (i.e., $\ssp{S}$ is a finite unobservability subspace of the 2D system \eqref{Eq:FMII}). In other words, $\ssp{S}$ is a finite unobservability subspace of the system \eqref{Eq:FMII}. This completes the proof of the lemma.
\end{proof}
The following theorem provides a \underline{single} necessary and  sufficient condition for detectability and isolability of faults (i.e., the existence  of a subsystem such that it is decoupled from all faults but one - refer to  Subsection \ref{Sec:FDIProb} for more details).
\begin{theorem}\label{Thm:NecSuffFDI}
	Consider the 2D system \eqref{Eq:FMII} that is subject to two faults $f_1$ and $f_2$. The fault $f_1$  is detectable and isolable (in the sense of Remark \ref{Rem:ProbName}) if and only if the following condition is satisfied,
	\begin{equation}\label{Eq:NecSufCon}
	\begin{split}
	(\ssp{L}_1^{1}+\ssp{L}_1^{2})\not\subseteq\op{S}_1^*
	\end{split}
	\end{equation}
	where  $\op{S}_1^*$ is the smallest finite uobservability subspace (refer to Definition \ref{Def:Unobser}) of the 2D system \eqref{Eq:FMII} containing $\ssp{L}_2^1+\ssp{L}_2^2$ (this represents the limit of the algorithm \eqref{Eq:Unob_MainAlg}, where one sets $\ssp{L}=\ssp{L}_2^1+\ssp{L}_2^2$ in the algorithm \eqref{Eq:CIAlg}).\qed
\end{theorem}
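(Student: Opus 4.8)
The plan is to establish the biconditional by connecting fault detectability/isolability (as formalized in Remark~\ref{Rem:ProbName}) to the geometric containment condition \eqref{Eq:NecSufCon}, using the machinery built in Section~\ref{Sec:InvSpace}. Recall from Remark~\ref{Rem:ProbName} that $f_1$ is detectable and isolable precisely when there exist maps $D_1,D_2,H$ such that the other fault signatures $\ssp{L}_2^1+\ssp{L}_2^2$ lie in the unobservable subspace of the triple $(HC,A_1+D_1C,A_2+D_2C)$, while $\ssp{L}_1^1+\ssp{L}_1^2$ does \emph{not}. The key observation is that the smallest such unobservability subspace containing $\ssp{L}_2^1+\ssp{L}_2^2$ is exactly $\op{S}^*$, the limit of algorithm \eqref{Eq:Unob_MainAlg} with $\ssp{L}=\ssp{L}_2^1+\ssp{L}_2^2$. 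So the task reduces to showing that $f_1$ is separable from $f_2$ if and only if $f_1$'s signature escapes this minimal subspace.

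First I would prove sufficiency. Assuming \eqref{Eq:NecSufCon}, I take $\op{S}^*$ together with its associated maps $D_1,D_2\in\underline{D}(\op{S}^*)$ and the output map $H$ satisfying $\op{S}^*=<\ker HC|[A+DC]_{1,2}>$ (guaranteed by Definition~\ref{Def:Unobser}). By construction $\ssp{L}_2^1+\ssp{L}_2^2\subseteq\op{S}^*$, so $f_2$ is decoupled in the quotient/factor system defined modulo $\op{S}^*$; since $\op{S}^*$ is a generic unobservable subspace of $(HC,A_1+D_1C,A_2+D_2C)$, the residual built on the factor dynamics is insensitive to $f_2$. Meanwhile \eqref{Eq:NecSufCon} ensures $\ssp{L}_1^1+\ssp{L}_1^2\not\subseteq\op{S}^*$, so $f_1$ leaves a nonzero footprint in the factor system, hence is detectable. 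This realizes the first step of Remark~\ref{Rem:ProbName}, establishing detectability and isolability of $f_1$.

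The necessity direction is where the main obstacle lies, and here I would lean heavily on Lemma~\ref{Lem:UnobsTot2Small}. Suppose $f_1$ is detectable and isolable; then by Remark~\ref{Rem:ProbName} there is a filter \eqref{Eq:Filter} (equivalently, a choice of $D_1,D_2,H$) whose augmented system \eqref{Eq:AugSys} has the property that $f_2$ enters its invariant unobservable subspace $\ssp{N}_{s,\infty}^e$ while $f_1$ does not. Pulling back through the embedding operator $\op{Q}$, Lemma~\ref{Lem:UnobsTot2Small} tells us $\op{Q}^{-1}\ssp{N}_{s,\infty}^e$ is itself an unobservability subspace of \eqref{Eq:FMII}, and it contains $\ssp{L}_2^1+\ssp{L}_2^2$ (the $f_2$ decoupling) but not $\ssp{L}_1^1+\ssp{L}_1^2$ (the $f_1$ detectability). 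Since $\op{S}^*$ is the \emph{smallest} unobservability subspace containing $\ssp{L}_2^1+\ssp{L}_2^2$, we have $\op{S}^*\subseteq\op{Q}^{-1}\ssp{N}_{s,\infty}^e$, and because $\ssp{L}_1^1+\ssp{L}_1^2$ avoids the larger subspace it must avoid $\op{S}^*$ as well, giving \eqref{Eq:NecSufCon}.

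The delicate point—and what I expect to require the most care—is justifying that detectability/isolability of $f_1$ genuinely forces $f_1$'s signature out of $\ssp{N}_{s,\infty}^e$ rather than merely out of some smaller object, i.e., translating the dynamic requirement ``$r_1\not\to 0$ when $f_1\neq 0$'' into the static containment $\ssp{L}_1^1+\ssp{L}_1^2\not\subseteq\op{Q}^{-1}\ssp{N}_{s,\infty}^e$. This hinges on the invariance assumption stated at the start of Section~\ref{Sec:FDI} (that the unobservable subspace of \eqref{Eq:AugSys} is $A^e_{1,2}$-invariant), which is what makes $\ssp{N}_{s,\infty}^e$ coincide generically with the full unobservable subspace $\ssp{N}_g^e$ via Theorem~\ref{Thm:AID_A12Inv}; without it, the footprint of $f_1$ could in principle be unobservable yet still reside in the gap between $\ssp{N}_{s,\infty}^e$ and $\ssp{N}_g^e$. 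I would therefore make the minimality argument for $\op{S}^*$ explicit and invoke this invariance hypothesis precisely at the moment I equate the generic unobservable subspace with $\ssp{N}_{s,\infty}^e$, closing the loop between the dynamic FDI specification \eqref{Eq:FDIP} and the algebraic condition \eqref{Eq:NecSufCon}.
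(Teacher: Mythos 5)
Your proposal is correct and follows essentially the same route as the paper: sufficiency comes from the maps $D_1,D_2,H$ attached to $\op{S}^*$ by Definition~\ref{Def:Unobser} (decoupling $f_2$ while \eqref{Eq:NecSufCon} keeps $f_1$ visible), and necessity comes from Lemma~\ref{Lem:UnobsTot2Small} together with the minimality of $\op{S}^*$, exactly as in the paper's contradiction argument. If anything, you are more explicit than the paper about the minimality step $\op{S}^*\subseteq\op{Q}^{-1}\ssp{N}_{s,\infty}^e$ and about where the $A^e_{1,2}$-invariance assumption on the augmented unobservable subspace is actually invoked.
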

\begin{proof}
	({\bf If part}): Let $H_1$, $D_1^1$ and $D_2^1$ be the corresponding operators to $\op{S}_1^*$ (i.e. $\op{S}_1^* = <\ker H_1C|A+DC>$). Also, consider the following detection filter, governed according to 
		\begin{align}\label{Eq:Filter_F1}
		\omega_1(i+1,j+1) = & F_1\omega_1(i,j+1)+F_2\omega_1(i+1,j)\notag\\&+P_1B_1u(i,j+1)+P_1B_2(i+1,j),\notag\\
		r_1(i,j) = &M_1\omega_1(i+1,j)-H_1y(i,j),
		\end{align}
		where $P_1: \mathbb{R}^{n}\rightarrow \mathbb{R}^{n}/\ \op{S}_1^*$ is the canonical projection of $\fld{R}^n$ on $\fld{R}^n/\op{S}_1^*$, $F_k = A_p^k+D_{o,k}M_1$, $k=1,2$ where $A_1^pP_1 = P_1(A_1+D_1C)$ and $A_2^pP_1 = P_1(A_2+D_2C)$, $D_{o,k}$ are filter gains  and $M_1$ is the unique solution to $M_1P_1 = H_1C$.
		Now, by defining $e(i,j) = P_1x(i,j)-\omega(i,j)$, one obtains
		\begin{align}\label{Eq:Res}
		e(i+1,j+1) &= F_1e(i,j+1)+F_2e(i+1,j)\notag\\&+P_1f_1(i,j)+P_1f_1(i,j),\notag\\
		r_1(i,j) &= M_1e(i,j).
		\end{align}
		Note that $P_1L_2^1 = P_1L_2^2 = 0$ and by the condition \eqref{Eq:NecSufCon}, at least for $k=1$ and/or $k=2$, we have $P_1L_1^k\neq0$. Therefore, the residual signal $r_1$ is decoupled from $f_2$ and is sensitive to $f_1$. In other words, $f_1$ is detectable and isolable.\\
		{(\bf Only if part)}: We show the necessary part by contradiction. Let $f_1$ be detectable and isolable, and $(\ssp{L}_1^{1}+\ssp{L}_1^{2})\subseteq\op{S}_1^*$. Hence, the residual signal $r_1$ is decoupled from $f_2$ and sensitive to $f_1$. However, from Lemma \ref{Lem:UnobsTot2Small} one obtains $Q(\ssp{L}_1^{1}+\ssp{L}_1^{2})\subseteq\ssp{N}^e$. Consequently, the fault $f_1$ is not detectable by using the residual $r_1$ in \eqref{Eq:Res}, which is in contradiction with the solvability of the FDI problem. 
\end{proof}

	As stated in Remark \ref{Rem:ProbName}, the FDI problem has two main steps. Theorem \ref{Thm:NecSuffFDI} provides the necessary and sufficient condition for the first step (that is, detectability and isolability of $f_1$). Therefore,  condition \eqref{Eq:NecSufCon} is also necessary for solvability of the FDI problem. For the second step, we need to determine $\op{D}_{o,k}$, and consequently $F_k$, $k=1,2$ in the filter \eqref{Eq:Filter_F1} such that the residual dynamics \eqref{Eq:Res} is asymptotically stable. 
	In what follows, we provide the conditions that are based on three different observers (namely, the ordinary, the delayed deadbeat, and the LMI-based filters). We first need to present the following theorem. However, one first needs the following notation. Consider the PBH matrix \eqref{Eq:PBHMatrix}. The matrix $N(z_1,z_2) = [ N_{1}(z_1,z_2),$ $\;N_{2}(z_1,z_2)]$ is the minimal left annihilator of the PBH matrix if $N(z_1,z_2)PBH(z_1,z_2) = 0$, $N(z_1,z_2)$ is full row-rank, and any other left annihilator $N_l(z_1,z_2)$ can be expressed as $N_l = N_3N$, where $N_3$ is a polynomial matrix of $z_1$ and $z_2$.
\begin{theorem}\cite{BisiaccoLetter}\label{Thm:BisiacoLetter}
	Consider the 2D system \eqref{Eq:FMII}. The FDI problem is solvable (by using the approach that is proposed in \cite{BisiaccoLetter}) if and only if,
	\begin{equation}\label{Eq:BisiaccoLetterCon}
		\rank(N(z_1,z_2)]\bbm z_1L^1+z_2L^2\\
		0_{q\times  p}\ebm) = p\;,\;\;\forall z_1,z_2\in\fld{C}-\{0\}
	\end{equation}
	where $p$ is the number of faults, $N(z_1,z_2)=\bbm N_1,\;N_2\ebm$ denotes the minimal left annihilator of the PBH matrix  \eqref{Eq:PBHMatrix}, and $L^i = \bbm L_1^k,\;\cdots,\;L_p^k\ebm$ for $k=1,2$.
\end{theorem}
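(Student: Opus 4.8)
The plan is to carry out the argument in the bivariate (2D $z$-transform) domain, where $z_1$ and $z_2$ denote the transform variables associated with the first and second coordinate shifts, consistent with the convention of the PBH matrix \eqref{Eq:PBHMatrix}. First I would transform the faulty model \eqref{Eq:FMII}, writing $X,U,Y$ and $F=[f_1,\cdots,f_p]^\tran$ for the transforms of the state, input, output and stacked fault vector, and using $L^k=[L_1^k,\cdots,L_p^k]$ as in the statement. Collecting the delayed terms, the dynamics and the output relation collapse into the single stacked identity
\[
PBH(z_1,z_2)\,X=\bbm (z_1B_1+z_2B_2)U+(z_1L^1+z_2L^2)F\\ Y\ebm .
\]
Left-multiplying by the minimal left annihilator $N=[N_1,\;N_2]$ and using $N\,PBH(z_1,z_2)=0$ then eliminates the state $X$ and yields the parity relation
\[
N_2Y+N_1(z_1B_1+z_2B_2)U=-N_1(z_1L^1+z_2L^2)F .
\]
Since $Y$ and $U$ are measured, the left-hand side is a computable residual $r$, whose fault sensitivity is exactly the matrix appearing in \eqref{Eq:BisiaccoLetterCon}, namely $r=-N(z_1,z_2)\bbm z_1L^1+z_2L^2\\ 0_{q\times p}\ebm F$.

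The second step is to argue that this one relation captures \emph{every} achievable parity-based residual generator. By the defining property of the minimal left annihilator, any row vector $N_l(z_1,z_2)$ with $N_l\,PBH=0$ factors as $N_l=N_3N$ for a polynomial matrix $N_3$; hence any state-free combination of $Y$ and $U$ is a left multiple $N_3\,r$, and the set of realizable fault-to-residual maps is precisely the set of left polynomial multiples of $M_f(z_1,z_2):=N_1(z_1L^1+z_2L^2)$. Detectability and isolability of all $p$ faults is therefore equivalent to the existence of post-filters $N_3^{(k)}$ that render $N_3^{(k)}M_f$ sensitive to $f_k$ while decoupling it from the remaining faults.

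For sufficiency I would show that $\rank M_f(z_1,z_2)=p$ for all $z_1,z_2\in\fld{C}-\{0\}$ forces the $p$ columns of $M_f$ to be independent at every admissible 2D frequency, so that no fault direction is ever masked, and then invoke a 2D coprimeness (Bezout-type) construction to produce the decoupling post-filters $N_3^{(k)}$ as realizable polynomial operators. For necessity I would argue by contraposition: if $\rank M_f(z_1^0,z_2^0)<p$ at some nonzero $(z_1^0,z_2^0)$, there is a nonzero fault combination $\xi$ with $M_f(z_1^0,z_2^0)\xi=0$ that is invisible at that frequency, and since every parity residual is a left multiple of $M_f$, no residual generator can detect or isolate it, contradicting solvability. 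The main obstacle I anticipate is the isolation (sufficiency) direction: turning the pointwise full-column-rank condition into an explicit, realizable bank of decoupling filters $N_3^{(k)}$ genuinely requires a 2D polynomial coprimeness argument rather than 1D pole-placement reasoning. A secondary point worth reconciling is why the frequencies range over $\fld{C}-\{0\}$ rather than all of $\fld{C}^2$: excluding the origin corresponds precisely to admitting \emph{delayed} (as opposed to ordinary) deadbeat realizations, mirroring the right-monomic versus right-zero-prime distinction of Theorem \ref{Thm:DeadbeatObsExistance}.
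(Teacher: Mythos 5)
This theorem is imported verbatim from \cite{BisiaccoLetter} and the paper offers no proof of it at all --- it is stated as a cited result and then used as a black box in Corollary \ref{Col:FDI_DelayedDeadBeat_Suff} --- so there is no in-paper argument to compare your proposal against. That said, your route is the correct one for the source: the stacked identity $PBH(z_1,z_2)X=\bigl[\,((z_1B_1+z_2B_2)U+(z_1L^1+z_2L^2)F)^\tran,\ Y^\tran\,\bigr]^\tran$, left-multiplication by the minimal left annihilator to kill the state, and the observation that the fault-to-residual map of any parity-type generator is a left polynomial multiple of $M_f=N_1(z_1L^1+z_2L^2)$ is exactly the extended-parity-equation machinery of \cite{BisiaccoLetter}.

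Two points need repair. First, your necessity argument is stated too loosely: a constant vector $\xi$ with $M_f(z_1^0,z_2^0)\xi=0$ at a single nonzero point does \emph{not} make the fault signal ``invisible'' --- the residual $N_3M_f\xi\,g$ vanishes at one frequency but is generally a nonzero signal, so detectability in the time domain is not contradicted. The correct necessity argument goes through the \emph{form} of the solution demanded in \cite{BisiaccoLetter}: solvability there means producing a polynomial post-filter $N_3$ with $N_3M_f=\diag(z_1^{a_1}z_2^{b_1},\ldots,z_1^{a_p}z_2^{b_p})$ (delayed exact fault reconstruction); evaluating this identity at any $(z_1^0,z_2^0)$ with both coordinates nonzero gives a rank-$p$ right-hand side and hence forces $\rank M_f(z_1^0,z_2^0)=p$. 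Second, the sufficiency step you defer to ``a 2D coprimeness (Bezout-type) construction'' is where essentially all the content of the cited theorem lives, and it is not a routine Bezout identity: $\fld{C}[z_1,z_2]$ is not a principal ideal domain, and full column rank of $M_f$ on $(\fld{C}-\{0\})\times(\fld{C}-\{0\})$ gives only that $M_f$ is right \emph{monomic} (its maximal-order minors generate an ideal whose zero set lies on the coordinate axes), from which a polynomial left inverse up to a monomial diagonal factor must be extracted --- the same zero-prime/monomic dichotomy that underlies Theorem \ref{Thm:DeadbeatObsExistance}. Your sketch correctly identifies this as the obstacle but does not supply it, so as written the sufficiency direction is a genuine gap rather than a proof.
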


The following corollary provides the sufficient conditions for solvability of the FDI problem by utilizing  our proposed method as well as a (delayed) deadbeat filter.
\begin{corollary}\label{Col:FDI_DelayedDeadBeat_Suff}
	Consider the 2D system \eqref{Eq:FMII} that is subject to two faults $f_1$ and $f_2$. Provided condition \eqref{Eq:NecSufCon} is satisfied and  condition of Theorem \ref{Thm:BisiacoLetter} is satisfied for the quotient subsystem ($H_1C$,$A_1^p$,$A_2^p$), where $H_1,A_1^p$ and $A_2^p$ are defined in \eqref{Eq:Filter_F1}, then  the FDI problem is solvable. 
\end{corollary}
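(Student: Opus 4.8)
The plan is to establish Corollary~\ref{Col:FDI_DelayedDeadBeat_Suff} by combining the necessary-and-sufficient detectability/isolability result of Theorem~\ref{Thm:NecSuffFDI} (the first step of the FDI problem in the sense of Remark~\ref{Rem:ProbName}) with the filter-realization result of Theorem~\ref{Thm:BisiacoLetter} (the second step). First I would invoke condition~\eqref{Eq:NecSufCon}: since it is assumed to hold, Theorem~\ref{Thm:NecSuffFDI} guarantees that $f_1$ is detectable and isolable, which is precisely the statement that the quotient subsystem~\eqref{Eq:Quot_f1} exists, is decoupled from $f_2$ (because $PL_2^1=PL_2^2=0$ as $\ssp{L}_2^1+\ssp{L}_2^2\subseteq\op{S}^*$), and retains a nonzero fault signature for $f_1$ (because $PL_1^1\neq 0$ and/or $PL_1^2\neq 0$). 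This disposes of the first of the two steps outlined in Remark~\ref{Rem:ProbName}.

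Next I would treat the residual-generation step. The key observation is that designing a filter of the form~\eqref{Eq:Filter} that detects $f_1$ while remaining decoupled from $f_2$ is equivalent to constructing an (unknown-input) observer for the quotient subsystem~\eqref{Eq:Quot_f1}, which is itself a bona fide FMII model in the triple $(M,A_1^p,A_2^p)$ driven only by $f_1$. By hypothesis the condition of Theorem~\ref{Thm:BisiacoLetter} holds for this quotient subsystem, so that theorem directly yields a residual generator (realizable as a delayed deadbeat filter via Theorem~\ref{Thm:DeadbeatObsExistance}) whose output is asymptotically insensitive to the control input and to $f_2$, yet responsive to $f_1$. Pulling this residual back through the canonical projection $P$ and the output relations $y_p(i,j)=Hy(i,j)$, $MP=HC$ produces a filter in the original coordinates of the exact form~\eqref{Eq:Filter}, whose residual $r_1$ satisfies the requirement~\eqref{FDI_InObs1}.

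I would then note that the two steps above are symmetric in $f_1$ and $f_2$: repeating the construction with the roles of the two faults interchanged (i.e. building $\op{S}^*$ around $\ssp{L}_1^1+\ssp{L}_1^2$ and checking~\eqref{Eq:NecSufCon} and Theorem~\ref{Thm:BisiacoLetter} for the corresponding quotient of~\eqref{Eq:Quot_f1}) yields a second residual $r_2$ meeting~\eqref{FDI_InObs2}. Since both residuals can be generated simultaneously, both steps of Remark~\ref{Rem:ProbName} are solvable for each fault, which by definition means the FDI problem~\eqref{Eq:FDIP} is solvable. The argument is therefore essentially a concatenation of two already-established results, with the only substantive glue being the identification of the decoupling problem with observer design for the quotient.

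The main obstacle I anticipate is not the detectability half, which is handed to us verbatim by Theorem~\ref{Thm:NecSuffFDI}, but rather making rigorous the claim that the FDI task for the quotient subsystem~\eqref{Eq:Quot_f1} coincides with the hypothesis of Theorem~\ref{Thm:BisiacoLetter}. Concretely, one must verify that the PBH matrix and minimal left annihilator appearing in~\eqref{Eq:BisiaccoLetterCon} are to be formed for the \emph{reduced} triple $(M,A_1^p,A_2^p)$ with fault signatures $PL_1^1$, $PL_1^2$ rather than for the original system, and that a residual generator for the quotient lifts consistently to a filter~\eqref{Eq:Filter} on the full state space without reintroducing sensitivity to $f_2$. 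This lifting is where the $[A+DC]_{1,2}$-invariance of $\op{S}^*$ and the surjectivity of $P$ (so that $A_1^p$, $A_2^p$, and $M$ are well defined through $A_i^pP=P(A_i+D_iC)$ and $MP=HC$) must be used carefully; once that correspondence is in place the corollary follows immediately from Theorems~\ref{Thm:NecSuffFDI} and~\ref{Thm:BisiacoLetter}.
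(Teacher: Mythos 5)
Your proposal is correct and follows essentially the same route as the paper's proof: invoke Theorem \ref{Thm:NecSuffFDI} under condition \eqref{Eq:NecSufCon} to obtain the quotient subsystem \eqref{Eq:Quot_f1} decoupled from all faults but $f_1$, then apply the condition of Theorem \ref{Thm:BisiacoLetter} to that quotient subsystem to realize a delayed deadbeat residual generator, and repeat symmetrically for $f_2$. The additional care you take about forming the PBH matrix for the reduced triple $(M,A_1^p,A_2^p)$ and lifting the filter back through $P$ is a reasonable elaboration of a step the paper treats as immediate, but it does not change the argument.
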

\begin{proof}
	 By invoking Theorem \ref{Thm:NecSuffFDI},  the residual signal $r_1$ in \eqref{Eq:Filter_F1} is decoupled from all faults but $f_1$. Now, given that the condition \eqref{Eq:BisiaccoLetterCon} is satisfied for the subsystem ($H_1C$,$A_1^p$,$A_2^p$), by determining the observer gains $\op{D}_{o,1}$ and $\op{D}_{o,2}$, one can  design a delayed deadbeat observer such that $r_1(i-n_1,j-n_2)=f_1(i,j)$ (refer to \cite[Theorem 1]{BisiaccoLetter}). Hence, the FDI problem is solvable. This complete the proof of the corollary.
\end{proof}

The following corollary provides a sufficient condition for solvability of the FDI problem by utilizing  an {\it ordinary (without delay) deadbeat} detection observer.

\begin{corollary}\label{Col:FDI_DeadBeat_Suff}
	Consider the 2D system \eqref{Eq:FMII} that is subject to two faults $f_1$ and $f_2$. Let the condition \eqref{Eq:NecSufCon} be satisfied and the PBH matrix of the quotient subsystem ($H_1C$,$A_1^p$,$A_2^p$) be right zero prime, where $H_1,A_1^p$ and $A_2^p$ are defined in \eqref{Eq:Filter_F1}. Consequently, the FDI problem is solvable by using a deadbeat (without a delay) observer. 
\end{corollary}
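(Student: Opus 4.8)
The plan is to follow the same two-step route used in Corollary \ref{Col:FDI_DelayedDeadBeat_Suff}, replacing the delayed unknown-input deadbeat filter of Theorem \ref{Thm:BisiacoLetter} by an ordinary state-reconstruction deadbeat observer whose existence is governed by the right zero primeness hypothesis through Theorem \ref{Thm:DeadbeatObsExistance}(ii). First I would invoke Theorem \ref{Thm:NecSuffFDI}: since condition \eqref{Eq:NecSufCon} holds, there exist maps $D_1$, $D_2$ and $H$ whose associated smallest unobservability subspace $\op{S}^*$ contains $\ssp{L}_2^1+\ssp{L}_2^2$ but not $\ssp{L}_1^1+\ssp{L}_1^2$. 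Factoring $\op{S}^*$ out of \eqref{Eq:FMII} then produces the quotient subsystem \eqref{Eq:Quot_f1}, which is decoupled from $f_2$ (because $PL_2^1=PL_2^2=0$) while $PL_1^1\neq 0$ or $PL_1^2\neq 0$. This settles the first, decoupling step of Remark \ref{Rem:ProbName}; what remains is to realize a residual generator for \eqref{Eq:Quot_f1}.

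For the second step I would apply Theorem \ref{Thm:DeadbeatObsExistance}(ii) to the triple $(M,A_1^p,A_2^p)$ of the quotient subsystem: the hypothesis that its PBH matrix is right zero prime is exactly the necessary and sufficient condition for the existence of an ordinary (delay-free) deadbeat observer of the form \eqref{Eq:DeadbeatObs} that reconstructs $\omega(i,j)$ in the fault-free situation. I would then take the residual to be the reconstruction mismatch $r_1(i,j)=M\hat{\omega}(i,j)-y_p(i,j)$ and note that, once the known input $u$ is compensated by the injection terms $K_1u,K_2u$, the associated error dynamics is autonomous and deadbeat when $f_1\equiv 0$. Consequently there is an $N_0$ with $r_1(i,j)=0$ for all $i+j>N_0$ whenever $f_1\equiv 0$, regardless of $u$ and $f_2$; this establishes the decoupling half of \eqref{FDI_InObs1}.

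It remains to verify the sensitivity half, namely that $f_1\neq 0$ forces $r_1\neq 0$, and this is where I expect the main obstacle. The error dynamics of the deadbeat observer is driven precisely by the injected signal $PL_1^1 f_1(i,j+1)+PL_1^2 f_1(i+1,j)$, which is nonzero for a nonzero fault because condition \eqref{Eq:NecSufCon} guarantees $PL_1^1\neq0$ or $PL_1^2\neq0$. The delicate point is to exclude a pathological cancellation in which a nonzero fault yields an identically zero residual; I would close this gap by using the right-zero-prime property once more, since it renders $(M,A_1^p,A_2^p)$ fully reconstructible and hence prevents any nonzero forcing term from lying in an output-annihilating (unobservable) subspace of the quotient dynamics. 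Finally, running the symmetric construction with the roles of $f_1$ and $f_2$ interchanged yields a second residual $r_2$ satisfying \eqref{FDI_InObs2}, so both steps of Remark \ref{Rem:ProbName} are solvable and the FDI problem is solved by ordinary deadbeat observers, completing the argument.
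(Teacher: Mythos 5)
Your proposal is correct and follows essentially the same route as the paper's own proof: invoke Theorem \ref{Thm:NecSuffFDI} to obtain the quotient subsystem \eqref{Eq:Quot_f1} decoupled from $f_2$, then apply Theorem \ref{Thm:DeadbeatObsExistance}(ii) to its right-zero-prime PBH matrix to build an ordinary deadbeat observer whose output error serves as the residual. Your added care about the sensitivity half (ruling out cancellation of the fault's effect via reconstructibility of the quotient triple) is a point the paper passes over tersely, but it does not change the argument's structure.
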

\begin{proof}
	 By invoking Theorem \ref{Thm:NecSuffFDI},  the residual signal in \eqref{Eq:Res} is only affected by $f_1$. Therefore, the detection and isolation of $f_1$ is reduced to that of designing $\op{D}_{o,1}$ and $\op{D}_{o,2}$ gains.  Since the PBH matrix of the subsystem ($H_1C$,$A_1^p$,$A_2^p$) is right zero prime, by invoking Theorem \ref{Thm:DeadbeatObsExistance} one can determine the gains such that the error dynamics \eqref{Eq:Res} is asymptotically stable and the FDI problem is solvable. This complete the proof of the corollary.
\end{proof}

As shown in Theorem \ref{Thm:NecSuffFDI}, under certain conditions one can obtain a residual signal that is decoupled from all faults but one. Therefore, design of a residual generator to detect and isolate the fault $f_1$ in the 2D system \eqref{Eq:FMII} is reduced to that of detecting this fault in the system \eqref{Eq:Filter_F1} by using an observer. The Corollaries \ref{Col:FDI_DelayedDeadBeat_Suff} and \ref{Col:FDI_DeadBeat_Suff} provide sufficient conditions for solvability of the FDI problem by using \emph{delayed} and {\it ordinary deadbeat filters}, respectively. However, as pointed out in \cite{Bisiacco_Obs},  design of an observer for FMII models is based on polynomial matrices. This method is unfortunately not always numerically or analytically straightforward to develop and therefore, in this work we also develop another set of sufficient conditions for solvability of the FDI problem by using a 2D Luenberger observer.

The next corollary provides sufficient conditions for solvability of the FDI problem by using a 2D Luenberger observer.
\begin{corollary}\label{Col:FDI_LMI_Suff}
	Consider  the 2D model \eqref{Eq:FMII}, where the condition \eqref{Eq:NecSufCon} is satisfied. The FDI problem is solvable if there exist two symmetric  positive definite matrices $R_1$ and $R_2$ such that,
	\begin{equation}\label{Eq:LMICon4ObsFactorOurSys}
		W_m^\tran (\bbm R_1 &0\\ 0 &R_2\ebm - \bbm (A_1^p)^\tran\\(A_2^p)^\tran\ebm (R_1+R_2)[A_1^p,A_2^p])W_m<0,
	\end{equation}
	where $A_1^p$ and $A_2^p$ are defined in equation \eqref{Eq:Filter_F1} and the columns of $W_m$ are the basis of $\ker M$.
\end{corollary}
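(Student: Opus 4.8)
The plan is to reduce this corollary, via Theorem~\ref{Thm:NecSuffFDI}, to the LMI-based observer-existence result of Corollary~\ref{Col:ObserverDesignLMI}, applied not to the original plant but to the fault-decoupled quotient subsystem~\eqref{Eq:Quot_f1}. First I would invoke Theorem~\ref{Thm:NecSuffFDI}: since the hypothesis~\eqref{Eq:NecSufCon} holds, the fault $f_1$ is detectable and isolable, and factoring out the smallest unobservability subspace $\op{S}^*$ containing $\ssp{L}_2^1+\ssp{L}_2^2$ yields the subsystem~\eqref{Eq:Quot_f1}, which is decoupled from $f_2$ (because $PL_2^1=PL_2^2=0$) while retaining sensitivity to $f_1$ (because $PL_1^1\neq0$ and/or $PL_1^2\neq0$). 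This settles the first, fault-decoupling step of Remark~\ref{Rem:ProbName}, leaving only the second step: realizing a \emph{stable} residual generator for~\eqref{Eq:Quot_f1}.

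Next I would argue that building this residual generator is equivalent to designing an asymptotically stable state-estimation observer for~\eqref{Eq:Quot_f1}, with $M$ in the role of the output map and $(A_1^p,A_2^p)$ in the role of the dynamics pair. Introducing the error $e=\omega-\hat{\omega}$ and using $Hy=HCx=MPx=M\omega$, the residual reduces to $r_1=M\hat{\omega}-Hy=-Me$, so $r_1\rightarrow0$ exactly when the error dynamics is asymptotically stable and $f_1=0$, while the sensitivity $r_1\neq0$ under $f_1\neq0$ is inherited from~\eqref{Eq:NecSufCon}. The detection filter~\eqref{Eq:Filter} with gains chosen as in~\eqref{Eq:ErrorDynParam}, after substituting the defining relations $A_i^pP=P(A_i+D_iC)$, $MP=HC$, and $PP^{-r}=I$, produces an error recursion governed by $A_1^p+D_{o1}M$ and $A_2^p+D_{o2}M$; that is, it has precisely the form~\eqref{Eq:ObserverError} with $(C,A_1,A_2)$ replaced by $(M,A_1^p,A_2^p)$.

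With the error dynamics in this canonical form, I would apply Corollary~\ref{Col:ObserverDesignLMI} (equivalently Theorem~\ref{Thm:ObserverGain2LMIs}) to the pair $(M,A_1^p,A_2^p)$: the existence of gains $D_{o1},D_{o2}$ rendering $A_1^p+D_{o1}M$ and $A_2^p+D_{o2}M$ asymptotically stable in the sense of Lemma~\ref{Lem:2DLyapunov} is guaranteed precisely by feasibility of the kernel-projected LMI in the symmetric positive definite unknowns $R_1,R_2$, which is the condition~\eqref{Eq:LMICon4ObsFactorOurSys} with $W_m$ (a basis of $\ker M$) taking the role played by $W_c$ there. Feasibility of~\eqref{Eq:LMICon4ObsFactorOurSys} thus yields stable error---hence stable residual---dynamics, so that $r_1\rightarrow0$ when $f_1=0$ and $r_1\neq0$ when $f_1\neq0$, i.e.\ condition~\eqref{FDI_InObs1} holds; constructing the analogous residual $r_2$ for $f_2$ by the symmetric argument would then establish~\eqref{Eq:FDIP}, completing the argument.

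The step I expect to be the main obstacle is verifying that the realization~\eqref{Eq:ErrorDynParam} does reproduce the error dynamics~\eqref{Eq:ObserverError} \emph{exactly} for the quotient pair. This requires careful bookkeeping through the canonical projection $P$ and its right inverse $P^{-r}$, the consistency and well-posedness of $M$ as the unique solution of $MP=HC$, and---most delicately---the matching of sign conventions and block structure between the output-injection gains $E_1,E_2$ of the filter~\eqref{Eq:Filter} and the observer gains $D_{o1},D_{o2}$, so that the condition~\eqref{Eq:LMICon4ObsFactorOurSys} lines up with $W_{c_d}^\tran A_c W_{c_d}<0$ of Corollary~\ref{Col:ObserverDesignLMI}. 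Once these identifications are established, the conclusion follows directly from the already-proved observer results.
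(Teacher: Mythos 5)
Your proposal is correct and follows essentially the same route as the paper's own proof: invoke Theorem~\ref{Thm:NecSuffFDI} to obtain the quotient subsystem~\eqref{Eq:Quot_f1} decoupled from $f_2$, observe that the filter gains~\eqref{Eq:ErrorDynParam} yield error dynamics driven by $A_1^p+D_{o1}M$ and $A_2^p+D_{o2}M$ with residual $r_1=-Me$ plus the $PL_1^1,PL_1^2$ fault terms, and then apply Corollary~\ref{Col:ObserverDesignLMI} to the pair $(M,A_1^p,A_2^p)$ so that feasibility of~\eqref{Eq:LMICon4ObsFactorOurSys} gives asymptotic stability of the error and hence condition~\eqref{FDI_InObs1}, with the symmetric argument for $f_2$. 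The bookkeeping step you flag as the main obstacle is exactly what the paper asserts in writing down its error dynamics~\eqref{Eq:ErrorDynSuff}, so no gap remains.
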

\begin{proof}
	
	By invoking Theorem \ref{Thm:NecSuffFDI}, the fault $f_1$ is detectable and isolable, and consequently the residual signal $r_1$ in \eqref{Eq:Filter_F1} exists (and is decoupled from all faults but $f_1$). By considering the LMI condition \eqref{Eq:LMICon4ObsFactorOurSys} and invoking results from Corollary \ref{Col:ObserverDesignLMI}, the error dynamics \eqref{Eq:Res} is asymptotically stable. If $f_1(\cdot,\cdot)=0$, the residual signal $r_1(\cdot,\cdot)$ converges to zero as $i+j\rightarrow\infty$. Otherwise, the residual has a value that is different from zero. Therefore, the condition \eqref{FDI_InObs1} is satisfied and the FDI problem is solvable. By following along the same procedure as those above one can also design another state estimation observer to detect and isolate the fault $f_2(\cdot,\cdot)$. Therefore, this completes the proof of the corollary.
\end{proof}
\begin{remark}\label{Rem:InfFactor}
	It is worth noting that one can directly work with $\ssp{N}_g$ (as defined in \eqref{Eq:UnobserSpaceGeneral}) and derive necessary and sufficient conditions by following along the same steps as those that have been proposed in \cite{ECC2014}. However, there are two main drawbacks associated with this approach that are as follows:
	\begin{enumerate}
		\item The invariant subspaces are \underline{not} necessarily computed in a finite number of steps.
		\item By factoring out $\ssp{N}_g$, the resulting subsystem does not  necessarily have a Fin-D representation. For more clarification on 2D realization, refer to the work in \cite{FMMinimalRealization}.
	\end{enumerate}
\end{remark}

To summarize, Theorem \ref{Thm:NecSuffFDI} provides a single necessary sufficient condition for  detectability and isolability  of the faults (refer to Subsection \ref{Sec:FDI}) that is also necessary for solvability of the FDI problem. Three sets of sufficient conditions for solvability of the FDI problem by utilizing (i) a (delayed) deadbeat observer, (ii) an ordinary deadbeat observer, and (ii) a 2D Luenberger observer are derived in Corollaries \ref{Col:FDI_DelayedDeadBeat_Suff},  \ref{Col:FDI_DeadBeat_Suff} and \ref{Col:FDI_LMI_Suff}, respectively.

Table \ref{Tab:Suff_Cond} summarizes the main results that are developed and presented in this subsection.
\begin{table}[h]
	\caption{Pseudo-algorithm to detect and isolate the fault $f_i$ in the 2D system \eqref{Eq:FMII}.}
	\centering
	\begin{tabularx}{1\columnwidth}{|X|}
		\hline
		\hline
		\begin{enumerate}
			\item {\tt Compute the minimal conditioned invariant subspace $\ssp{W}^*$ containing all $\ssp{L}_j^1+\ssp{L}_j^2$ subspaces such that $j\neq i$ (by invoking the algorithm \eqref{Eq:CIAlg}, where $\ssp{L}=\sum_{j\neq i} (\ssp{L}_j^1+\ssp{L}_j^2)$).}
			\item {\tt Compute the unobservability subspace $\op{S}_i^*$ containing $\sum_{j\neq i} (\ssp{L}_j^1+\ssp{L}_j^2)$ (by using the algorithm \eqref{Eq:Unob_MainAlg}).}
			\item {\tt Compute the operators $D_1$, $D_2$ such that $\ssp{W}^*$ is the minimal conditioned invariant subspace of the 2D system \eqref{Eq:FMII}.}
			\item {\tt Find the operator $H$ such that $\ker HC = \op{S}_i^*+\ker C$.}
			\item {\tt If $\op{S}_i^*\cap\ssp{L}_i^1\neq 0$ or $\op{S}_i^*\cap\ssp{L}_i^2\neq 0$, then the fault  $f_i$ is detectable and isolable (refer to Remark \ref{Rem:ProbName} and Theorem \ref{Thm:NecSuffFDI}) and }
			\begin{itemize}
				\item {\tt If the conditions of the Corollary \ref{Col:FDI_DelayedDeadBeat_Suff} are satisfied, the FDI problem is solvable by using a (delayed) deadbeat filter.}
				\item {\tt If the conditions of the Corollary \ref{Col:FDI_DeadBeat_Suff} are satisfied, the FDI objective can be accomplished by using a deadbeat (without a delay) observer.}
				\item {\tt If the conditions of Corollary \ref{Col:FDI_LMI_Suff} are satisfied, there exist an LMI-based observer for detection and isolation of the fault $f_i$.}
			\end{itemize}
			\item {\tt The  output norm of any of the above detection filters is the residual that satisfies the condition \eqref{Eq:FDIP}.}
		\end{enumerate}\\
		\hline
	\end{tabularx}
	\label{Tab:Suff_Cond}
\end{table}
\subsection{Comparisons with the Other Available Approaches in the Literature}\label{Sec:FDIComparison}
In this subsection, our proposed approach is now compared and evaluated with respect to the existing geometric methods in the literature \cite{ Malek_3DFDI,Malek_3DFDIConf}. We first show that if the FDI problem is solvable by using the  approach in the above literature, our approach can also detect and isolate the faults. Furthermore, we provide a numerical example where it is shown that our approach is capable of detecting and isolating a fault, however, the necessary conditions provided in the algebraic methods in \cite{BisiaccoMultiDim,BisiaccoLetter} as well as \cite{Malek_3DFDI} are not satisfied.

The equivalent 2D version of the necessary condition (as derived in \cite{Malek_3DFDI}, Theorems 2 and 3) to detect and isolate the fault $f_1$ can be summarized as follows: The fault $f_1$ in the 2D system \eqref{Eq:FMII} is detectable and isolable according to  \cite{Malek_3DFDI} if $C\ssp{W}_{1}^*\cap C\ssp{W}_{2}^* =0$, where $\ssp{W}^*_1$ and $\ssp{W}^*_2$ denote the minimal \underline{finite} conditioned invariant subspace that contain $\ssp{L}_1^1+\ssp{L}_1^2$ and $\ssp{L}_2^1+\ssp{L}_2^2$, respectively (refer to \cite{Malek_3DFDI}, Theorem 2). It should be pointed out that the observability assumption of ($C$, $A_1$, $A_2$) is a fundamental requirement and condition  in \cite{Malek_3DFDI} (although it was stated in \cite{Malek_3DFDI} that this assumption was made for simplicity of the presentation). The main reason for the above limitation lies on and is due to the fact that the proposed approach in \cite{Malek_3DFDI} is based on the results of \cite{Massoumnia1986}. However, as stated in \cite{Massoumnia1986} the observability assumption is  quite a crucial and critical condition. For further illustration and  clarification for the above serious concerns consider the following 2D system,
\begin{equation}
	\begin{split}
		x(i+1,j+1) &= \bbm 0 &1\\0 &0\ebm x(i,j+1) + \bbm 0 &0\\0 &1\ebm x(i+1,j)+ \bbm 1\\0\ebm f_1+ \bbm 0\\1\ebm f_2,\\
		y(i,j) &= \bbm 0 &1\ebm x(i,j).
	\end{split}
\end{equation}
We have $\ssp{W}^*_1 = \ssp{L}_1=\ker C$, $\ssp{W}^*=\ssp{L}_2$, and consequently $C\ssp{W}_1^*\cap C\ssp{W}^*_2 = 0$. Therefore, the sufficient condition for solvability of the FDI problem under the zero initial condition in \cite{Malek_3DFDI} (Theorem 2) is satisfied. However, it is easy to verify that $\ssp{L}_1\subseteq\ssp{N}$, and consequently $f_1$ is \underline{not even detectable} (in other words, $f_1$ has no effect on the output signal). Our proposed methodology  does not suffer from the above limitation and restriction.

We are now in a position to state the following theorem.
\begin{theorem}\label{Thm:Our_Maleki}
	Consider the 2D system \eqref{Eq:FMII} and assume that the FDI problem is solvable by using the approach that is proposed in \cite{Malek_3DFDI}. Then the approach that we have proposed in this work \underline{can also} detect and isolate the faults in the system \eqref{Eq:FMII}.
\end{theorem}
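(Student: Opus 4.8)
The plan is to reduce the statement to a single subspace inclusion and then argue by contraposition. By Theorem~\ref{Thm:NecSuffFDI}, solvability by our method for $f_1$ amounts to $\ssp{L}_1\not\subseteq\op{S}^*$ (with the symmetric condition for $f_2$), where I abbreviate $\ssp{L}_i=\ssp{L}_i^1+\ssp{L}_i^2$ and $\op{S}^*$ is the smallest unobservability subspace containing $\ssp{L}_2$. The hypothesis --- solvability in the sense of \cite{Malek_3DFDI} --- supplies two facts: observability of $(C,A_1,A_2)$, and $C\ssp{W}_1^*\cap C\ssp{W}_2^*=0$, where $\ssp{W}_i^*$ is the minimal conditioned invariant subspace containing $\ssp{L}_i$. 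I would therefore assume for contradiction that $\ssp{L}_1\subseteq\op{S}^*$ and, using the vanishing intersection $C\ssp{W}_1^*\cap C\ssp{W}_2^*=0$ at every stage, deduce $\ssp{L}_1=0$, which is absurd for a genuine fault; the treatment of $f_2$ is identical.

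The first step is to establish the identity $C\op{S}^*=C\ssp{W}_2^*$. Noting that in algorithm~\eqref{Eq:Unob_MainAlg} the seed subspace $\ssp{W}^*$ equals $\ssp{W}_2^*$ when $\ssp{L}=\ssp{L}_2$, I would observe that every iterate has the form $\ssp{W}_2^*+\big(A_1^{-1}\ssp{Z}^{k-1}\cap A_2^{-1}\ssp{Z}^{k-1}\cap\ker C\big)$, whose added term lies in $\ker C$. Hence the limit decomposes as $\op{S}^*=\ssp{W}_2^*+\ssp{T}$ with $\ssp{T}\subseteq\ker C$, and applying $C$ annihilates $\ssp{T}$, leaving $C\op{S}^*=C\ssp{W}_2^*$.

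The heart of the argument is an induction on $||\alpha||$ proving $A^\alpha\ssp{L}_1\subseteq\op{S}^*\cap\ssp{W}_1^*\cap\ker C$ for every multi-index $\alpha$. Both $\op{S}^*$ and $\ssp{W}_1^*$ are conditioned invariant, so Lemma~\ref{Lem:CA_Inv} gives the closure $A_i(\ssp{V}\cap\ker C)\subseteq\ssp{V}$ for $i\in\{1,2\}$ and $\ssp{V}\in\{\op{S}^*,\ssp{W}_1^*\}$. In the base case $\ssp{L}_1\subseteq\op{S}^*\cap\ssp{W}_1^*$ (by the contradiction hypothesis and $\ssp{W}_1^*\supseteq\ssp{L}_1$), so $C\ssp{L}_1\subseteq C\op{S}^*\cap C\ssp{W}_1^*=C\ssp{W}_2^*\cap C\ssp{W}_1^*=0$, placing $\ssp{L}_1$ in $\ker C$. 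For the inductive step, $A^\beta\ssp{L}_1\subseteq\op{S}^*\cap\ssp{W}_1^*\cap\ker C$ together with the closure property yields $A_iA^\beta\ssp{L}_1\subseteq\op{S}^*\cap\ssp{W}_1^*$, and once more the identity $C\op{S}^*\cap C\ssp{W}_1^*=0$ forces it into $\ker C$. The induction then gives $\ssp{L}_1\subseteq\bigcap_\alpha\ker CA^\alpha=\ssp{N}_s$, and the observability hypothesis (so that $\ssp{N}_s\subseteq\ssp{N}=0$) collapses this to $\ssp{L}_1=0$, contradicting the presence of a genuine fault $f_1$. Hence $\ssp{L}_1\not\subseteq\op{S}^*$, and Theorem~\ref{Thm:NecSuffFDI} makes $f_1$ detectable and isolable by our scheme.

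I expect the main obstacle to be the disciplined use of the observability assumption: the identity $C\op{S}^*=C\ssp{W}_2^*$ and the base case already exclude any fault with $C\ssp{L}_1\neq0$, but a fault acting entirely within $\ker C$ is ruled out only because observability forces $\ssp{N}_s=0$. Isolating exactly where this hypothesis enters is the crux, since it is precisely the assumption that \cite{Malek_3DFDI} imposes and that our Theorem~\ref{Thm:NecSuffFDI} dispenses with. A lesser technical point is to justify the decomposition $\op{S}^*=\ssp{W}_2^*+\ssp{T}$ with $\ssp{T}\subseteq\ker C$ carefully from the monotonicity of algorithm~\eqref{Eq:Unob_MainAlg}.
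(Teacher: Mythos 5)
Your argument is correct in its core, but it reaches the key fact by a genuinely different route than the paper, and it stops one step short of the paper's full conclusion. The paper does not run an induction on $A^\alpha$: it constructs an output map $H$ with $HC\ssp{W}_1^*=C\ssp{W}_1^*$ and $HC\ssp{W}_2^*=0$ from the hypothesis $C\ssp{W}_1^*\cap C\ssp{W}_2^*=0$, observes that the invariant unobservable subspace $\ssp{N}_s^h$ of $(HC,\,A_1+D_1C,\,A_2+D_2C)$ is an unobservability subspace containing $\ssp{L}_2^1+\ssp{L}_2^2$ satisfying $\ssp{N}_s^h\cap\ssp{W}_1^*=0$, and then invokes minimality of $\op{S}^*$ to get $\op{S}^*\cap(\ssp{L}_1^1+\ssp{L}_1^2)=0$. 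Your route instead works directly with algorithm \eqref{Eq:Unob_MainAlg}: the identity $C\op{S}^*=C\ssp{W}_2^*$ is correct (every iterate adds only a subspace of $\ker C$ to $\ssp{W}_2^*$), and your induction is sound --- indeed it can be compressed by noting that $\op{S}^*\cap\ssp{W}_1^*\subseteq\ker C$ and is closed under $A_1,A_2$ by Lemma~\ref{Lem:CA_Inv}, hence is an $A_{1,2}$-invariant subspace of $\ker C$ and so sits inside $\ssp{N}_s$, which observability kills. What your version buys is transparency about exactly where the observability assumption of \cite{Malek_3DFDI} enters (the paper's step ``$\ssp{N}_s^h\cap\ssp{W}_1^*=0$'' silently needs the same fact); what the paper's version buys is that it simultaneously extracts the maps $D_1,D_2,H$ needed downstream.

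The one substantive omission: the paper's notion of solvability (Remark~\ref{Rem:ProbName}) has two steps, and its proof also discharges the second one --- it uses the internal/external stabilizability of $\ssp{W}_1^*,\ssp{W}_2^*$ guaranteed by Theorem~3 of \cite{Malek_3DFDI} to conclude that $(A_1+D_1C,\,A_2+D_2C)$, and hence the induced pair $(A_1^p,A_2^p)$ on the quotient \eqref{Eq:Quot_f1}, is stable, so that a convergent residual generator exists with $D_{o1}=D_{o2}=0$. Your proof establishes only condition \eqref{Eq:NecSufCon}, i.e., detectability and isolability. To claim that our approach ``can detect and isolate'' the faults in the full sense, you should add this stability argument; it comes for free from the same solvability hypothesis you are already assuming.
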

\begin{proof}
	According to Theorem 3 in \cite{Malek_3DFDI}, $C\ssp{W}^*_1\cap C\ssp{W}^*_2=0$, and  $\ssp{W}^*_1$ and $\ssp{W}^*_2$ are internally/externally stabilizable.  Therefore, there exist two maps $D_1$ and $D_2$ such that $\ssp{W}_1^*$ and $\ssp{W}_2^*$ are both  $[A+DC]_{1,2}$-invariant, and the pair ($A_1+D_1C$, $A_2+D_2C$) is stable (that is, the corresponding 2D system is asymptotically stable). Since $C\ssp{W}^*_1\cap C\ssp{W}^*_2=0$, there exists a map $H$ such that $H_1C\ssp{W}^*_1 = C\ssp{W}^*_1$ and $H_1C\ssp{W}^*_2 = 0$. It follows that $\ssp{N}_s^h\cap\ssp{W}^*_1=0$, where $\ssp{N}_s^h$ is the invariant unobservable subspace of ($H_1C$, $A_1+D_1C$, $A_2+D_2C$). Note that $\ssp{N}_s^h$ is an unobservability subspace of ($C$, $A_1$, $A_2$) containing  $\ssp{L}_2^1+\ssp{L}_2^2$, and since $\op{S}_1^*$ is the smallest unobservability subspace containing $\ssp{L}_2^1+\ssp{L}_2^2$, it follows that $\op{S}_1^*\cap(\ssp{L}_1^1+\ssp{L}_1^2)=0$.
	Also, since  ($A_1+D_1C$, $A_2+D_2C$) is stable, it can be shown that ($A_1^p$, $A_2^p$) is also stable, where $P[A_1+D_1C]=A_i^pP$ and $P$ is the canonical projection of $\op{S}_1^*$. Therefore, one can construct an observer for the residual system \eqref{Eq:Filter_F1} to detect and isolate the fault $f_1$ (by choosing $D_{o1}=D_{o2}=0$). By following along the same procedure, one can detect and isolate the fault $f_2$. This completes the proof of the theorem.
\end{proof}

\begin{remark}\label{Rem:Our_Others}
	Theorem \ref{Thm:Our_Maleki} shows that our proposed approach can detect and isolate faults that are detectable and isolable by using the geometric method in \cite{Malek_3DFDI}.
	However, as shown below an example is provided where \underline{this approach fails} whereas our proposed approach \underline{can still} detect and isolate the faults. Also, we show that in this example the methods in \cite{BisiaccoMultiDim,BisiaccoLetter} are \underline{also not applicable}.
\end{remark}
\subsubsection{Illustrative Example (Limitations of the Methods in \cite{BisiaccoMultiDim,BisiaccoLetter} and \cite{Malek_3DFDI})}\ \\
Consider the 2D system \eqref{Eq:FMII} that is subjected to two faults $f_1$ and $f_2$. As stated earlier, in this work our main concern is on actuator faults and it is assumed that the output signals are not affected by faults (refer to Remark \ref{Rem:onFMII-General}).

Consider the 2D system \eqref{Eq:FMII} that is subjected to two faults $f_1$ and $f_2$ where,
\begin{equation}\label{Eq:ConterExm}
	\begin{split}
		A_1 &= \bbm \bsm 0 &0\\ 0 &0.5\esm &0.5I\\ 0 &0_{2\times2}\ebm,\; A_2 = 0.5\bbm 0_{2\times 2} &0\\ I &I\ebm,\;L_1^2=L_2^2=0,\; B_1=B_2=0\\
		L_1^1 &= [0,0,0,1]^\tran,\;L_2^1 = [0,0,-1,1]^\tran,\; C=\bbm 1 &0 &0 &0\\ 0 &0 &0 &1\ebm
	\end{split}
\end{equation}

The FDI problem is solvable by using the approach in \cite{BisiaccoMultiDim} if and only if,
\begin{equation}
	\rank(\bbm I-z_1A_1-z_2A_2 &z_1[L^1_1,L^1_2]\\
	C	&0\ebm) = n\;,\;\;\forall z_1,z_2\in\fld{C}
\end{equation}
The full rankness of the PBH matrix \eqref{Eq:PBHMatrix} is the necessary condition for the above property \cite{BisiaccoMultiDim} (page 231-item (i)).
However, for the 2D system \eqref{Eq:ConterExm}, one gets $\rank(PBH(1,0))=3<4$. Therefore, the FDI problem is \underline{not} solvable by \underline{using the method in \cite{BisiaccoMultiDim}}.

To check the conditions of Theorem \ref{Thm:BisiacoLetter}, let us first obtain $N(z_1,z_2)$. Consider $[a,b,c,d,e,f]$ as a row of $N(z_1,z_2)$. Since $N(z_1,z_2)PBH(z_1,z_2)=0$, it follows that,
\begin{equation}
	\left\{\begin{split}
		&a-0.5z_2c+e =0\\
		&-0.5z_1a+(1-0.5z_2)c = 0
	\end{split}\right.\;\;,\;\; \left\{\begin{split}
	&0.5z_1b+(1-0.5z_2)d + f =0\\
	&(1-0.5z_1)b-0.5z_2d = 0
\end{split}\right..
\end{equation}
From the first set of equations, one choice for $a$, $c$ and $e$ is given by $a= 2-z_2$, $c=z_1$ and $e = 0.5z_2z_1+z_2-2$, respectively. Also, based on the second set one can write $b=z_2$, $d=2-z_1$ and $f=2-z_2-z_1$. It can be shown that $N(z_1,z_2)=\bbm a &0 &c &0 &e &0\\ 0 &b &0 &d &0 &f\ebm$. Therefore, $N(z_1,z_2)\bbm z_1[L^1_1,L^1_2]\\
0\ebm= z_1\bbm 0 &z_1\\ 2-z_1 &2-z_1\ebm$. It is easy to check that the $\rank(N(z_1,2)\bbm z_1[L^1_1,L^1_2]\\
0\ebm)<2$ for all $z_2\in\fld{C}$ and $z_1 = 2$. Hence, the necessary condition that is \underline{proposed in \cite{BisiaccoLetter}} (as given by equation \eqref{Eq:BisiaccoLetterCon}) is \underline{not satisfied}.

Furthermore, the necessary condition to detect and isolate the fault $f_1$ by using the approach in \cite{Malek_3DFDI} is $C\ssp{W}_1^*\cap C\ssp{W}_2^*=0$. Since $\ssp{L}_1^1,\ssp{L}_2^1\not\in\ker C$, by invoking the algorithm \eqref{Eq:CIAlg}, one obtains $\ssp{W}_{1}^*=\ssp{L}_1$ and $\ssp{W}_{2}^*=\ssp{L}_2$. It follows that, $C\ssp{W}_{1}^*\cap C\ssp{W}_{2}^* =\spanset{[0,\;1]^\tran}$. Therefore, the \underline{necessary condition in \cite{Malek_3DFDI}} is also \underline{not} satisfied. In other words, the fault $f_1$ \underline{cannot} be detected and isolated by using the detection filter \eqref{Eq:Filter}, if one restricts the filter to the case with $M=C$ (or $H=I$),  according to the required results in\cite{Malek_3DFDI}.

Finally, it is now shown and demonstrated that one can detect and isolate \underline{both} faults $f_1$ and $f_2$ by using \underline{our proposed} methodology. Towards this end, by invoking the algorithm \eqref{Eq:Unob_MainAlg}, one can write $\op{S}_1^* = \ssp{L}_2^1$ (that is, the finite unobservability subspace containing $\ssp{L}_2^1$) that satisfies the condition \eqref{Eq:NecSufCon}. By considering $D_1=\bbm 0 &0 &0 &0\\ 0.5 &-0.5 &0 &0\ebm^\tran$ and $D_2=\bbm 0 &0 &0 &0\\ 0 &0 &0.5 &-0.5\ebm^\tran$, $\ssp{W}^*_2$ is $[A+DC]_{1,2}$-invariant. Also, since  $\ker C+\op{S}_1^* = \ker H_1C$ and $MP_2=HC$, one gets $H_1=[1,\;0]$ and $M_1=[1,0,0]$. Hence, the residual \eqref{Eq:Res} that is only affected by the fault $f_1$ is given by $A_1^p = \bs\bbm 0 &0 &\frac{\sqrt{2}}{2}\\ 0 &0.5 &0\\ 0 &0 &0\ebm\es$, $A_2^p = \bs\bbm  0 &0 &0\\  0 &0 &0\\ \frac{\sqrt{2}}{2} &\frac{\sqrt{2}}{2} &0.5 \ebm\es$ and $B_1=B_2=0$.
Since the pair ($A_1^p$, $A_2^p$) is stable, by considering $D_{o1}=D_{o2}=0$, the detection filter for the fault $f_1$ (as given by equation \eqref{Eq:Filter}) is now obtained according to,
\begin{equation}
	\begin{split}
		\omega(i+1,j+1)&=A_1^p\omega(i,j+1)+ A_2^p\omega(i+1,j),\\
		r_1(i,j)&=M_1\omega(i,j)-H_1y(i,j).
	\end{split}
\end{equation}
By following along the same procedure, one can also design a detection filter to detect and isolate the fault $f_2$. Therefore, \underline{our proposed approach can} accomplish the FDI objectives while the approaches that are proposed in \underline{\cite{BisiaccoMultiDim, BisiaccoLetter} and \cite{Malek_3DFDI} cannot} achieve this goal.
\begin{remark}\label{Rem:Generic}
	All the conditions for solvability of the FDI problem in the literature (for both 1D and 2D systems) and also our proposed conditions are \underline{{\bf generic}}, although this fact is not explicitly mentioned. For clarification, consider the faulty model \eqref{Eq:FMII}, where $x\in\fld{R}^2$, $A_1=A_2=0.4*I$, $B_1^1=L_1^1=[1,1]^\tran$, $B_2^2=L_2^2=[0,1]$ and $C = [1,-1]^\tran$. Let the initial condition ${\bf x}(0)=0$ and $f_1(i,j)=1$ for all $i+j\geq0$. It follows that $y\equiv0$, and consequently $f_1$ is not detectable. However, below we show that  sufficient conditions in the literature \cite{BisiaccoLetter,BisiaccoMultiDim,Malek_3DFDI}  as well as our proposed conditions are still all satisfied.
	\begin{enumerate}
		\item (Conditions in \cite{BisiaccoLetter} and \cite{BisiaccoMultiDim}). It follows that  $N(z_1,z_2)PHB(z_1,z_2)=0$, where $N(z_1,z_2)=\bbm -1 &0 &0.4(z_1+z_2)-1\\ 0 &1 &0.4(z_1+z_2)-1\ebm$, and consequently the condition in Theorem \ref{Thm:BisiacoLetter} is also satisfied.
		\item (Conditions in \cite{Malek_3DFDI}). By following the algorithm \eqref{Eq:CIAlg} we obtain $\ssp{W}_1^*=\ssp{L}_1$ and $\ssp{W}_2^*=\ssp{L}_2$. It follows that $C\ssp{W}_1^*\cap C\ssp{W}_2^*=0$, and consequently the condition in \cite{Malek_3DFDI} is also satisfied.
		\item (Our proposed Conditions)  By following the algorithm \eqref{Eq:Unob_MainAlg} we obtain $\op{S}_1^*=\ssp{L}_1$ and $\op{S}_2^*=\ssp{L}_2$. It follows that $\op{S}_1^*\cap \ssp{L}_2=0$, and consequently the condition \eqref{Eq:NecSufCon} is also satisfied.
	\end{enumerate}
\end{remark}
To summarize, in this section we have developed and presented a solution to the FDI problem of 2D systems by invoking an Inf-D framework for the first time in the literature and by utilizing invariant subspaces and derived necessary and sufficient conditions for solvability of the problem. It was shown that if the sufficient conditions for solvability of the FDI problem that are provided in \cite{Malek_3DFDI,Malek_3DFDIConf} are satisfied, then our proposed approach can also detect and isolate the faults. However, as shown above  there are certain systems that the methods in \cite{Malek_3DFDI,BisiaccoMultiDim,BisiaccoLetter}  are not applicable and capable of detecting and isolating faults, whereas our proposed approach can solve this problem successfully.
\section{Simulation Results}\label{Sec:Simulation}
In this section, we apply our proposed  FDI methodology to a two-line parallel heat exchanger system \cite{HyperPDE_2D,HEleakageThesis}. Specifically, we verify the necessary and sufficient conditions that are derived in the previous section, and also design a set of filters to detect and isolate  faults under both full and partial state measurement scenarios (this is to be realized by an appropriate selection of the output matrix $C$).

The heat exchanger is usually subject to two different types of faults, namely the fouling and the leakage\cite{HEleakageThesis}. 
The mathematical model of a typical heat exchanger is governed by the following hyperbolic PDEs,
\begin{equation}\label{Eq:HEfaulty}
\begin{split}
\frac{\partial T_f}{\partial t} &= -\alpha_f \frac{\partial T_f}{\partial z} - \beta(T_f - T_g) -  f_2(z,t),\\
\frac{\partial T_{g}}{\partial t} &= -\alpha_{g} \frac{\partial T_{g}}{\partial z} - \beta(T_{g} - T_f) +  f_1(z,t) +f_2(z,t),
\end{split}
\end{equation}
where $T_f$ and $T_g$ denote the temperature of the cold (fuel)  and the hot (exhaust gas) sections, respectively. The coefficients $\alpha_f$ and $\alpha_{g}$ are proportionally dependent on the speed of the fluid and the gas, respectively, and the coefficient $\beta$ is related to the heat transfer coefficient of the wall \cite{HEleakageThesis}. Moreover, $f_1$ and $f_2$ denote the leakage and the fouling effects, respectively. Finally, it is  assumed that the boundary conditions (the inlet temperature) $T_f(t,0)$ and $T_g(t,0)$ are given and only the outer section (i.e. $T_g$) is subject to the leakage.

\subsection{The Approximation of Hyperbolic PDE Systems by 2D Models}\label{Sec:PDE2FMII}
Let us first illustrate and demonstrate how one can approximate a general hyperbolic PDE system by using 2D models and representations. Consider the following hyperbolic PDE system
\begin{equation}\label{Eq:GeneralPDE}
\pardiff{\tilde{x}}{t} = \tilde{A}_1\pardiff{\tilde{x}}{z} + \tilde{A}_2\tilde{x} + \tilde{B} {u} + \sum_{k=1}^p \tilde{L}_k\tilde{f}_k,
\end{equation}
where  $z$ denotes the spatial coordinate, $\tilde{x}(z,t)\in\fld{R}^n$, ${u}(z,t)\in\fld{R}^q$  and $\tilde{f}_k(z,t)\in\fld{R}$ denote the state, input and fault signals, respectively. Also, the operators $\tilde{A}_1$, $\tilde{A}_2$, $\tilde{B}$ and $\tilde{L}_k$'s are real matrices with appropriate dimensions. Note that every hyperbolic PDE system with constant coefficients can be represented in the form \eqref{Eq:GeneralPDE} with a diagonalizable $\tilde{A}_1$ \cite{SiamPDEBook} (Chapter 1, Detention 1.1.1).

By applying the finite difference method to the system \eqref{Eq:GeneralPDE}, one obtains
\bs
\begin{equation}\label{Eq:GeneralPDEApr}
\begin{split}
\frac{\tilde{x}(i\Delta z, (j+1)\Delta t)-\tilde{x}(i\Delta z, j\Delta t)}{\Delta t} = &\tilde{A}_1 \frac{\tilde{x}(i\Delta z, j\Delta t)-\tilde{x}((i-1)\Delta z, j\Delta t)}{\Delta z} + \tilde{A}_2 \tilde{x}(i\Delta z, j\Delta t) \\&+ \tilde{B} \tilde{u}(i\Delta z, j\Delta t)+\sum_{k=1}^p \tilde{L}_k\tilde{f}_k(i\Delta z, j\Delta t).
\end{split}
\end{equation}
\es
By setting $x(i,j)= \bbm\tilde{x}((i-1)\Delta z,j\Delta t) \\ \tilde{x}(i\Delta z,j\Delta t)\ebm$, we can now write
\begin{equation}\label{Eq:GeneralFMIIApr}
\begin{split}
x(i+1,&j+1) =  A_1x(i,j+1) + A_2x(i+1,j) + B_1 u(i+1,j) +\sum_{k=1}^p {L}_{k}{f}_k(i+1,j),
\end{split}
\end{equation}
where $u(i+1,j)=\tilde{u}(i,j)$ for all $i$ and $j$, $A_1 = \bbm 0 &I\\ 0 &0\ebm$ and
\begin{equation*}
\begin{split}
A_2= \bbm 0 &0\\ -\frac{\Delta t}{\Delta z} \tilde{A}_1  &(I+\frac{\Delta t}{\Delta z} \tilde{A}_1 + \Delta t \tilde{A}_2)\ebm,\;
B_1 = \bbm 0\\ \tilde{B}\ebm, \; L_k=\bbm 0\\ \tilde{L}_k\ebm, \; f(i,j)=\tilde{f}(i+1,j).
\end{split}
\end{equation*}

Therefore, the PDE system \eqref{Eq:GeneralPDE} can be approximated by the FMII 2D model \eqref{Eq:GeneralFMIIApr}.
\begin{remark}
	As shown in \cite{Parabolic3D}, the \underline{parabolic} PDE systems can be approximated by FMII 3D models. As shown in Appendix \ref{Sec:3dFM}, our proposed FDI methodology for 2D systems can be extended to 3D systems, therefore the extension of the results of our proposed FDI methodology can also be applied to parabolic PDE systems.\qed
\end{remark}

For the purpose of conducting simulations, the parameters in equation \eqref{Eq:HEfaulty} are taken as $\alpha_f=\alpha_g=\beta=1$. Also, by considering $\Delta z=\Delta t=0.1$ one can discretize the system \eqref{Eq:HEfaulty} as,
\bs
\begin{equation}\label{Eq:FMIISimComplete}
\begin{split}
x(i+1,j+1) = &\bbm 0_{2\times2}  &I_{2\times2}\\ 0_{2\times2} &0_{2\times2}\ebm x(i,j+1) +
\bbm 0_{2\times2} &0_{2\times2}\\
I_{2\times2} &\bsm -0.1 &0.1\\0.1 &-0.1\esm
\ebm x(i+1,j)+ L_{1}^1 f_1(i+1,j)\\&+ B_2^1 u(i+1,j) + L_{2}^1 f_2(i+1,j),\\
y(i,j) = &Cx(i,j),
\end{split}
\end{equation}\es
where $\bs L_{1}^1 = [0,0,0,1]^\tran\es$, $\bs L_{2}^1=[0,0,-1,1]^\tran\es$ and $\bs x(i,j) = [T_g((i-1)\Delta z,j \Delta t), T_f((i-1)\Delta z,j \Delta t),\es$
$ \bs T_g(i\Delta z,j \Delta t), T_f(i\Delta z,j \Delta t)]^\tran\es$. Finally, by following along the same steps as in \cite{ACC2013}, one obtains $B_2 = \bbm 0_{2\times2} \\  \bsm 1.1 &-0.1 \\ -0.1 &1.1\esm\ebm$, $\bs u(0,j) =[T_f(0, \Delta t),T_g(0, \Delta t)]^\tran\es$ and $u(i,j)=u(0,j)$ for all $i$.

We first assume that both temperatures $T_f$ and $T_g$ are available for measurement along the spatial coordinates at discrete points (i.e., $T_f(i\Delta z,j\Delta t)$ and $T_g(i\Delta z,j\Delta t)$ are available from sensors). Next, we consider the case where only the outer temperature (that is, $T_g$) is available for measurement. As we shall show subsequently, in the latter case by using the 1D approximate ODE model (for example, as in \cite{BoundaryControl_1DApp}), the faults $f_1$ and $f_2$ are \underline{not} isolable, whereas by using our proposed 2D FDI  methodology one can detect and isolate  both faults.
\subsection{FDI of a Heat Exchanger by Using Full State Measurements}
Let us assume that both temperatures (namely, $T_f$ and $T_g$ ) are available for measurement. Therefore, one can select the output matrix as $C = \bbm 1 &0 &0 &0\\ 0 &0 &0 &1\ebm$.

\subsubsection{Detectability and Isolability Conditions}\ \\
By applying the algorithm \eqref{Eq:CIAlg}, the minimal finite conditioned invariant subspace containing $\ssp{L_2^1}=\spanset{L_2^1}$ is obtained as $\ssp{W}_{2}^* = \ssp{L}_2^1$, and by applying the algorithm \eqref{Eq:Unob_MainAlg}, one obtains $\op{S}^*=\ssp{W}_{2}^*=\ssp{L}_2^1$. It follows that $\ssp{L}_1^1\cap\op{S}^* =0$. Therefore, the sufficient condition (as given by Theorem \ref{Thm:NecSuffFDI}) is satisfied. By following along the same lines the necessary and sufficient condition for detectability and isolability of the fault $f_2$ can also be shown to be satisfied.

\subsubsection{FDI 2D Luenberger Filter Design}\ \\
As stated earlier, we are interested in designing 2D Luenberger detection filters by using the LMI condition that is proposed in the Subsection \ref{Sec:Observer}. In this part of the paper, we design a filter for detecting and isolating the fault $f_1$ (without loss of any generality, by following along the same lines as conducted below one can also design a filter to detect and isolate the fault $f_2$). The 2D detection filter must be decoupled from the fault $f_2$ (refer to the conditions in equation \eqref{Eq:FDIP}). As stated above, the finite unobservability subspaces containing the subspace $\ssp{L}_2^1$ is obtained by using the algorithm \eqref{Eq:Unob_MainAlg} and is given by $\op{S}^{*} = \mathscr{W}_{2}^{*}=\ssp{L}_2^1$.

The output injection matrices $D_1$ and $D_2$ for $\mathscr{W}_{2}^{*}$ are to be derived such that $\mathscr{W}_{2}^{*}$ is $[A+DC]_{1,2}$-invariant. Therefore, one can write
$(A_1+D_1C) W_{2}^{c}= 0$ and $(A_2+D_2C) W_{2}^{c}= 0$, where the columns of $W_{2}^{c}$ are the basis of $\ssp{W}_{2}^*\cap(\ssp{W}_{2}^*\cap\ker C)^\bot$. One solution to $D_1$  and $D_2$ is $D_1=\bbm 0 &0 &0 &0\\1 &-1 &0&0\ebm^\tran$ and $D_2 = \bbm 0 &0 &0 &0\\0 &0 &-0.2&0.2\ebm^\tran$. 
Also, let $P=\bbm 1 &0 &0 &0\\ 0 &1 &0 &0\\ 0 &0 &1 &1\ebm$ (which is the canonical projector of the subspace $\op{S}_1^{*}$), where $P$ is used in equation \eqref{Eq:Filter_F1}. By using $\ker H_1C=\op{S}_1^*+\ker C$ and $M_1P_1=H_1C$, we have $H=[1,\;0,\;0]$ and the output matrix $M$ becomes $M=[1,0,0]$.  Hence, the factored out 2D system is now expressed as,
\bs
\begin{equation*}\label{Eq:SimQuSubSys}
\begin{split}
\omega(i+1,j+1) &= \left[ \bsm 0 &0 &-1.42\\ 0 &0 &0\\0 &0 &0\esm\right] \omega(i,j+1) + \left[ \bsm 0 &0 &0\\0 &0 &0\\ -0.7 &-0.7 &0\esm\right] \omega(i+1,j) + \bbm \bsm 0 &0\\0 &0\\0.7 &0.7\esm \ebm u(i+1,j) + \bbm\bsm 0\\1\\0\esm\ebm f_1(i+1,j),\\
y_p(i,j) &= [1,0,0]\omega(i,j),
\end{split}
\end{equation*}
\es
where $\omega(i,j) = P_1 x(i,j)$. It is straightforward to show that the positive definite matrices  $R_1 = \diag(0.4, 1, 2.133)$ and $R_2 = \diag(0.4, 2.15, 0.86)$ satisfy the inequality  $W_{c_d}^\tran A_C W_{c_d}<0$. By using Remark \ref{Rem:ObserverLMIConstructive}, one can obtain $D_{o1}=0$ and $D_{o2} = \bbm 0 &0 &-0.7\ebm^\tran$. Therefore, the filter to detect and isolate the fault $f_1$ is given by,
\begin{equation*}\label{Eq:SimQuSubSys2}
\begin{split}
\omega(i+1,j+1) = &\left[ \bsm 0 &0 &-1.42\\ 0 &0 &0\\0 &0 &0\esm\right] \omega(i,j+1) + \left[ \bsm 0 &0 &0\\0 &0 &0\\ -0.7 &-0.7 &0\esm\right] \omega(i+1,j) + \bbm \bsm 0 &0\\0 &0\\0.7 &0.7\esm \ebm u(i+1,j) \\ &+ D_{o2}P_2y(i,j+1),\\
r_1(i,j) &= [1,0,0]\omega(i,j)-Hy(i,j).
\end{split}
\end{equation*}
Designing a filter to detect and isolate the fault $f_2$ follows  along  the same lines as those given above for the fault $f_1$. These details are not included for brevity. 
\subsubsection{Threshold Computation}\ \\
Due to presence of input and output noise, disturbances, and uncertainties in the model, the value of the residual $r_k(i,j)$ is not exactly equal to zero under the fault free situation.
Therefore, to reduce the number of false alarm flags one needs to apply threshold bands to the residual signals. In this subsection, we present an approach for determining  the thresholds that are needed for achieving the FDI task. For 1D systems, there are a number of approaches for computing a threshold, e.g. based on the maximum or the root mean square (RMS) of the residual signal \cite{ThresholdCal}. In this work, we use the maximum residual norm. However, one can also apply the RMS approach to 2D systems.

Consider the FMII 2D model \eqref{Eq:FMII} subject to the fault free situation. The threshold $th_k$ is then determined from,
\begin{equation}
th_k = \begin{aligned}
& \underset{\ell<N_0}{\text{Max}}
& & |r_k^\ell(i,j)|
\end{aligned}, \;\; \mathrm{for}\; i,j\leq N_1
\end{equation}
where $|\cdot|$ denotes a norm function (in this work we use the norm-2), $N_0$ is the number of the Monte Carlo simulations (refer to \cite{MonteCarloBook}) that are used to determine the threshold, $r_k^\ell(\cdot,\cdot)$ denotes the signal of $k^{\mathrm{th}}$ residual in the $\ell^{\mathrm{th}}$ length and $N_1$ is a sufficiently large number.

By utilizing  predefined thresholds that are denoted by $th_k, k=1,2$, the FDI logic can be summarized as follows,
\begin{equation}\label{Eq:FDILogic}
\begin{split}
\mathrm{if}\;r_1 >th_1 \; \Rightarrow \; \mathrm{the\;fault}\; f_1\;\mathrm{has\;occurred}.\\
\mathrm{if}\;r_2 >th_2 \; \Rightarrow \; \mathrm{the\;fault}\; f_2\;\mathrm{has\;occurred}.\\
\end{split}
\end{equation}

Let us now consider two scenarios. In the first scenario, a single fault $f_1$ with the severity of $1$ occurs at ($i=5$ and $j\geq60$). In the second scenario, multiple faults $f_1$ and $f_2$ with the severity of $1$  occur at ($i=5$ and $j\geq50$) and ($i=5$ and $j\geq70$), respectively. The residuals $r_1$ and $r_2$ for the first scenario  are shown in Figure \ref{fig:ScenarioI}. For the second scenario, the results are shown in Figure \ref{fig:ScenarioII}. The thresholds are determined by conducting  Monte Carlo simulations \cite{MonteCarloBook} corresponding to the healthy 2D system. As shown in Figures \ref{fig:ScenarioI} and \ref{fig:ScenarioII}, our proposed methodology can detect and isolate the faults in both single- and multiple-fault scenarios (according to the FDI logic that is given by equation \eqref{Eq:FDILogic}).
\begin{figure}
	\centering
	\begin{subfigure}{0.4\textwidth}
		\includegraphics[scale = 0.45]{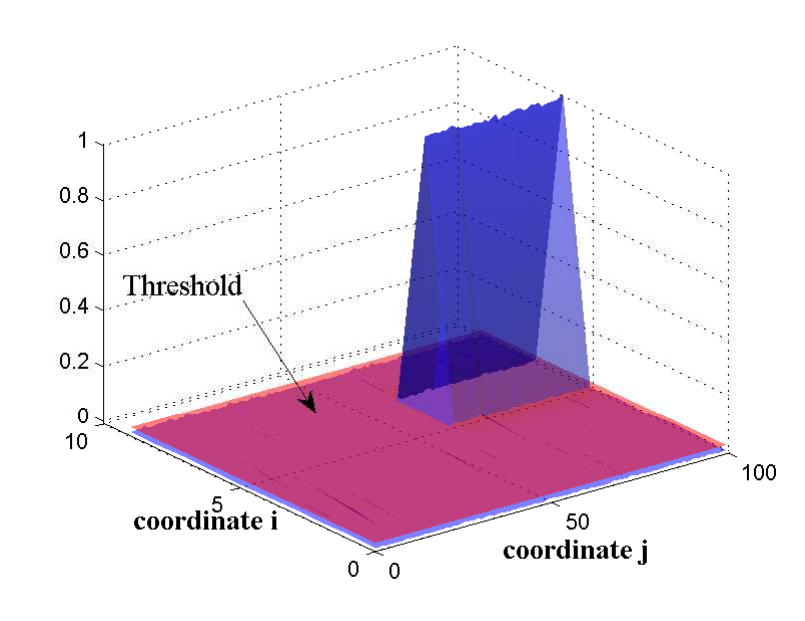}
		\caption{The residual signal $r_1$ for detecting and isolating the fault $f_1$.}
		\label{fig:FaultFl_Ff}
	\end{subfigure}
	~
	\begin{subfigure}{0.4\textwidth}
		\includegraphics[scale = 0.45]{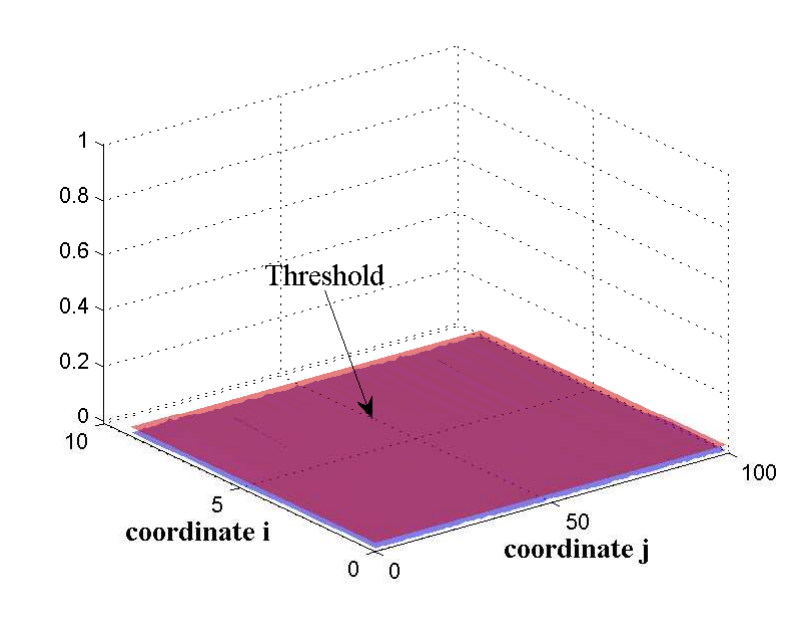}
		\caption{The residual signal $r_2$ for detecting and isolating the fault $f_2$.}
		\label{fig:FaultFl_Fl}
	\end{subfigure}
	\caption{The residual signals for detecting and isolating the fault $f_1$ for the first scenario (a single fault case) using full state measurement.}
	\label{fig:ScenarioI}
	\vspace{-2mm}
\end{figure}

\begin{figure}
	\centering
	\begin{subfigure}{0.45\textwidth}
		\includegraphics[scale=0.45]{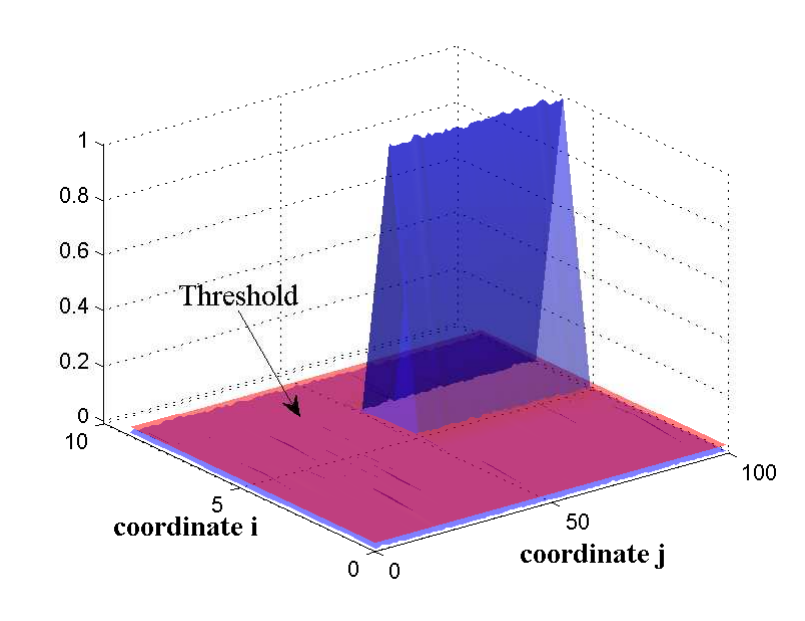}
		\caption{The residual signal $r_1$ for detecting and isolating the fault $f_1$.}
		\label{fig:FaultBoth_Ff}
	\end{subfigure}
	\begin{subfigure}{0.45\textwidth}
		\includegraphics[scale=0.45]{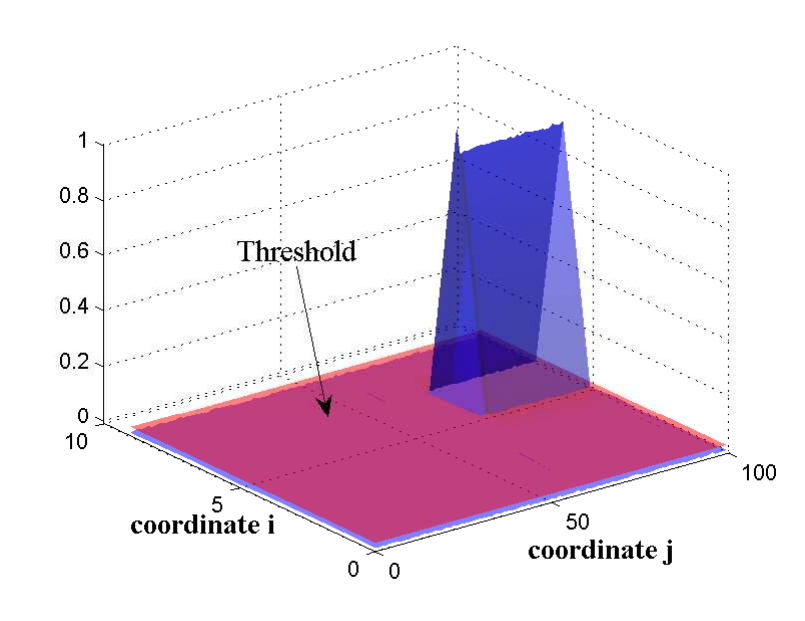}
		\caption{The residual signal $r_2$ for detecting and isolating the fault $f_2$.}
		\label{fig:FaultBoth_Fl}
	\end{subfigure}
	\caption{The residual signals for detecting and isolating the faults $f_1$ and $f_2$ for the second scenario (a multiple fault case) using full state measurement.}
	\label{fig:ScenarioII}
	\vspace{-2mm}
\end{figure}
\subsection{FDI of a Heat Exchanger Using Partial State Measurements}
In this section, we assume that only the outer temperature ($T_g$) is available for measurement. This corresponds to a more practical and physically feasible scenario in various applications (sensing the inner temperature requires a more sophisticated hardware). In this case, we set $C = \bbm 0 &1 &0 &0\\0 &0 &0 &1\ebm$ (in other words, we measure $T_g(i-1,j)$ and $T_g(i,j)$). In this subsection, we demonstrate and illustrate the capabilities of our proposed FDI approach based on the 2D system modeling, whereas it is shown that by using a \underline{1D approximation} of the PDE system \eqref{Eq:HEfaulty} the FDI problem \underline{cannot} be solved.
\subsubsection{1D Approximation of the Heat Exchanger System}\ \\
The hyperbolic PDE system \eqref{Eq:HEfaulty} can be approximated by applying discretization through the $z$ coordinates as follows. Let $\ell$ denote the length of the heat exchanger that is discretized  into $N$ equal intervals (i.e., $\Delta z = \frac{\ell}{N}$). By using the approximation $\pardiff{T_g(k\Delta z,t)}{z} = \frac{T_g(k\Delta z,t)-T_g((k-1)\Delta z, t)}{\Delta z}$, one can represent the PDE system \eqref{Eq:HEfaulty} by the following 1D approximate model,
\begin{equation}\label{Eq:1DApproximation}
\begin{split}
\dot{x} &= Ax+Bu+\sum_{k=1}^N (L_1^k f_1^k+L_2^kf_2^k),\\
y &= Cx,
\end{split}
\end{equation}
where $x(t) = \bbm T_g(\Delta z,t), T_f(\Delta z,t),\cdots,T_g(N\Delta z, t),T_f(N\Delta z, t)\ebm^\tran\in\fld{R}^{2N}$, $B = \bbm 1,0,0,0,\cdots,0\\0,1,0,0,\cdots,0\ebm^\tran$, and $f_i^k(t) = f_1(k\Delta z,t)$ ($i=1,2$). Also, $L_1^k = [\underbrace{0,\cdots,0}_{(k-1)},-1,1,0,\cdots]^\tran$, and the fault signatures $L_2^k$  are  $2N$-dimensional vectors such that only the $k^\mathrm{th}$ element is 1 and the rest are zeros. Moreover,
\bs
\begin{equation}\label{Eq:1DApproximation_par}
\begin{split}
A = \bbm A_1 &0 &0  &\cdots\\
A_2 &A_1 &0  &\cdots\\
0 &A_2 &A_1  &\cdots\\
0 &0 &\ddots &\ddots &\ddots\ebm, C = \bbm 1 &0 &0 & 0&\cdots\\
0 &0 &1 &0 &\cdots\\
0  &0 &0 &0 &\ddots\ebm
\end{split}
\end{equation}
\es
in which $\bs A_1 = \bbm -\frac{1+\Delta z}{\Delta z} &1\\1 &-\frac{1+\Delta z}{\Delta z}\ebm\es$ and $\bs A_2 = \bbm -\frac{1}{\Delta z} &0\\0 &-\frac{1}{\Delta z}\ebm\es$. Now, consider the faults $f_1^k$ and $f_2^k$. It can be shown that $\ssp{W}^*_{2}=\spanset{L_1^k,L_2^k}$, where $\ssp{W}^*_{2}$ is the minimal conditioned invariant subspace containing $\ssp{L}_2^k$ (from the 1D system perspective). Consequently, $\op{S}^*_{1D} \cap \ssp{L}_1^k\neq 0$ ($\op{S}^*_{1D}$ denotes the unobservability subspace containing $\ssp{L}_2^k$ in the 1D system sense). Consequently, the faults $f_1^k$ and $f_2^k$ {\bf are not} isolable. However,  we show below that the faults can be detected and isolated if one approximates the system \eqref{Eq:HEfaulty} by using the 2D model representation.
\subsubsection{2D Representation of the Heat Exchanger}\ \\
Let us set $x_1(i,j) = T_g(i,j)+\frac{\Delta t}{\Delta z}f_1(i,j)$ and $x_2(i,j) = T_f(i,j)$, so that the system \eqref{Eq:HEfaulty} is approximated by the system \eqref{Eq:FMIISimComplete} where all the operators are defined as before except for $\bs L_1^1 = \bbm 0 &1 &0 &0\ebm\es$ and $\bs C= \bbm 0 &1 &0 &0\\0 &0 &0 &1\ebm\es$. Note that since only the state $T_g$ is assumed to be available for measurement, we sense $T_g(i-1,j)$ and $T_g(i,j)$. By applying the algorithm \eqref{Eq:Unob_MainAlg}, where $\ssp{L}=\ssp{L}_2^1$, one gets $\op{S}^{*} = \spanset{L_2^1,\bbm 1 &0 &0 &0\ebm^\tran,\bbm 0 &0 &1 &1\ebm^\tran}$. Since, $\ssp{L}_1^1\cap\op{S}^* = 0$, the fault $f_1$ is both detectable and isolable (according to  Theorem \ref{Thm:NecSuffFDI}). It should be noted that  by applying the approaches in \cite{BisiaccoMultiDim,BisiaccoLetter,Malek_3DFDI}, one can also detect and isolate the fault $f_1$. By following along the same lines as the ones given earlier one can show that the fault $f_2$ is also detectable and isolable, where one can  design the required detection filters. Figures \ref{fig2:ScenarioI} and \ref{fig2:ScenarioII} depict the simulation results for the scenarios that presented above. As can be observed from Figures \ref{fig:ScenarioII} and \ref{fig2:ScenarioII} (particularly, Figures \ref{fig:FaultBoth_Fl} and \ref{fig:FaultBoth_Fl_2}), it follows that fewer available information (recall that both $T_f$ and $T_g$ are measurable in Figure \ref{fig:ScenarioII}, whereas in Figure \ref{fig2:ScenarioII}, only $T_g$ is measurable) results in a situation where the spatial coordinate of a fault cannot be estimated accurately.  
\begin{figure}
	\centering
	\begin{subfigure}{0.4\textwidth}
		\includegraphics[scale = 0.45]{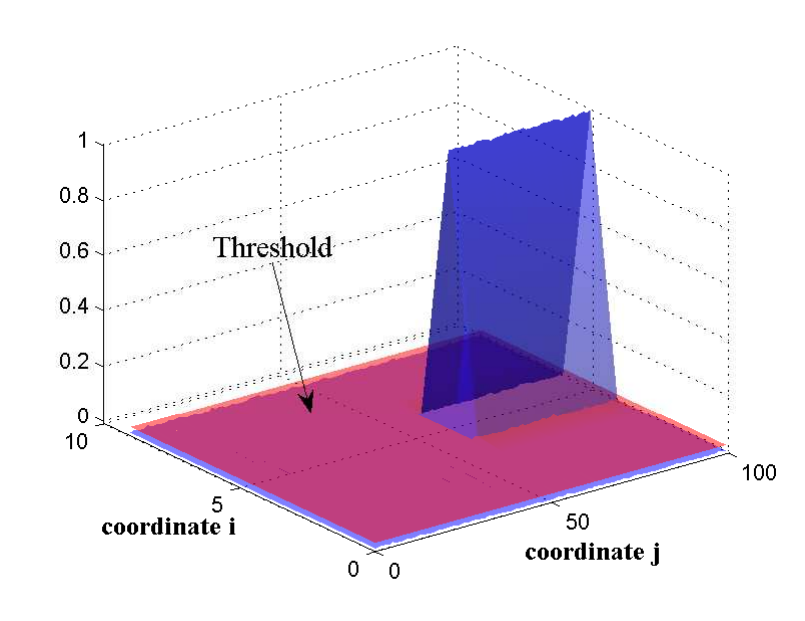}
		\caption{The residual signal $r_1$ for detecting and isolating the fault $f_1$.}
		\label{fig:FaultFl_Ff_2}
	\end{subfigure}
	~
	\begin{subfigure}{0.4\textwidth}
		\includegraphics[scale = 0.45]{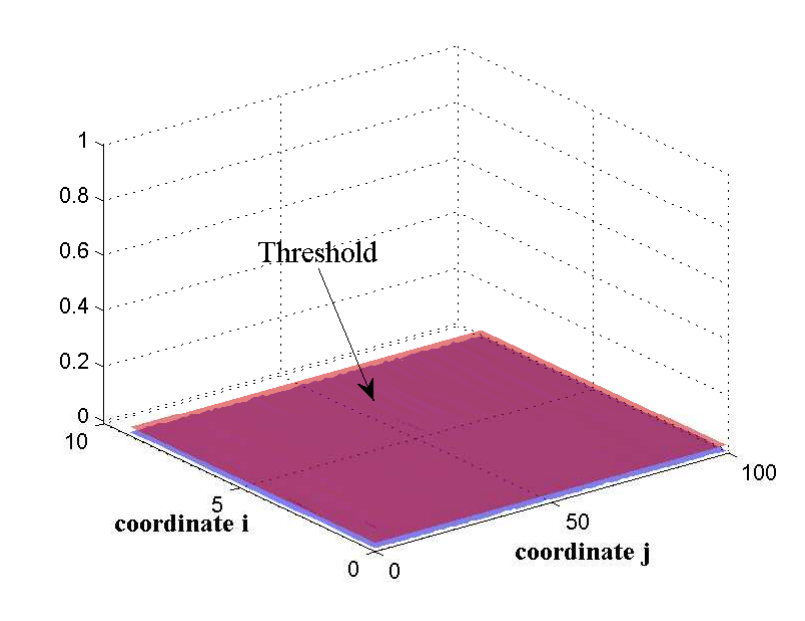}
		\caption{The residual signal $r_2$ for detecting and isolating the fault $f_2$.}
		\label{fig:FaultFl_Fl_2}
	\end{subfigure}
	\caption{The residual signals for detecting and isolating the fault $f_1$ for the first scenario (a single fault case) using partial state measurement.}
	\label{fig2:ScenarioI}
	\vspace{-2mm}
\end{figure}

\begin{figure}
	\centering
	\begin{subfigure}{0.45\textwidth}
		\includegraphics[scale=0.45]{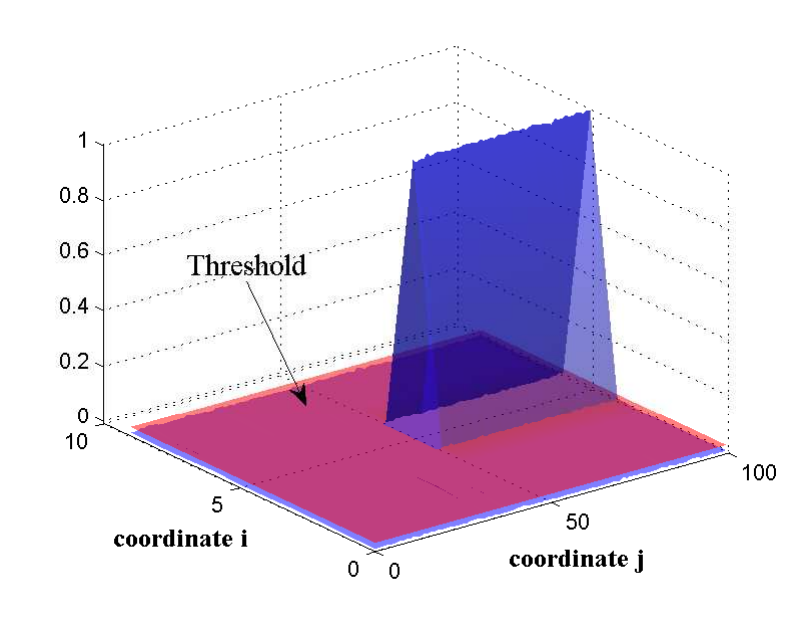}
		\caption{The residual signal $r_1$ for detecting and isolating the fault $f_1$.}
		\label{fig:FaultBoth_Ff_2}
	\end{subfigure}
	\begin{subfigure}{0.45\textwidth}
		\includegraphics[scale=0.45]{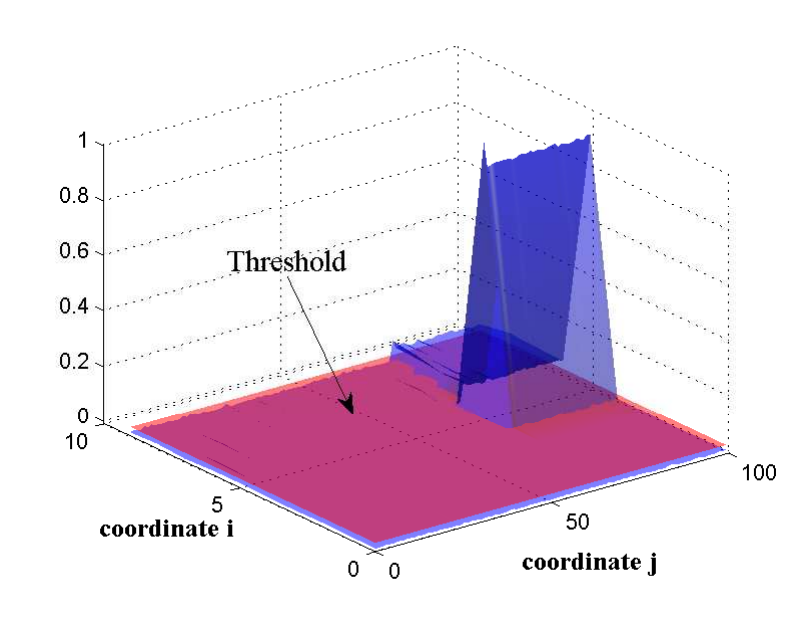}
		\caption{The residual signal $r_2$ for detecting and isolating the fault $f_2$.}
		\label{fig:FaultBoth_Fl_2}
	\end{subfigure}
	\caption{The residual signals for detecting and isolating the faults $f_1$ and $f_2$ for the second scenario (a multiple faults) using partial state measurement.}
	\label{fig2:ScenarioII}
	\vspace{-2mm}
\end{figure}
\vspace{-2mm}
\section{Conclusions}\label{Sec:Conc}
The fault detection and isolation (FDI) problem for discrete-time two-dimensional (2D) systems represented by the Fornasini-Marchesini model II (where other 2D models are subclasses  of this representation) is investigated in this work. In order to derive the necessary and sufficient conditions for solvability of the FDI problem, the notion of the conditioned invariant and unobservability subspace of 1D systems was generalized to 2D systems by using an infinite dimensional (Inf-D) framework and representation. Moreover, algorithms for computing and constructing these subspaces are introduced and provided that converge in a \underline{finite and known} number of steps. By applying the linear matrix inequality (LMI) approach, sufficient conditions for existence of an asymptotically convergent 2D state estimation observer is derived. Necessary and Sufficient conditions for  solvability of the FDI problem are also provided.  Furthermore, another set of sufficient conditions (that are based on  deadbeat observers) are also provided. It was shown that although the sufficient conditions for applicability of  the currently available geometric results in the literature are also sufficient for our proposed approach to accomplish the FDI goal, however, there are 2D systems where the geometric approaches in the literature \underline{are not applicable} to detect and isolate the faults, whereas our approach \underline{can still} achieve the FDI objective and task. Finally, simulation results are provided for the application of our proposed FDI methodology to an heat exchanger system to demonstrate and illustrate the capabilities and advantages of our proposed solution as compared to the alternative 1D representation and 1D FDI approaches.
\appendix
\section{Generalization to 3D Systems}\label{Sec:3dFM}
The results that are derived in this paper are dependent on invariant concepts that are
provided in previous sections. Since these invariant subspaces are independent of the order
of A1 and A2, there is a natural extension that is possible to FMII 3D models. In this
appendix, we briefly show how our previous results can be extended to 3D systems.
Furthermore, it should be pointed out that one can also extend these results to nD systems

Consider the following 3D FMII model
\bs
\begin{equation}\label{Eq:FM3D}
\begin{split}
&x(i+1,j+1,k+1)= A_1x(i,j+1,k+1)+ A_2x(i+1,j,k+1)\\&+A_3x(i+1,j+1,k)+ B_1u(i,j+1,k+1)\\&+ B_2u(i+1,j,k+1)+B_3u(i+1,j+1,k),\\
&y(i,j,k) = Cx(i,j,k).
\end{split}
\end{equation}
\es
The corresponding Inf-D system is given by
\begin{equation}\label{Eq:IDRep3D}
\begin{split}
{\bf x}(k+1) &= \op{A}{\bf x}(k),\;\;\;\; k\in\underline{\fld{N}} \\
{\bf y}(k) &=\op{C}{\bf x}(k),
\end{split}
\end{equation} 
where ${\bf x}(k)\in\op{X}=\sumbanach{\fld{R}^n}$, ${\bf y}(k)=(\cdots,y(-1+k,1)^\tran,y(k,0)^\tran,y(1+k,-1)^\tran,\cdots)^\tran\in\sumbanach{\fld{R}^q}$, and $\op{A}$ is an Inf-D matrix with $A_1$, $A_2$ and $A_3$ as diagonal and upper  diagonal blocks, respectively, with the remaining elements set to zero, and $\op{C} = \diag(\cdots,C,C,\cdots)$. In other words, we have,
\begin{equation}
\op{A} = \bbm \ &\ddots &\ddots &\cdots & & &\cdots\\
\cdots &0 &A_1 &A_2 &A_3 &0 &\cdots\\
\cdots &0 &0 &A_1 &A_2 &A_3 &\cdots\\
\cdots & &\cdots & & &\ddots &\ddots \ebm , 
\op{C} = \bbm \ddots & &\cdots & &\cdots\\
\cdots &0 &C &0 &\cdots\\
\cdots &0 &0 &C &\cdots \\
\cdots & &\cdots & &\ddots\ebm												
\end{equation}

The generalization of the Definition \ref{Def:A12_Inv} can be stated as follows.
\begin{definition}
	The subspace $\ssp{V}$ is called $A_{1,2,3}$-invariant if $A_i\ssp{V}\subseteq\ssp{V}$ for all $i=1,2,3$.\qed
\end{definition}
Also, it can be shown that the finite unobservable subspace of the 3D system \eqref{Eq:FM3D} is given by   $\ssp{N}=\bigcap_{i+j+k=0}^n \ker CA^{(i,j,k)}$,

where
\begin{equation*}
\begin{split}
A^{(i,j,k)} &= A_1 A^{(i-1,j,k)}+A_2 A^{(i,j-1,k)}+ A_3A^{(i,j,k-1)},\\
A^{(0,0,0)} &= I\;\;\; A^{(i,j,k)}=0\;\mathrm{if} \; i \; \mathrm{or} \; j\; \mathrm{or}\; k < 0.
\end{split}
\end{equation*}
In view of the above, one can now extend the invariant unobservable subspace $\ssp{N}_s$ to
$\ssp{N}_s = \bigcap_{||\alpha||>0}\ker CA^\alpha$,
where, $\alpha$ is a multi-index parameter, and $|\alpha|$ and $||\alpha||$ are defined according to the same manner that were defined for 2D systems (refer to the notation in the Introduction section). For instance, if $\alpha = (3,1,3,2,1,1,1,3,3,3)$, then $|\alpha| = (4,1,5)$ and $||\alpha|| = 10$. Also, we have $\ssp{N}_{s,\infty}=\sum\ssp{N}_s$.

By following along the same lines as in Theorem \ref{Thm:AID_A12Inv} and Lemma \ref{Lem:CA_Inv}, we have the following corollaries.
\begin{corollary}\label{Col:AID_A123}
	Consider the 3D system \eqref{Eq:FM3D} and the Inf-D system \eqref{Eq:IDRep3D}. Let $\ssp{V}_\infty = \sum \ssp{V}$, where $\ssp{V}\subseteq\fld{R}^n$. The subspace $\ssp{V}_\infty$ is $\op{A}$-invariant if and only if $\ssp{V}$ is $A_{1,2,3}$-invariant.
\end{corollary}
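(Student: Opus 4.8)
The plan is to transcribe the two-direction argument of Theorem \ref{Thm:AID_A12Inv} almost verbatim, the only structural change being that the Inf-D operator $\op{A}$ of the 3D system \eqref{Eq:IDRep3D} now carries \emph{three} nonzero bands ($A_1$ on the diagonal, $A_2$ on the first upper diagonal, and $A_3$ on the second upper diagonal) instead of two. First I would reduce the statement to single-block Inf-D vectors: every ${\bf x}\in\ssp{V}_\infty$ can be written as ${\bf x}=\sum_{k=-\infty}^\infty {\bf x}_k^k$ with ${\bf x}_k^k=[\cdots,0,x_k^\tran,0,\cdots]^\tran$ and $x_k\in\ssp{V}$, and since $\op{A}$ is linear and bounded (by the obvious three-block analogue of Lemma \ref{Lem:Boundedness}, with $G=3\max(|A_1|,|A_2|,|A_3|)$), it suffices to analyze the action of $\op{A}$ on a single ${\bf x}_k^k$.

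The key computation is the band structure: applying $\op{A}$ to ${\bf x}_k^k$ produces $\op{A}{\bf x}_k^k=[\cdots,0,(A_3x_k)^\tran,(A_2x_k)^\tran,(A_1x_k)^\tran,0,\cdots]^\tran$, so the single block $x_k$ is spread over three consecutive slots through $A_3$, $A_2$ and $A_1$. For the \textbf{if part}, assuming $\ssp{V}$ is $A_{1,2,3}$-invariant gives $A_ix_k\in\ssp{V}$ for each $i\in\{1,2,3\}$, so every nonzero block of $\op{A}{\bf x}_k^k$ lies in $\ssp{V}$, hence $\op{A}{\bf x}_k^k\in\ssp{V}_\infty$; summing over $k$ yields $\op{A}\ssp{V}_\infty\subseteq\ssp{V}_\infty$. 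For the \textbf{only-if part}, taking ${\bf x}_0^0\in\ssp{V}_\infty$ (so $x_0\in\ssp{V}$) together with $\op{A}\ssp{V}_\infty\subseteq\ssp{V}_\infty$ forces each of the three nonzero blocks of $\op{A}{\bf x}_0^0$ to belong to $\ssp{V}$, i.e. $A_1x_0,A_2x_0,A_3x_0\in\ssp{V}$; since $x_0$ ranges over all of $\ssp{V}$, this is precisely $A_{1,2,3}$-invariance of $\ssp{V}$.

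I do not expect any substantive obstacle, since this is a direct extension of the 2D proof. The one point that requires genuine care is the bookkeeping of the three-band structure: one must verify that $A_3x_k$, $A_2x_k$ and $A_1x_k$ land in three \emph{distinct} coordinate positions of the direct sum $\oplus\ssp{V}$, so that in the only-if direction membership of $\op{A}{\bf x}_0^0$ in $\ssp{V}_\infty$ can be read off componentwise and yields $A_ix_0\in\ssp{V}$ for each $i$ \emph{separately} (rather than merely for some linear combination of the $A_ix_0$). This componentwise separation is exactly what matches the generalized definition requiring invariance under each $A_i$ individually, and it is the only place where the passage from two to three directions must be checked explicitly.
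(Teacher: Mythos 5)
Your proof is correct and is precisely the argument the paper intends: the paper gives no separate proof of this corollary, stating only that it follows "along the same lines" as Theorem \ref{Thm:AID_A12Inv}, and your transcription of that two-direction, single-block argument to the three-banded operator (with the correct placement of $A_3x_k$, $A_2x_k$, $A_1x_k$ in distinct consecutive slots) is exactly what is needed. No discrepancies.
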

\begin{corollary}
	Consider the 3D system \eqref{Eq:FM3D} and the Inf-D system \eqref{Eq:IDRep3D}. The following statements are equivalents.
	\begin{enumerate}
		\renewcommand{\labelenumi}{(\roman{enumi})}
		\item The subspace $\ssp{W}_\infty$ is conditioned invariant.
		\item  $A_1(\ssp{W}\cap\ker C)+A_2(\ssp{W}\cap\ker C)+A_3(\ssp{W}\cap\ker C)\subseteq\ssp{W}$.
		\item $\op{A}(\ssp{W}_\infty\cap\ker\op{C})\subseteq\ssp{W}_\infty$.
	\end{enumerate} 
	where $\ssp{W}_\infty = \sum \ssp{W}$.
\end{corollary}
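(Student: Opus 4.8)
The plan is to mirror the proof of Lemma~\ref{Lem:CA_Inv} almost verbatim, upgrading the two-block structure to a three-block structure and invoking Corollary~\ref{Col:AID_A123} wherever the 2D argument used Theorem~\ref{Thm:AID_A12Inv}. For the directions $(i)\Rightarrow(ii)$ and $(i)\Rightarrow(iii)$, I would begin from the (3D) definition of a conditioned invariant subspace: there exist three output injection maps $D_1,D_2,D_3:\fld{R}^q\rightarrow\fld{R}^n$ such that $\ssp{W}$ is invariant with respect to $A_1+D_1C$, $A_2+D_2C$ and $A_3+D_3C$, i.e. $\ssp{W}$ is $[A+DC]_{1,2,3}$-invariant. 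I then form the Inf-D operator $\op{A}_d$ carrying $A_1+D_1C$, $A_2+D_2C$, $A_3+D_3C$ as its diagonal, first and second super-diagonal blocks, exactly patterned on $\op{A}$ in \eqref{Eq:IDRep3D}. Repeating the estimate of Lemma~\ref{Lem:Boundedness} with $G=3\max(|A_1+D_1C|,|A_2+D_2C|,|A_3+D_3C|)$ shows $\op{A}_d$ is bounded; the only change from the 2D case is the constant $3$ in place of $2$, since each row of $\op{A}_d$ now couples three consecutive blocks. Boundedness lets one transport the finite-dimensional 1D result to the Inf-D system, so Corollary~\ref{Col:AID_A123} yields that $\ssp{W}_\infty$ is $\op{A}_d$-invariant, and inspecting the block pattern of $\op{A}$ and $\op{C}$ gives simultaneously the Inf-D characterization $\op{A}(\ssp{W}_\infty\cap\ker\op{C})\subseteq\ssp{W}_\infty$ (statement (iii)) and its Fin-D reading $A_1(\ssp{W}\cap\ker C)+A_2(\ssp{W}\cap\ker C)+A_3(\ssp{W}\cap\ker C)\subseteq\ssp{W}$ (statement (ii)).

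For the reverse direction $(iii)\Rightarrow(i)$, since $\op{A}$ is bounded its domain is all of $\op{X}=\sum\fld{R}^n$, so the standard 1D equivalence between $\op{A}(\ssp{W}_\infty\cap\ker\op{C})\subseteq\ssp{W}_\infty$ and the existence of a bounded output injection $\op{D}$ with $(\op{A}+\op{D}\op{C})\ssp{W}_\infty\subseteq\ssp{W}_\infty$ applies directly. The decisive step is to argue that, because $\ssp{W}_\infty=\sum\ssp{W}$ and $\ker\op{C}=\sum\ker C$ are block-Toeplitz, one may take $\op{D}$ itself in the banded Toeplitz form whose diagonal, first and second super-diagonal blocks are the constant maps $D_1,D_2,D_3$ matching the shift pattern of $\op{A}$. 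Reading off these three finite-dimensional blocks and applying Corollary~\ref{Col:AID_A123} then shows $\ssp{W}$ is $[A+DC]_{1,2,3}$-invariant, i.e. $\ssp{W}_\infty$ is conditioned invariant.

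I expect the main obstacle to be precisely this banding step in $(iii)\Rightarrow(i)$: exhibiting an admissible $\op{D}$ that respects the shift structure of $\op{A}$, so that genuinely block-constant finite-dimensional maps $D_1,D_2,D_3$ can be extracted rather than an arbitrary bounded operator. In the 2D proof of Lemma~\ref{Lem:CA_Inv} this is dispatched with the phrase ``one solution for $D$ is given by'' the banded operator; for 3D one must simply verify that the same construction still produces a valid, bounded, block-constant output injection when a second super-diagonal is present. Everything else---the boundedness of $\op{A}_d$ and the equivalence between the Fin-D and Inf-D invariance conditions---follows routinely from Corollary~\ref{Col:AID_A123} and the Lemma~\ref{Lem:Boundedness} estimate, the only numerical change being the coupling constant $3$.
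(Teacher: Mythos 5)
Your proposal is correct and is essentially the paper's own argument: the paper gives no separate proof for this corollary, stating only that it follows "along the same lines as in Theorem \ref{Thm:AID_A12Inv} and Lemma \ref{Lem:CA_Inv}," and your three-block adaptation (boundedness with constant $3$, the banded Toeplitz output injection $\op{D}$, and Corollary \ref{Col:AID_A123} in place of Theorem \ref{Thm:AID_A12Inv}) is exactly that adaptation. The banding step you flag as the delicate point is indeed the one the paper itself dispatches with "one solution for $D$ is given by" in the 2D case, so your treatment matches the paper's level of rigor, including the fact that the implication chain closing the loop through statement (ii) is left as implicit in both.
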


By following along the same steps as above, the invariant subspace and the results provided in Sections \ref{Sec:InvSpace} and \ref{Sec:FDI} can be extended to  nD systems. Therefore, one can generalize the results of this paper to nD systems without major barriers and challenges.

\bibliographystyle{IEEEtr}
\bibliography{2DJournalRef}

\end{document}